\newcounter{protocol}
\newenvironment{protocol}[1]
  {\par\addvspace{\topsep}
   \noindent
   \tabularx{\linewidth}{@{} X @{}}
    \hline
    \vspace{-2mm}
    \refstepcounter{protocol}\textbf{Protocol \theprotocol} #1 \\
    \hline}
  {\\
  \hline
   \endtabularx
   \par\addvspace{\topsep}}
\newcommand{\negl}{\textsf{negl}}
\newcommand{\N}{\mathbb{N}}
\newcommand{\R}{\mathfrak{R}}
\newcommand{\V}{Ver}
\newcommand{\Vt}{\mathcal{V}}
\newcommand{\id}{\text{Id}}
\newtheorem{Proposition}[proposition]{Proposition}
\newcommand{\tl}{\tilde l}
\newcommand{\F}{\mathbb{F}}
\newcommand{\tP}{\tilde P}
\title{On the Relativistic Zero Knowledge Quantum Proofs of Knowledge}
\institute{}
\author{Kaiyan Shi \inst{1,2}, Kaushik Chakraborty\inst{1}, Wen Yu Kon\inst{1}, Omar Amer\inst{1}, Marco Pistoia\inst{1}, Charles Lim\inst{1}}
\institute{Global Technology Applied Research, JPMorganChase \\
\and 
Department of Computer Science, University of Maryland}
\begin{document}

\maketitle

\begin{abstract}
We initiate the study of relativistic zero-knowledge quantum proof of knowledge systems with classical communication, formally defining a number of useful concepts and constructing appropriate knowledge extractors for all the existing protocols in the relativistic setting which satisfy a weaker variant of the special soundness property due to Unruh (EUROCRYPT 2012). We show that there exists quantum proofs of knowledge with knowledge error $\frac{1}{2} + \textsf{negl}(\eta)$ for all relations in NP via a construction of such a system for the Hamiltonian cycle relation using a general relativistic commitment scheme exhibiting the fairly-binding property due to Fehr and Fillinger (EUROCRYPT 2016). We further show that one can construct quantum proof of knowledge extractors for proof systems which do not exhibit special soundness, and therefore require an extractor to rewind multiple times. We develop a new multi-prover quantum rewinding technique by combining ideas from monogamy of entanglement and gentle measurement lemmas that can break the quantum rewinding barrier. Finally, we prove a new bound on the impact of consecutive measurements and use it to significantly improve the soundness bound of some existing relativistic zero knowledge proof systems, such as the one due to Chailloux and Leverrier (EUROCRYPT 2017). 
\end{abstract}

\section{Introduction}
\label{sec:Introduction}
Multi-party interactive protocols involve multiple players attempting to jointly complete a cryptographic task without trusting each other. It is known that without introducing additional assumptions, some multi-party functionalities cannot be securely realized in the information theoretic setting, in which the adversary has unlimited classical or quantum computational power. 
One such assumption that can be adopted dates back to~\cite{ben2019multi}, where the communication between some players is prohibited to ensure information-theoretic security of the protocol.
However, enforcing the non-communication between players can be challenging in realistic scenarios.

One method of enforcing this assumption is to exploit the fact that information cannot be transmitted faster than the speed of light.
This can be implemented in practice by ensuring that the parties are far enough such that no information can be exchanged between them before the protocol ends, i.e. having the parties be distance $d>ct$ apart for protocol run time $t$, where $c$ is the speed of light (more formally that the parties remain spacelike separated during the protocol).
We call protocols that make use of this property relativistic multi-party interactive protocols, where the relativistic setting was pioneered by Kent in 1999~\cite{kent2005secure} in an effort to bypass the no-go theorems for quantum bit-commitment~\cite{mayers1997unconditionally,lo1997quantum}. Later, Crepeau et al.~\cite{crepeau2011two} revisited the original idea of Ref.~\cite{ben2019multi} and developed a relativistic bit commitment protocol with both information-theoretically secure hiding and binding properties. In such a relativistic bit commitment protocol, one player commits the bit to an agent, and the other player reveals the bit when required. However, for these relativistic protocols, the distance between the parties depends highly on the duration of the protocol. Lunghi et al.~\cite{lunghi2015practical} devised a multi-round variant of the bit commitment protocol in Ref.~\cite{crepeau2011two} to overcome this limitation, which allows the duration of the bit commitment scheme to be extended by increasing the number of rounds.
While this protocol was shown to be secure against classical adversaries~\cite{chakraborty2015arbitrarily,fehr2016composition,lunghi2015practical}, its security against quantum adversaries remains an open problem. 
Later, in Ref.~\cite{Verbanis_2016}, the authors implemented a 24-hour-long bit commitment by keeping the players 8km apart. 

Zero-knowledge protocols (ZKPs) form an important class of multi-party interactive protocols, where the goal is to have provers prove a statement to verifiers without providing the verifiers with additional information.
Recently, Chailloux and Leverrier~\cite{chailloux2017relativistic} designed a perfect zero-knowledge protocols in the relativistic setting by adopting Blum's ZKP~\cite{blum1986prove} and combining it with the relativistic bit commitment protocol from Ref.~\cite{crepeau2011two}. The construction in Ref.~\cite{chailloux2017relativistic} was for the Hamiltonian cycle, with two distant provers responding to challenges by the verifiers. However, the protocol is limited by its high communication costs. Later, Crepeau et al.~\cite{crepeau2019practical} proposed a relativistic ZKP protocol for graph 3-coloring, using a completely different approach, with a very low communication and computation overhead. They refer to this approach as the \emph{unveil-via-commit principle} which is inspired from the double-spending detection mechanism of the untraceable electronic cash of Chaum, Fiat and Naor~\cite{chaum1990untraceable}. 
The practicality of this protocol was demonstrated in a follow-up experiment with a real-time implementation and the provers being $50$ meters apart~\cite{alikhani2021experimental}.
This can be useful in applications such as ATM-card authentication, as described by Brassard~\cite{brassard2021relativity}.
However, to claim security against quantum adversaries, the authors~\cite{crepeau2019practical} required three provers instead of two. Recently, Crepeau and Stuart~\cite{crepeau2023zero} again used the homomorphic property of the $\F_Q$-bit commitment scheme and proposed an efficient construction for the subset-sum and SAT problem. Although this protocol's communication and computation overhead is slightly larger than that for graph 3-coloring~\cite{crepeau2019practical}, this recent construction requires only two provers to be secure against quantum adversaries.

The security analysis of existing relativistic ZKPs are limited to proving the soundness and zero-knowledge properties.
However, the proof of soundness against quantum adversaries does not necessarily imply a quantum-proof of knowledge, with some known counterexamples \cite{ambainis2014quantum}.  
To the best of our knowledge, there has been no study of the quantum proofs of knowledge (QPoK) of these relativistic zero-knowledge protocols. In a nutshell, this paper addresses the following question.

\begin{equation*}
\textit{Are the existing relativistic zero-knowledge protocols QPoK?}
\end{equation*}

\medskip \noindent \textit{Remark:} We note that there are some subtleties one must navigate in attempting to answer this question using tools from the single prover setting. For example, while it is natural to attempt to use strong tools such as the rewinding lemmas and state repair procedures of \cite{chiesa2022post} to deal with the rewinding of the quantum provers, it is not straightforwad to do so in our setting. Indeed, in the relativistic setting, where unconditional commitment schemes are possible, using such tools as currently presented would require limiting the computational power of the provers, whereas our analysis is unconditional. Similarly, while some of the extractors we build follow closely the blueprint for extracting from $\Sigma$-protocols with both strict and special soundness given by Unruh \cite{unruh2012quantum}, the structure of $\Sigma$ protocols differs in the relativistic setting and requires dedicated modeling. Addressing this difference allows us to relax some of the requirements (namely, strict soundness) necessary for Unruh's original work. Intuitively, the fact that our spacelike separated provers lack information regarding the challenges sent to the other prover restricts the ability of the prover to devise multiple accepting answers, even though multiple accepting answers exist.

\medskip \noindent \textbf{Organization of the Paper.} In \Cref{sec:our-results} we give an informal overview of the results of this paper. In \Cref{sec:prelims} we recall a collection of useful definitions and results from the literature, which we use throughout the paper to prove our results. In \Cref{sec:QPoKs-Defs} we introduce useful definitions of interaction in the relativistic setting. We then use them to show in \Cref{sec:QPoKswithSpec} that a class of specially-sound $\Sigma$-protocols from the literature are Quantum Proofs of Knowledge. In \Cref{sec:QPoKs-NP} we generalize to show that one can construct QPoKs for all of NP using a relativistic bit commitment scheme with a binding property due to Fehr and Fillinger \cite{fehr2016composition}. In \Cref{sec:QPOKs-Symm} we leverage tools from the literature on symmetric games, monogamy of entanglement, and gentle measurements in order to show that proof systems with symmetric challenge spaces can be QPoKs, even if they lack the special soundness property and therfore require multiple rewindings to extract a prover. Finally, in \Cref{sec:improv-sound} we give a new lemma regarding the outcome of two consecutive measurements, and use it to prove improved lower bounds on the soundness error of some relativistic zero knowledge proof systems from the literature. 

\subsection{Our Results}
\label{sec:our-results}
\subsubsection{Rewinding Oracles in the Relativistic Setting:}

For the analysis of quantum proofs of knowledge, we first define interactive machines in a relativistic setting. Though these interactive machines are similar to multi-prover interactive proof systems, to the best of our knowledge, a formal description of such machines in a relativistic setting remains elusive. 

Taking advantage of the existing results on rewinding oracles in the non-relativistic setting, we further define rewinding oracles in the relativistic setting.
The only difference between the rewinding oracles in relativistic and non-relativistic settings is that, due to their spacelike separation, the provers can only apply quantum operations in tensor product form on a quantum system that is shared between the provers in the relativistic setting. 

\subsubsection{Relativistic $\Sigma$-Protocol:} 
Similar to $\Sigma$-protocols in the non-relativistic setting~\cite{unruh2012quantum}, we also define $\Sigma$-protocols in the relativistic setting.

\begin{definition} [Relativistic $\Sigma$-protocol] \label{def:rel-sigma-protocol}
    A $2$-prover interactive proof system $\Pi = (P_1,P_2,V)$ where the provers $P_1,P_2$ are separated by distance $dist$, is called a relativistic $\Sigma$-protocol if and only if the interactions consist of four messages $(rand, com) , (ch, resp)$ sent sequentially by $V,P_1,V,P_2$, where $rand$ and $ch$ are chosen uniformly at random by the verifier from the field $\F_{Q(\eta x)}$ and a challenge space $C_{\eta x}$ respectively where each depends only on the security parameter $\eta$ and input instance $x$, and from which we can efficiently sample. We allow the two provers hold some shared randomness and note that the Commit and Unveil phases occur simultaneously. The protocol follows as below. ~\footnote{In general practice, we have two verifiers $V_1$ and $V_2$, each positioned to corresponding provers.}
    
   \medskip\noindent \textsf{Commit Phase.}
    \begin{enumerate}
    \vspace{-1mm}
        \item $V$ sends $rand$ to $P_1$ at time $T_1$.
        \item $P_1$ replies with the commitment $com$.
        \item $V$ receives the commitment $com$ from $P_1$ at time $T_2$.
    \end{enumerate}
    \vspace{-2mm}
    \medskip\noindent \textsf{Unveil Phase.}
    \begin{enumerate}
        \vspace{-1mm}
        \item $V$ sends the challenge $ch$ to $P_2$ at $T_1$.
        \item $P_2$ replies with $resp$.
        \item $V$ receives the response $resp$ at time $T_3$.
    \end{enumerate}
    \vspace{-2mm}
    \medskip\noindent \textsf{Verification Phase.}
    \begin{enumerate}
    \item $V$ aborts the protocol if $(T_3 - T_1) \geq \frac{dist}{c}$ or $(T_2 - T_1) \geq \frac{dist}{c}$, where $c$ is the speed of light.
    \item $V$ conducts the verification via some deterministic efficient computation based on $rand, ch, com$ and $resp.$
    \end{enumerate}
\end{definition}

We note that there is a clear equivalence between relativistic $\Sigma$-protocols where the verifier does not abort (i.e., where the provers respond in a relativisticly constraining time period), and protocols in which the provers are assumed not to communicate within rounds. As such, we will often treat the two cases equivalently for simplicity.

We re-define a probabilistic variant of the special soundness property in the relativistic setting. 

\begin{definition} [$\delta_{SS}$-Special soundness~\cite{unruh2012quantum}] \label{defn:dss-special-soundness}
A relativistic $\Sigma$-protocol $(P_1,P_2,V)$ for  relation $R$ is said to have $\delta_{SS}$-special soundness if and only if there exists a deterministic polynomial-time algorithm $K_0$ (the special extractor) such that the following holds: For any two accepting instances $(rand, com, ch, resp)$ and $(rand, com, ch', resp')$ for $x$ such that $ch \neq ch'$ and $ch,ch' \in C_{\eta x}$, we have that $$w:=K_0(x,rand,com,ch,resp,ch',resp'),$$ satisfies $(x,w) \in R$ with probability at least $1-\delta_{SS}$. Formally speaking, for all $com$, $rand$,
\begin{align*}
    &\Pr_{ch,resp,ch',resp'}[(x,w) \in R, w \leftarrow K_0(x,rand,com,ch,resp,ch',resp')|rand,com,acc] \\
    &\geq (1- \delta_{SS}),
\end{align*}
where $acc$ represents the event that two conversations are acceptable. 
\end{definition}

Above and elsewhere in the text, $\delta_{SS}$ will refer to the special soundness parameter and bears no relation to the commonly used Kronecker delta function.

\subsubsection{Quantum Proofs of Knowledge for Relativistic $\Sigma$-Protocols with $\delta_{SS}$-Special Soundness}
We show the viability of relativistic $\Sigma$-protocols which use the $\F_Q$-bit commitment scheme as a subroutine and exhibit the special soundness property as quantum proofs of knowledge. Interestingly, the protocols proposed in~\cite{chailloux2017relativistic}, and~\cite{crepeau2023zero} fall into this category, which we prove in \Cref{app:ex_sigma}. We arrive at the below theorem using our model for interactive provers and results on consecutive measurements from \cite{chailloux2017relativistic}.

\begin{theorem}[Informal]
\label{thm:rzkpFQ}
If a relativistic $\Sigma$-protocol has the $\delta_{SS}$-special soundness property then, it has a quantum proof of knowledge with knowledge error $\frac{1}{2} + \negl(\eta)$.
\end{theorem}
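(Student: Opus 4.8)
The plan is to build a quantum knowledge extractor in the spirit of Unruh's extractor for $\Sigma$-protocols with special soundness, but adapted to the relativistic two-prover setting, where the key leverage is that the spacelike separation prevents $P_1$ and $P_2$ from jointly conditioning their operations on both $rand$ and $ch$. Concretely, I would model a malicious prover pair as a shared entangled state together with local unitaries $U_1$ (acting on $P_1$'s register, depending on $rand$) and $U_2$ (acting on $P_2$'s register, depending on $ch$), followed by measurements producing $com$ and $resp$ respectively. The extractor runs the prover honestly once to obtain $(rand, com, ch, resp)$; if this transcript is accepting, it then attempts to ``rewind'' $P_2$ only — reusing the same post-commitment state — feeding a freshly sampled challenge $ch' \neq ch$ to obtain $(ch', resp')$, and finally invokes the special extractor $K_0$ on $(x, rand, com, ch, resp, ch', resp')$. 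Because $com$ is produced before $ch$ is seen and $P_1$ cannot signal $ch$ to $P_2$, the post-commitment state is independent of the challenge, so the second run is a genuine rewind of a fixed quantum state under a different classical input — exactly the situation Unruh's analysis handles.

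The core quantitative step is to lower-bound the probability that both the first run accepts and the rewound second run accepts with $ch' \neq ch$. Here I would use the results on consecutive measurements from \cite{chailloux2017relativistic} cited in the excerpt: measuring the prover's response register to check acceptance for challenge $ch$, and then (after the challenge-dependent unitary for $ch'$) measuring acceptance for $ch'$, are two projective-type measurements on the same state, and the relevant lemma bounds how much the first measurement can disturb the success probability of the second. If $p$ denotes the prover's overall acceptance probability, a standard Jordan's-lemma / consecutive-measurement argument gives that a single rewind succeeds (two distinct accepting challenges obtained) with probability roughly $p(p - 1/|C_{\eta x}|)$ up to the disturbance term, which after accounting for the $\delta_{SS}$ failure probability of $K_0$ yields an extractor succeeding with probability polynomially related to $p - (\tfrac{1}{2} + \negl(\eta))$. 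This is precisely where the knowledge error $\tfrac12 + \negl(\eta)$ enters: the ``$\tfrac12$'' is the collision-type barrier coming from the need to hit two distinct challenges in a binary-challenge protocol (together with the binding slack of the $\F_Q$-commitment), and the $\negl(\eta)$ absorbs both the commitment's binding error and $\delta_{SS}$.

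I would organize the write-up as: (i) fix the prover model and define the extractor $K$ as run-once-then-rewind-$P_2$; (ii) argue, using spacelike separation, that the state after the Commit phase is challenge-independent, so rewinding $P_2$ is well-defined and does not require any computational assumption on the provers; (iii) apply the consecutive-measurement bound of \cite{chailloux2017relativistic} to estimate $\Pr[\text{both runs accept}, ch\neq ch']$ in terms of the prover's success probability $p$; (iv) combine with $\delta_{SS}$-special soundness (\Cref{defn:dss-special-soundness}) and the binding property of the underlying $\F_Q$-commitment to conclude that whenever $p > \tfrac12 + \negl(\eta)$ the extractor outputs a witness with non-negligible probability, and finally amplify by repetition to drive the extraction probability to $1 - \negl(\eta)$ while keeping expected running time polynomial. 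The main obstacle I anticipate is step (iii): controlling the disturbance caused by the first acceptance measurement on the rewound run without a collapse or computational-binding shortcut, and showing the resulting bound degrades gracefully to $\tfrac12 + \negl(\eta)$ rather than to some weaker constant; this is exactly why the new consecutive-measurement estimates feeding into the informal \Cref{thm:rzkpFQ} are needed, and getting the constants right there is the delicate part.
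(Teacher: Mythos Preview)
Your proposal is essentially the paper's approach: the canonical extractor runs $P_1^*$ once to fix $(rand,com)$, then runs $P_2^*$, rewinds only $P_2^*$, reruns with a fresh $ch'\neq ch$, and feeds the two transcripts to the special extractor $K_0$; the probability of obtaining two accepting transcripts is lower-bounded via the consecutive-measurement result of \cite{chailloux2017relativistic}, and Jensen's inequality over $(rand,com)$ finishes the argument.

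Two small corrections worth making. First, the paper uses the \emph{existing} Chailloux--Leverrier bound (\Cref{thm:cha_thm1}) here, which gives a cubic lower bound $\frac{1}{64|S|}(F_1-\tfrac{1}{c})^3$, not the new two-measurement lemma; the new lemma is used only later for improved \emph{soundness} bounds, not for the QPoK extractor. So your estimate $p(p-1/|C_{\eta x}|)$ and the worry about ``getting the constants right'' with the new estimate are misplaced---the old cubic bound already suffices, and the exponent $d=3$ simply appears in the QPoK definition. Second, no amplification by repetition is needed or used: the QPoK definition only asks that the extraction probability exceed $\frac{1}{p(\eta)}(\Pr[\text{accept}]-\kappa)^d$, which the single-rewind analysis delivers directly.
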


We generalize the results to obtain special soundness for a $\Sigma$-protocol for Hamiltonian cycle using any relativistic bit commitment scheme which exhibits the fair-binding property as define in \Cref{def:fairly-binding}. Using a commitment scheme with this fairly-binding property, we can construct a proof system for Hamiltonian cycle which we can show still satisfies special soundness, from which we obtain the following corollary.

\begin{corollary} [Relativistic QPoK for all languages in NP (Informal)]
    Let $R$ be an NP-relation. Assuming the existence of $\epsilon_{FB}$-fair-binding bit commitment scheme with $\epsilon_{FB} < \frac{1}{\textsf{poly}(\eta)}$, there is a zero-knowledge relativistic QPoK for $R$ with knowledge error $\frac{1}{2} + \negl(\eta)$.
\end{corollary}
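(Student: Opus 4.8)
The plan is to reduce $R$ to a single NP-complete relation, build an explicit relativistic $\Sigma$-protocol for it out of the given commitment scheme, check that it meets the hypothesis of \Cref{thm:rzkpFQ}, and argue zero knowledge separately. Since $R$ is an NP-relation, I would first fix a polynomial-time Karp reduction $f$ from the language of $R$ to the Hamiltonian cycle relation $R_{HC}$, together with the associated efficient maps sending a witness for $x$ to a Hamiltonian cycle of the graph $f(x)$ and, conversely, any Hamiltonian cycle of $f(x)$ back to a witness for $x$. Both zero knowledge and the quantum proof-of-knowledge property transfer through such a reduction with the same knowledge error — compose the simulator (resp.\ the knowledge extractor) for the $R_{HC}$-protocol on $f(x)$ with $f$ (resp.\ with the inverse witness map) — so it suffices to construct a zero-knowledge relativistic QPoK for $R_{HC}$ with knowledge error $\frac{1}{2}+\negl(\eta)$.

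I would then instantiate Blum's Hamiltonian-cycle protocol in the two-prover relativistic model, following the blueprint of Chailloux and Leverrier~\cite{chailloux2017relativistic} but committing to each bit with the $\epsilon_{FB}$-fair-binding scheme in place of the $\F_Q$-bit commitment. On input a graph $G$ on $n$ vertices: the verifier's message $rand$ provides the randomness needed by the commitment scheme; $P_1$, using the shared randomness, picks a permutation $\pi$, forms the adjacency matrix of $\pi(G)$, and returns as $com$ a commitment to each of its $n^2$ entries; the challenge space is $C_{\eta x}=\{0,1\}$; on $ch=0$, $P_2$ reveals $\pi$ and opens all $n^2$ commitments (the verifier checks they open to the adjacency matrix of $\pi(G)$), and on $ch=1$, $P_2$ opens only the $n$ entries along a Hamiltonian cycle of $\pi(G)$ (the verifier checks they open to $1$ and form an $n$-cycle). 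This is a four-message interaction of the form $(rand,com),(ch,resp)$ and hence a relativistic $\Sigma$-protocol in the sense of \Cref{def:rel-sigma-protocol}; honest provers are efficient given the cycle, and when $V$ does not abort the verifier accepts with probability $1$.

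The core of the argument is verifying $\delta_{SS}$-special soundness in the sense of \Cref{defn:dss-special-soundness}. Take $K_0$ to be the Blum extractor: from a $ch=0$ accepting transcript it reads off $\pi$ and the claimed openings $M$; from a $ch=1$ accepting transcript sharing the same $com$ it reads off an $n$-cycle of positions claimed to open to $1$; if the two sets of claimed openings agree wherever they overlap, then applying $\pi^{-1}$ to this cycle yields a Hamiltonian cycle of $G$, and the inverse witness map outputs $w$ with $(x,w)\in R$. The extractor fails only if the two transcripts disagree on the opening of some commitment in $com$, i.e.\ if the joint prover strategy equivocates on one of the $n^2$ committed bits; by \Cref{def:fairly-binding} the probability of equivocation on a fixed commitment — over $rand$, the challenge, and the verifier's coins, conditioned on both conversations being accepting — is at most $\epsilon_{FB}$, so a union bound gives $\delta_{SS}\le n^2\epsilon_{FB}$, which stays below the inverse-polynomial threshold needed by \Cref{thm:rzkpFQ} once the commitment parameters are set appropriately. \Cref{thm:rzkpFQ} then yields a quantum proof of knowledge with knowledge error $\frac{1}{2}+\negl(\eta)$. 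This step is also the main obstacle: fair-binding is a statement about one commitment against one adversary in one game, whereas here equivocation must be bounded simultaneously across all $n^2$ commitments, which are produced by $P_1$ and opened by $P_2$ — two spacelike-separated provers whose behavior couples through the shared state and through $rand$ and $ch$ — so the delicate work is to show that conditioning on ``both conversations accept'' does not inflate the per-commitment equivocation probability, and that the tensor-product structure forced by the relativistic separation is compatible with the fair-binding experiment.

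Finally, zero knowledge follows from the (statistical, or perfect) hiding of the relativistic commitment via the usual simulator for Blum's protocol, adapted to the two-prover setting: the simulator guesses $ch'\in\{0,1\}$, commits honestly to a random permutation of $G$ if $ch'=0$ and to a fixed $n$-cycle if $ch'=1$, runs the possibly malicious quantum verifier, completes the unveil when the issued challenge equals $ch'$, and rewinds otherwise. Because the commitments are hiding, the first message — and hence the verifier's challenge — is (statistically) independent of the guess, so each attempt succeeds with probability $\approx\frac{1}{2}$ and a constant expected number of attempts suffices (Watrous-style quantum rewinding is available if needed, but the binary, guess-independent challenge makes it essentially immediate), and indistinguishability of the simulated and real transcripts reduces directly to hiding. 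Chaining the four steps establishes the corollary.
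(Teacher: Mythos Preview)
Your overall architecture matches the paper's: reduce $R$ to Hamiltonian cycle, instantiate Blum's protocol relativistically with the fair-binding commitment, establish $\delta_{SS}$-special soundness, and invoke \Cref{thm:rzkpFQ}. The zero-knowledge argument is also essentially the paper's.

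The gap is in your special-soundness step. You claim that ``the probability of equivocation on a fixed commitment \ldots\ conditioned on both conversations being accepting is at most $\epsilon_{FB}$,'' and then union-bound to get $\delta_{SS}\le n^2\epsilon_{FB}$. But \Cref{def:fairly-binding} bounds the \emph{unconditional} equivocation probability; once you condition on the event $acc$ that both transcripts accept (as \Cref{defn:dss-special-soundness} requires), the bound becomes $\epsilon_{FB}/p_{acc}$, not $\epsilon_{FB}$. You flag exactly this as ``the main obstacle'' and propose to show that conditioning does not inflate the bound --- but it does, and no argument of that form will go through.

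The paper's resolution is different and is the missing idea: it accepts the $1/p_{acc}$ blow-up and then lower-bounds $p_{acc}$ using the consecutive-measurement inequality (\Cref{thm:cha_thm1}), obtaining $p_{acc}\ge \frac{1}{64|\textsf{RA}_{\max}|}\bigl(\Pr[V\text{ accepts}]-\tfrac{1}{2}\bigr)^3$. Under the QPoK hypothesis $\Pr[V\text{ accepts}]>\tfrac{1}{2}+\xi$ for a non-negligible gap $\xi$, this makes $1/p_{acc}$ only polynomially large, so $\delta_{SS}\le \epsilon_{FB}/p_{acc}$ remains negligible and \Cref{thm:rzkpFQ} applies with knowledge error $\tfrac{1}{2}+\negl(\eta)$. (Incidentally, the union bound over all $n^2$ entries is not needed: if $K_0$ fails then $C'$ is not a cycle of $\Pi(G)$, hence at least one edge on $C'$ must be equivocated, and a single application of fair-binding suffices.)
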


\subsubsection{Quantum Proofs of Knowledge for Symmetric Relativistic Zero-Knowledge Protocols:}
We also consider relativistic zero-knowledge protocol without the special soundness properties. Due to the absence of the special soundness property, the knowledge extractor may need to perform rewinding an arbitrary number of times. If we analyze the knowledge error for such extractors using consecutive measurement theorems like our Theorem \ref{thm:consec_inf}, or the result obtained by Don et al.~\cite{don2019security}, it will lead to a trivial bound. In \cite{chiesa2022post}, Chiesa et al. addresses the same issue in the non-relativistic setting by introducing a state repair procedure. However, it is unclear whether we can apply their technique directly in the relativistic setting due to the unbounded computation power of the provers.

The analysis of such protocols can be approached differently. We note that the most general action that malicious provers can perform in two-prover relativistic ZKP involve sharing an entangled state and measuring each subsystem separately. This is in general more powerful than classical attacks without shared entanglement, resulting in a substantial gap in winning probabilities between classical and entangled provers. The famous CHSH game~\cite{bell1964einstein} is one such example where the gap is significant. Entangled strategies that have significantly higher winning probability than the optimal classical strategy typically requires the sharing of a maximally entangled state. Since quantum entanglement is monogamous, we may introduce a third prover to corroborate other provers' responses as a means to weaken entangled strategies. According to the monogamy of entanglement property, for any tripartite quantum state shared between three provers, having two maximally entangled provers would mean that neither can be entangled with the third prover \cite{terhal2004entanglement,tomamichel2013one}. As such, the third party would fail to corroborate the responses of the first two parties. Intuitively, one can reduce the gap between the classical and quantum winning probabilities by expanding the number of provers. This provides a different method to develop information-theoretic secure multi-party interactive ZKPs without any computational hardness assumption. The authors in \cite{alikhani2021experimental} exploited this property to develop a three-prover relativistic ZKP for graph $3$-coloring that is sound against entangled provers. 

This analysis can be extended, as shown in \cite{kempe2011entangled}, to symmetric games. The authors constructed an efficient classical strategy for a two-prover symmetric game (of which we can use to prove soundness) from the three-party entangled game strategy with a similar winning probability due to the monogamy of entanglement property and gentle measurement lemma. In their classical strategy construction, the provers perform consecutive measurements on the shared tripartite entangled states for all possible questions and record the answers before the game begins. These recorded answers are then used to win the two-party symmetric non-local game without using any shared entangled state.

We borrow the result from \cite{kempe2011entangled} and develop a knowledge extractor for the relativistic graph $3$-coloring protocol. Our extractor queries only two of the three provers and repeatedly rewinds until all possible questions are asked. Later, it uses these answers to construct a witness. Using a similar line of argument to that presented in \cite{kempe2011entangled}, we prove the following theorem for the knowledge error of the quantum proofs of knowledge.

\begin{theorem}[Informal]
\label{thm:rzkp3col}
If the 2-prover relativistic zero-knowledge proof proposed in \cite{crepeau2019practical} for graph ($G=(\mathcal{V},H)$) 3-coloring has a classical proof of knowledge with knowledge error $\kappa_c$, then there exist a 3-prover relativistic zero knowledge proof that has a quantum proof of knowledge with knowledge error $\kappa_q$, such that
\begin{equation}
\kappa_q = \kappa_c + \delta, 
\end{equation}
where $\delta \geq \frac{16}{2401|H|^4}$.
\end{theorem}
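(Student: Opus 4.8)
The plan is to build a quantum knowledge extractor for the three-prover protocol by reducing the task of extracting from entangled provers to the known classical extractor for the two-prover protocol, following the strategy-conversion of Kempe et al.\ \cite{kempe2011entangled}. Concretely, suppose a malicious triple of provers $(P_1^*,P_2^*,P_3^*)$ sharing a tripartite state $\state$ succeeds in the three-prover protocol with probability $p_q > \kappa_q = \kappa_c + \delta$. I would first argue, via the monogamy-of-entanglement bound for symmetric games, that the existence of such a high-success entangled strategy implies a comparable-success \emph{classical} two-prover strategy. The extractor constructed in \cite{kempe2011entangled} does this by having each (virtual) prover, before any challenge is sent, perform on its share of $\state$ the sequence of measurements corresponding to \emph{all} possible edges $e \in H$ in some fixed order, recording every outcome; the gentle measurement lemma guarantees that, because each individual measurement barely disturbs the state (the per-question winning advantage over random is small, of order $1/|H|$ after accounting for the third prover's corroboration), the whole list of recorded answers is jointly consistent with the original strategy up to an error that accumulates only polynomially in $|H|$. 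This yields a classical two-prover strategy succeeding with probability at least $p_q - \delta$, where $\delta$ absorbs the total gentle-measurement disturbance; tracking the constants in the gentle measurement / monogamy estimates for the 3-coloring predicate gives the stated $\delta \ge \frac{16}{2401 |H|^4}$.

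Next I would invoke the hypothesis that the two-prover protocol has a classical proof of knowledge with knowledge error $\kappa_c$: since the converted classical strategy succeeds with probability $p_q - \delta > \kappa_c$, the classical extractor applied to it outputs a witness $w$ with $(x,w)\in R$ (in expected polynomial time, with the usual $1/(p - \kappa_c)$ dependence). The quantum extractor for the three-prover protocol is then exactly: prepare the provers' state, run the all-questions consecutive-measurement procedure on the two queried provers to obtain the answer tables, then feed those tables to the classical extractor. Because our relativistic extractor is allowed to rewind the (spacelike-separated) provers an unbounded number of times and query only two of the three provers, this repeated-rewinding-until-all-questions-asked step is legitimate in our model, and the tensor-product structure of the provers' operations is respected throughout.

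The main obstacle I expect is the bookkeeping that produces the explicit constant $\delta \ge \frac{16}{2401|H|^4}$: one must carefully (i) set up the symmetric two-prover game whose question set is the edge set $H$ (with $|C_{\eta x}|$ essentially $|H|$, hence the factors of $7$ and $49$ appearing as $2401 = 7^4$), (ii) apply the monogamy bound to relate the three-prover entangled value to twice-or-so the two-prover classical value minus a defect, and (iii) sum the gentle-measurement disturbances over all $|H|$ sequential measurements while the per-step fidelity loss is controlled by the (small) advantage the strategy has on each individual edge. A secondary subtlety is ensuring the conversion and the classical extractor compose without blowing up the running time — i.e.\ that the expected number of rewindings remains polynomial once $p_q$ is bounded away from $\kappa_q$ — and that the zero-knowledge property of the three-prover protocol (inherited from \cite{crepeau2019practical,alikhani2021experimental}) is untouched by adding the extraction argument. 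I would handle the running-time point with the standard Jordan/Mayers--Yao-style expected-time analysis already used for the classical extractor, treating the strategy-conversion as a fixed polynomial-overhead preprocessing step.
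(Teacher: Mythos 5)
Your proposal follows essentially the same route as the paper: construct the extractor by sequentially querying only two of the three provers for all edges, use the Kempe et al.\ monogamy-of-entanglement argument together with the gentle-measurement lemma to show the resulting answer tables form a classical two-prover strategy whose acceptance probability is within $\delta$ of the entangled value, and then hand those tables to the classical PoK extractor. That is precisely Protocol \ref{prot:3_ext} and the proof of \Cref{thm:rzkp3col_form} (via \Cref{lem:dist_q_c} and \Cref{thm:clas_3col_ext}).

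One point of imprecision worth flagging: you treat ``$\delta$'' as a single fixed quantity playing the role both of the gentle-measurement disturbance and of the gap $\kappa_q - \kappa_c$, and you attribute $2401=7^4$ to the size of the challenge space. In the paper these are two distinct quantities. The disturbance bound is $\delta = 16|H|\sqrt{1 - p^q_{acc}}$, which \emph{depends on the prover's success probability}, so one cannot directly write ``classical success $\geq p_q - \delta$'' with a state-independent $\delta$. The paper handles this by first fixing $\kappa_q := 1 - (1/(7|H|))^4$ (an explicit choice, not forced by the challenge-space size), setting $\tilde\delta := \kappa_q - \kappa_c = \frac{1}{3|H|} - (1/(7|H|))^4$, and then showing that under the hypothesis $p^q_{acc} > \kappa_q$ the disturbance satisfies $\delta \leq 16/(49|H|) < \tilde\delta$, which forces $p^c_{acc} > \kappa_c$. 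The separate bound $\delta \geq 16/(2401|H|^4)$ (\Cref{lem:lower_bound}) is a lower bound on the disturbance term used to state the informal theorem, not the constant one obtains by ``summing the gentle-measurement disturbances.'' Your write-up conflates these, and in particular the claim that you can ``track the constants\dots to give $\delta \geq 16/(2401|H|^4)$'' as the conversion loss is pointing in the wrong direction: for the argument you want an \emph{upper} bound on the disturbance, not a lower one. The overall architecture you propose is nevertheless the correct one.
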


\subsubsection{Improved Soundness Bounds from a New Consecutive Measurement Lemma:}

When analyzing the quantum rewinding property of the relativistic $\Sigma$-protocols with $\F_Q$-commitment, we use a $t$-consecutive measurement theorem similar to \cite{don2019security} for $t=2$. 
We study the two-consecutive measurement case in more detail, with two sets of measurement operators $\{W_1^{s_1}\}_{s_1\in S_1}$ and $\{W_2^{s_2}\}_{s_2\in S_2}$, where $W_i^{s_i}$ corresponds to the $i$-th measurement with outcome $s_i$, which can be assumed to be projective without loss of generality.
For this special case, we derive a tighter lower bound than in \cite{don2019security}.  

\begin{theorem}[Informal]
\label{thm:consec_inf}
    Consider two sets of orthogonal projectors $\{W_1^{s_1}\}_{s_1\in S_1}$ and $\{W_2^{s_2}\}_{s_2\in S_2}$ with $W_i=\sum_{s_i=1}^{S_i}W_i^{s_i}$ and $W_i^sW_i^{s'}=\delta_{ss'}W_i^s$. 
    Let $F_1=\frac{1}{2}\left(\Tr[W_1\sigma]+\Tr[W_2\sigma]\right)$, $F_1>\frac{1}{2}$ and 
    \begin{equation*}
        F_2=\frac{1}{2}\sum_{s_1=1,s_2=1}^{\abs{S_1}\abs{S_2}}\{\Tr[W_2^{s_2}W_1^{s_1}\sigma W_1^{s_1}]+\Tr[W_1^{s_1}W_2^{s_2}\sigma W_2^{s_2}]\}.
    \end{equation*}
    Then, we have
    
    \begin{equation*}
        F_2\geq\frac{2}{\max_i \abs{S_i}}\left(F_1-\frac{1}{2}\right)^2.
    \end{equation*}

\end{theorem}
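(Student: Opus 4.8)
The plan is to reduce the two-measurement statement to a single, cleverly-chosen pair of binary (two-outcome) measurements, prove the bound in that binary case by a direct operator/trace inequality, and then lift back. First I would fix the measurement $W_1$ and, for each outcome $s_1$, apply $W_2$; the quantity $F_2$ is the average probability that the ``second'' measurement reproduces what would have been seen had we measured it first, symmetrized over which of the two measurements comes first. The key observation is that $F_1 - \frac12$ measures the average overlap of $\sigma$ with the two projectors beyond the trivial floor, and that the damage done by an intervening measurement is controlled by how many outcomes it can have: intuitively, if $W_1$ has $|S_1|$ outcomes, then post-measurement states are a mixture of at most $|S_1|$ branches, and the ``collision'' probability with any fixed projector is at least (something)$/|S_1|$ by Cauchy--Schwarz, since $\sum_{s_1}\Tr[W_2 W_1^{s_1}\sigma W_1^{s_1}] \ge \frac{1}{|S_1|}\big(\sum_{s_1}\Tr[W_2^{1/2}W_1^{s_1}\sigma W_1^{s_1}W_2^{1/2}]\big)$-type manipulations, using that $\sum_{s_1} W_1^{s_1} = W_1$.

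The concrete steps, in order: (1) Write $F_2$ explicitly as $\frac12(A + B)$ with $A = \sum_{s_1,s_2}\Tr[W_2^{s_2}W_1^{s_1}\sigma W_1^{s_1}]$ and $B$ the symmetric term, and note $A = \sum_{s_1}\Tr[W_2 W_1^{s_1}\sigma W_1^{s_1}]$. (2) Lower bound $\Tr[W_2 W_1^{s_1}\sigma W_1^{s_1}]$ from below in terms of $\Tr[W_1^{s_1}\sigma]$ and $\Tr[W_2 W_1^{s_1}\sigma]$ — i.e. bound the ``leakage'' when inserting the $W_1^{s_1}$ projector around $\sigma$ — then sum over $s_1$, using $\sum_{s_1}W_1^{s_1} = W_1$ and a Cauchy--Schwarz (or power-mean) step that introduces exactly the $\tfrac1{|S_1|}$ factor from the number of terms. (3) Do the same for $B$ with roles of $W_1,W_2$ swapped, producing a $\tfrac1{|S_2|}$ factor, and bound both by $\tfrac1{\max_i|S_i|}$. (4) Combine: after the averaging one should land on an expression of the form $\tfrac{1}{\max_i|S_i|}\cdot 2(F_1 - \tfrac12)^2$ once the cross terms $\Tr[W_1\sigma]$, $\Tr[W_2\sigma]$ are reassembled into $F_1$ and the algebraic identity relating $\tfrac12(\Tr[W_1\sigma]+\Tr[W_2\sigma]) = F_1$ is used; the squaring comes from a step like $\Tr[W_2 W_1 \sigma] \ge \Tr[W_1\sigma] + \Tr[W_2\sigma] - 1$ (an inclusion–exclusion / projector inequality) followed by $(\,\cdot\,)^2$-type bounding or from bounding $\Tr[W_2 W_1 \sigma W_1]\ge \Tr[W_1 W_2 W_1\sigma]$-style quantities below by $(\Tr[\Pi\sigma])^2$ for the appropriate $\Pi$.

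The main obstacle I anticipate is Step (2): controlling $\Tr[W_2 W_1^{s_1}\sigma W_1^{s_1}]$ — a genuinely nonlinear quantity in the post-measurement branch — from below by linear quantities $\Tr[W_j W_1^{s_1}\sigma]$ without losing more than a factor $|S_1|$. The natural tool is the operator inequality $W_1^{s_1}\sigma W_1^{s_1} \succeq (\Tr[W_1^{s_1}\sigma])\,|\psi_{s_1}\rangle\langle\psi_{s_1}| $ is false in general, so instead I expect to need a gentle-measurement-style estimate: $\big|\Tr[W_2^{s_2}W_1^{s_1}\sigma W_1^{s_1}] - \Tr[W_2^{s_2}W_1^{s_1}\sigma]\big|$ is bounded by a term that telescopes favorably after summing over $s_1$, again via Cauchy--Schwarz on the index $s_1$ and the completeness $\sum_{s_1}W_1^{s_1}=W_1$. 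Getting the constant to be exactly $2$ (rather than a worse absolute constant) will require being careful that the symmetrization in $F_2$ is exploited — the factor $\tfrac12(A+B)$ is what converts two ``one-sided'' bounds, each costing a square root somewhere, into a single clean square. I would first prove the statement with $\max_i|S_i|$ replaced by $|S_1|\cdot|S_2|$ and some universal constant, then tighten using the symmetrization, since the looser version already suffices for the application to $F_Q$-commitment $\Sigma$-protocols (where $|S_i|$ is essentially $2$).
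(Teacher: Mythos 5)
The high-level skeleton you sketch is partly aligned with the paper's proof: the paper also (i) purifies $\sigma$ to a vector $\ket{\Omega}$, (ii) collapses the multi-outcome sums via the inequality $\sum_{s}W_i^{s}\rho W_i^{s}\succeq\frac{1}{|S_i|}W_i\rho W_i$ (their Proposition~4, essentially your Cauchy--Schwarz/power-mean step; note this already produces the $1/\max_i|S_i|$ factor directly, so there is no need for the intermediate $1/(|S_1||S_2|)$ version you propose), and (iii) exploits symmetrization over which operator is applied first. So your steps (1), (3), and the reduction within (2) are on the right track.

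The gap is in the core of step (2)/(4). After the reduction, the paper needs the pointwise bound $\Tr[W_2W_1\dyad{\Omega}W_1]\ge(\Tr[W_1\sigma]+\Tr[W_2\sigma]-1)^2\,\Tr[W_1\sigma]$ (and its mirror), which it establishes by a Jordan-lemma-style two-dimensional trigonometric computation: writing $\ket{\Omega}=\cos\alpha_i\ket{\phi_i}+\sin\alpha_i\ket{\phi_i^\perp}$ with $\ket{\phi_i}\propto W_i\ket{\Omega}$, computing $W_2W_1\ket{\Omega}$ in an explicit three-vector basis $\{\ket{\Omega},\ket{A},\ket{B}\}$, and bounding $\|W_2W_1\ket{\Omega}\|^2\ge\cos^2\alpha_1\cos^2(\alpha_1+\alpha_2)$. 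Your proposed substitutes do not recover this. The ``gentle measurement'' route gives bounds of the form $\Tr[W_2W_1\sigma W_1]\gtrsim\Tr[W_2\sigma]-c\sqrt{1-\Tr[W_1\sigma]}$, which becomes vacuous precisely in the interesting regime $F_1$ close to $\tfrac12$ (the square-root term then dominates). The ``inclusion--exclusion then square'' idea, i.e.\ something like $\Tr[W_1W_2W_1\sigma]\ge\big(\Tr[W_1\sigma]+\Tr[W_2\sigma]-1\big)^2$, is false in general: already in two dimensions, with $W_1=\dyad{e_1}$, $W_2=\dyad{v}$ with $\ket{v}=\cos\theta\ket{e_1}+\sin\theta\ket{e_2}$, and $\ket{\psi}$ at angle $\theta/2$, the left side is $\cos^2(\theta/2)\cos^2\theta$ while the right side is $\cos^2\theta$, so the inequality fails for every $\theta\neq 0$. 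The missing factor is exactly the extra $\Tr[W_1\sigma]=\cos^2\alpha_1$ that the paper obtains from the geometric argument, and it is only after the symmetrization $\Tr[W_1\sigma]+\Tr[W_2\sigma]=2F_1\ge 1$ that this factor can be absorbed to yield the clean $(2F_1-1)^2$. Your remark that ``each one-sided bound costs a square root somewhere'' misidentifies the mechanism: the symmetrization contributes a multiplicative $\Tr[W_1\sigma]+\Tr[W_2\sigma]\ge 1$, not a cancellation of square roots. In short, without the explicit two-projector geometry (or some equivalent), step (2) of your plan does not close.
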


Note that, Unruh~\cite{unruh2012quantum} achieved a similar bound, for the cases where $\abs{S_1} = 1$, and $\abs{S_2} = 1$.

Our lower bound from \Cref{thm:consec_inf} can drastically improve the soundness error of the relativistic $\Sigma$-protocols for Hamiltonian Cycle~\cite{chailloux2017relativistic} and Subset Sum~\cite{crepeau2023zero}.

\begin{lemma}[Informal]
The $\F_Q$-commitment based two-prover relativistic zero-knowledge proof protocol for the Hamiltonian cycle that is proposed in~\cite{chailloux2017relativistic} has the following soundness error against the entangled provers. $$\frac{1}{2} + \left(\frac{n!}{2Q}\right)^{\frac{1}{2}},$$
where $n$ is the number of nodes in the underlying graph.
\end{lemma}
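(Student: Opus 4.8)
The plan is to combine the new two-measurement inequality of \Cref{thm:consec_inf} with the $1/Q$-binding property of the $\F_Q$-commitment scheme used in \cite{chailloux2017relativistic}. Fix an instance $x=G$ on $n$ vertices having \emph{no} Hamiltonian cycle, and let $(P_1,P_2)$ be an arbitrary entangled prover pair that never triggers the relativistic abort, so that (by the remark following \Cref{def:rel-sigma-protocol}) $P_1,P_2$ cannot communicate after receiving their inputs. After $V$ sends the commitment randomness $rand=\vec a$ (the field elements used by the $\F_Q$-commitments to the entries of the purported permuted adjacency matrix) and $P_1$ replies with $com=\vec b$, the residual joint state on $P_2$'s register conditioned on $(\vec a,\vec b)$ is some density operator $\sigma=\sigma_{\vec a,\vec b}$. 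Prover $P_2$ then receives the Blum bit $ch\in\{0,1\}$ and, without loss of generality, applies a projective measurement, which we coarse-grain (preserving mutual orthogonality) as follows: for $ch=0$ the accepting outcomes $\{W_1^{s_1}\}$ record only the permutation $\pi\in S_n$ that $P_2$'s response unveils, so $|S_1|\le n!$; for $ch=1$ the accepting outcomes $\{W_2^{s_2}\}$ record only the Hamiltonian cycle of the committed graph that $P_2$'s response unveils, so $|S_2|\le (n-1)!/2< n!$. With $W_i=\sum_{s_i}W_i^{s_i}$, the value $\Tr[W_i\sigma]$ is exactly the probability that $P_2$ gives an accepting $ch=i$ answer, and since $ch$ is uniform the protocol's acceptance probability on $(\vec a,\vec b)$ equals $F_1=F_1(\vec a,\vec b)=\tfrac12\bigl(\Tr[W_1\sigma]+\Tr[W_2\sigma]\bigr)$.

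Next I would bound the quantity $F_2=F_2(\vec a,\vec b)$ attached to $\sigma$ by \Cref{thm:consec_inf}. Its first term, $\sum_{s_1,s_2}\Tr[W_2^{s_2}W_1^{s_1}\sigma W_1^{s_1}]$, is the probability that measuring $\{W_1^{s_1}\}$ on $\sigma$ yields an accepting permutation $\pi$ and then measuring $\{W_2^{s_2}\}$ on the collapsed state yields an accepting cycle $C$. Conditioned on this, $P_2$ has --- in one sequential experiment --- opened every entry commitment to the corresponding entry of the adjacency matrix of $\pi(G)$ and opened the $n$ commitments sitting on the edges of $C$ to $1$; but $G$, hence $\pi(G)$, has no Hamiltonian cycle, so some edge of $C$ is a non-edge of $\pi(G)$ and its $\F_Q$-commitment has been opened both to $0$ and to $1$. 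By the $1/Q$-binding property of the $\F_Q$-commitment against entangled provers established in \cite{chailloux2017relativistic}, this occurs with probability at most $1/Q$; the symmetric term is bounded identically, so $F_2\le 1/Q$.

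Finally I would put the pieces together. If $F_1(\vec a,\vec b)\le\tfrac12$ the claimed bound holds on that $(\vec a,\vec b)$ trivially; otherwise \Cref{thm:consec_inf} with $\max_i|S_i|\le n!$ gives $\bigl(F_1-\tfrac12\bigr)^2\le\tfrac{n!}{2}F_2\le\tfrac{n!}{2Q}$, i.e.\ $F_1(\vec a,\vec b)\le\tfrac12+\sqrt{n!/(2Q)}$; averaging over the verifier's choice of $\vec a$ and the prover-induced distribution of $\vec b$ bounds the overall acceptance probability, and hence the soundness error against entangled provers, by $\tfrac12+\sqrt{n!/(2Q)}$. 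I expect the main obstacle to be the middle step: making it rigorous that the two consecutive measurements produce a genuine simultaneous opening of a single $\F_Q$-commitment to two distinct values --- noting that the offending edge is itself a random function of the recorded outcomes $(s_1,s_2)$ --- and that the binding loss for the composite commitment to the whole adjacency matrix is $1/Q$ rather than a union bound over its $\Theta(n^2)$ constituent commitments; this is exactly where the structure of how \cite{chailloux2017relativistic} commit to and partially open the matrix, and their analysis of the underlying $\F_Q$-commitment, must be used.
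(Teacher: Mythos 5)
Your outline is in essence the paper's own argument: the paper proves this by first packaging the two-measurement inequality (their \Cref{thm:TwoOperators}, the formal version of \Cref{thm:consec_inf}) into a ``coupling game'' lemma, \Cref{thm:TwoChoiceCoupling}, which states $\omega^*(G^2_{coup})\geq\frac{2}{|S_{max}|}\left(\omega^*(G^2)-\frac{1}{2}\right)^2$, and then observes that for the Hamiltonian-cycle protocol $\omega^*(G^2_{coup})\leq 1/Q$ by the no-signalling argument of Chailloux--Leverrier and $|S_{max}|=n!$. You have effectively inlined that lemma by conditioning on $(rand,com)=(\vec a,\vec b)$, identifying $F_1(\vec a,\vec b)$ with the conditional acceptance probability, and applying the two-measurement bound per $(\vec a,\vec b)$. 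The measurement coarse-graining, the $|S_1|\leq n!$, $|S_2|\leq (n-1)!/2$ count, and the appeal to $\F_Q$-binding are all exactly what the paper uses.

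There is, however, one step that as written would fail, and it is precisely the one you flag as ``the main obstacle'': the claim $F_2(\vec a,\vec b)\leq 1/Q$ \emph{pointwise}. The $1/Q$ binding bound for the relativistic $\F_Q$-commitment is obtained by averaging over the verifier's uniform choice of $\vec a$; for any fixed $(\vec a,\vec b)$ a lucky (or entangled) $P_2$ could have $F_2(\vec a,\vec b)=1$ (e.g.\ if the prover's openings happen to satisfy $c_0-c_1=\vec a$). So you cannot first deduce a pointwise bound $F_1(\vec a,\vec b)\leq\frac{1}{2}+\sqrt{n!/(2Q)}$ and then average. The paper sidesteps this by keeping the averaging explicit: it lower-bounds $\omega^*(G^2_{coup})\geq\mathbb{E}_{\vec a,\vec b}[F_2]$, upper-bounds $\omega^*(G^2_{coup})\leq 1/Q$, applies \Cref{thm:TwoOperators} inside the expectation, and then uses Jensen's inequality (convexity of $t\mapsto(t-\tfrac12)^2$) to pull the expectation inside and obtain $\frac{1}{Q}\geq\frac{2}{n!}\left(\mathbb{E}_{\vec a,\vec b}[F_1]-\frac12\right)^2$. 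Your argument is fixed by the same move: average $F_2$ over $(\vec a,\vec b)$ \emph{before} invoking the binding bound, and replace the pointwise case-split by Jensen's inequality applied to $\mathbb{E}[(F_1-\tfrac12)^2]$. Your secondary worry --- that the offending edge is a random function of $(s_1,s_2)$ and that one should not pay a union bound over the $\Theta(n^2)$ entry commitments --- is handled in \cite{chailloux2017relativistic} and is correctly cited; the paper simply defers to that analysis.
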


Note that, in~\cite{chailloux2017relativistic} the soundness error was proven to be $\frac{1}{2} + \left(\frac{64n!}{Q}\right)^{\frac{1}{3}}$. Similarly in~\cite{crepeau2023zero}, Crepeau et al. derived a soundness error $\frac{1}{2} + \left(\frac{64 \cdot 2^n}{Q}\right)^{\frac{1}{3}}$, where $n$ is the number of elements in the input set of the subset-sum problem, and $Q$ is the parameter of the relativistic $\F_Q$-bit commitment protocol. \Cref{thm:consec_inf} gives the following soundness error for the subset-sum problems. 

\begin{lemma}[Informal]
The $\F_Q$-commitment based two-prover relativistic zero-knowledge proof protocol for the subset-sum that is proposed in~\cite{crepeau2023zero} has the following soundness error against the entangled provers $$\frac{1}{2} + \left(\frac{2^{n-1}}{Q}\right)^{\frac{1}{2}},$$
where $n$ is the size of the set.
\end{lemma}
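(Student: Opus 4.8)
\medskip\noindent\emph{Proof idea.}
The plan is to reduce an arbitrary entangled cheating strategy for the subset-sum protocol of \cite{crepeau2023zero} to the two-consecutive-measurement scenario of \Cref{thm:consec_inf}, and then to balance the lower bound on $F_2$ obtained there against the binding error of the underlying $\F_Q$-bit commitment. Recall that the protocol is a relativistic $\Sigma$-protocol with a binary challenge $ch\in\{0,1\}$: after receiving $rand$ the prover $P_1$ returns a commitment $com$ produced through the $\F_Q$-commitment subroutine, and on challenge $ch$ the prover $P_2$ returns a response $resp$ which, on one value of $ch$, effectively reveals the masking data and, on the other, effectively reveals a masked candidate subset $S\subseteq[n]$. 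There are $2^n$ such subsets, and this is the combinatorial quantity that will play the role of $\max_i\abs{S_i}$ (exactly as $n!$ permutation labels do in the Hamiltonian cycle lemma above).

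First I would fix the transcript prefix $(rand,com)$ (the final bound being obtained by averaging, or taking the worst case, over prefixes, as in the analysis of \cite{chailloux2017relativistic}) and let $\sigma$ be the reduced post-commitment state on $P_2$'s side. Dilating $P_2$'s challenge-dependent measurements to be projective without loss of generality via a Naimark dilation, I would take $\{W_1^{s_1}\}_{s_1\in S_1}$ to be the projectors describing ``$V$ accepts on $ch=0$ and the response opens to $s_1$'' and $\{W_2^{s_2}\}_{s_2\in S_2}$ the analogous projectors for $ch=1$, where $s_i$ ranges over the values (subsets/masking strings) that can be consistently revealed, so that $\abs{S_i}\le 2^n$. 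With $W_i=\sum_{s_i}W_i^{s_i}$, the quantity $F_1=\tfrac12(\Tr[W_1\sigma]+\Tr[W_2\sigma])$ is precisely the winning probability of the strategy, and there is nothing to prove when $F_1\le\tfrac12$.

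Applying \Cref{thm:consec_inf} gives $F_2\ge\frac{2}{\max_i\abs{S_i}}\bigl(F_1-\tfrac12\bigr)^2\ge\frac{1}{2^{n-1}}\bigl(F_1-\tfrac12\bigr)^2$. On the other hand, $F_2$ is (half the sum of) the probability that, after applying one of the two verification measurements and recording its outcome, the other verification measurement also accepts consistently; when $x\notin L$ (the case relevant for soundness) no genuine witness exists, so any pair of consistent accepting openings for the same $com$ but the two distinct challenges forces a valid opening of the $\F_Q$-commitment to two different values. The (sum-)binding guarantee of the $\F_Q$-bit commitment then bounds this event by $1/Q$, following the soundness analysis of \cite{chailloux2017relativistic}, so $F_2\le 1/Q$. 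Combining, $\bigl(F_1-\tfrac12\bigr)^2\le 2^{n-1}/Q$, i.e. the soundness error is at most $\tfrac12+\bigl(2^{n-1}/Q\bigr)^{1/2}$.

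The main obstacle is the step $F_2\le 1/Q$: one must check carefully that the event underlying $F_2$ genuinely corresponds to a binding break of the $\F_Q$-commitment whenever $x\notin L$, and that the constant is exactly $1/Q$ (or absorb any benign multiplicative slack, as happens in the cube-root bounds of \cite{chailloux2017relativistic,crepeau2023zero}, without affecting the leading term). Two further technical points to track are the projectivization of $P_2$'s measurements, which is harmless since the dilating registers can be appended to $\sigma$, and the conditioning on $(rand,com)$, which must be handled so that the averaging commutes with the inequality from \Cref{thm:consec_inf}. The very same argument, with $2^n$ subsets replaced by $n!$ permutation/cycle labels, yields the companion lemma for the Hamiltonian cycle protocol of \cite{chailloux2017relativistic}.
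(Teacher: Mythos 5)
Your proposal is correct and follows essentially the same route as the paper: the paper simply packages the step of fixing the transcript prefix, applying \Cref{thm:TwoOperators}, and then averaging (via Jensen) into the intermediate ``coupling-game'' lemma \Cref{thm:TwoChoiceCoupling}, and bounds the coupling game's value by $1/Q$ via the no-signalling argument inherited from Proposition~4 of \cite{crepeau2023zero} — which is exactly the $F_2\leq 1/Q$ step you identify as the main obstacle. All the technical points you flag (Naimark projectivization, commuting the averaging over $(rand,com)$ with the consecutive-measurement inequality, and the $\abs{S_{max}}=2^n$ count coming from the $2^n$ choices of $z$) are handled by the paper exactly as you anticipate.
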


Notably, these improvements directly result in a reduction in the value of $Q$ necessary to obtain a soundness error negligibly greater than $\frac{1}{2}$, operationally corresponding to improvements in communication complexity. We summarize the comparison between our analysis and previous results in \Cref{tab:comp-results}.

 \begin{table}
     \centering
     \begin{tabular}{|c|c|c|c|c|}
     \hline
        Problem  & Prev. Soundness & Our Soundness  & Prev. $Q$ & Our $Q$  \\ \hline
        Hamiltonian Cycle & $\frac{1}{2} + \left(\frac{64n!}{Q}\right)^{\frac{1}{3}}$\cite{chailloux2017relativistic} & $\frac{1}{2} + \left(\frac{n!}{2Q}\right)^{\frac{1}{2}}$ & $2^{3\eta+6}n!$ & $2^{2\eta-1}n!$ \\\hline
    
        Subset-Sum & $\frac{1}{2} + \left(\frac{64 \cdot 2^n}{Q}\right)^{\frac{1}{3}}$ \cite{crepeau2023zero}  & $\frac{1}{2} + \left(\frac{2^{n-1}} {Q}\right)^{\frac{1}{2}}$ & $2^{n+3\eta+6}$ &  $2^{n+2\eta-1}$  \\\hline

     \end{tabular}
     \vspace{1.2mm}
     \caption{We compare the bound on soundness error obtained by our analysis to previous analyses of the same protocols. In addition, we compare the value of $Q$ necessary to achieve a soundness error of $\frac{1}{2}+2^{-\eta}$ for security parameter $\eta$ based on each bound.}
     \label{tab:comp-results}
 \end{table}
\section{Preliminaries}
\label{sec:prelims}
We now introduce a collection of definitions and results which we will make use of in later sections. 

\begin{definition} [Relativistic commitment scheme] \label{def:rel-com-scheme}
    A relativistic commitment scheme described by $(\textsf{Com}_K, \textsf{Open}_K)$ is an interactive protocol between two, potentially quantum, senders $\langle S_1,S_2\rangle$ who share randomness $K$, and receiver $R$ with two phases. The scheme has both a Commit phase and an Open phase.
    \begin{enumerate}
        \item (Commit phase.) $S_1$ chooses a bit $d\in \{0,1\}$ that she wants to commit to. $R$ sends some randomness $rand$ to $S_1$ and $S_1$ generates $com = \textsf{Com}_K(d,rand)$ back as a commitment. 
        \item (Open phase.) $S_2$ reveals the value and associated opening information of $d, open = \textsf{Open}_K(com)$ back to $R$. Then, depending on the commitment information previously sent by $S_1$ and the opening information sent by $S_2$, $R$ outputs either ``Accept" or ``Reject".
    \end{enumerate}
\end{definition}
We may omit explicitly mentioning the pre-shared randomness in our discussions for brevity, and describe an honest bit commitment scheme just as 
$(\textsf{Com},\textsf{Open})$. All commitment schemes we consider will be perfectly hiding and therefore we are only interested in the binding property. To prove the existence of QPoKs for all NP, we will require the notion of binding given below in~\Cref{def:fairly-binding}, which is equivalent to both the fairly-binding and simultaneous opening properties (due to the binary message space) given by Fehr and Fillinger \cite[Definition 3.2, Definition 3.1.3]{fehr2016composition}. 

\begin{definition} [$\epsilon_{FB}$-fairly-binding]\label{def:fairly-binding}
    A bit commitment scheme is $\epsilon_{FB}$-fairly-binding if for malicious senders $S^*_1,S^*_2$ with cheating strategy $(\textsf{Com}^*,\textsf{Open}^*)$,
    \begin{align*}
        \forall \textsf{Com}^*, & \max_{\textsf{Open}^*} \Pr_{resp}[S_1^*,S_2^* \text{ successfully reveals } d=0 \land d=1 | rand, \textsf{Com}^*, \textsf{Open}^*]\\
        &\leq \epsilon_{FB},
    \end{align*}
    where $resp = \textsf{Open}^*(\textsf{Com}^*(d))$ is the opening $S_2^*$ sends to the receiver $R$ for some $d$.
\end{definition}

This binding property roughly restricts simultaneous opening, i.e., ensuring that, for any $rand$ received by $S_1^*$ and any cheating strategy, senders will not be able to open a message to both $0$ and $1$ except with some probability $\epsilon_{FB}$. This fairly-binding definition is previously studied in~\cite{fehr2016composition}.

We remark that the widely used relativistic $\mathbb{F}_Q$ commitment scheme can be shown to be $\frac{1}{Q}$-fairly-binding, the proof of which follows from ~\cite[Proposition 3.15]{fehr2016composition}. For completeness, we give a description of the scheme below, noting that the commitment and open phases happen simultaneously:\\
\textbf{Relativistic $\mathbb{F}_Q$ commitment:}
\vspace{-2.5mm}
\begin{enumerate}
    \item (Preparation phase.) $S_1$ and $S_2$ agree on a committed bit $d \in \{0,1\}$, and share a random number $c \in \mathbb{F}_Q$.
    \item (Commitment phase.) $S_1$ recevies a random $a \in \mathbb{F}_Q$ from $R$. Then $S_1$ returns $w:= a \cdot d + c$ to $R$.
    \item (Open Phase.) $S_2$ sends $d,c$ to $R$. $R$ verifies that $w=a \cdot d + c$.
\end{enumerate}

We recall also the definition of quantum proofs of knowledge.
\begin{definition} [Quantum Proofs of Knowledge \cite{unruh2012quantum}] \label{def:QPoK}
    Let $L$ be a language in NP and let $\R$ be its corresponding witness relation (that is, $x \in L$ if and only if there exists a witness $w$ such that $(x, w) \in \R)$. We call a proof system $(P, V)$ a quantum proof of knowledge for $L$ with knowledge error $\kappa$ iff there exists a constant $d>0$, a polynomially-bounded function $p > 0$, and a quantum-polynomial-time oracle machine $K$ such that for any interactive quantum machine $P^*$, any polynomial $l$, any security parameter $\eta \in \N$, any state $|\psi\rangle$, and any $x \in \{0,1\}^{\leq l(\eta)}$, we have that
    
    \medskip\noindent \textbf{Non-triviality.} If $(x, w) \in \R$ then on public input $x$ and where  prover $P$ is given $w$ as private input, the verifier will accept with probability one.
    
    \medskip\noindent \textbf{Validity (with knowledge error $\kappa$).} For every, possibly malicious, prover $P^*$, and every $x$, the following holds
    \begin{align*}
    \Pr&\Big[\big\langle P^*(x, |\psi\rangle), V(x, r)\big\rangle =1\Big] > \kappa \Longrightarrow \\
    & \Pr[(x,w) \in \R: w \leftarrow K^{P^*(x,|\psi\rangle)}(x)] \geq \frac{1}{p(\eta)}\Big( \Pr\big[\left\langle P^*(x, |\psi\rangle), V(x, r)\right\rangle=1\big] - \kappa\Big)^d.
    \end{align*}
(where this probability is taken over the random choices $r$ of the verifier). The oracle machine $K$ is known as a knowledge extractor with running time polynomial in $\eta$ and the running time of $P^*$, and on input $x$ outputs a witness $w$ for $x$ (i.e. $w$ such that $(x, w) \in \R$). Here $P$ indicates generalized provers, which can be either one prover or multiple provers. The extractor $K$ can query all provers.
\end{definition} 

We will also make use of the below definition for $k$-player single round non-local games. 

\begin{definition}[Single Round $k$-Player Non-Local Games]
\label{def:symm_gaenene}
For some finite sets $I_Q, I_A$, a single round $k$-player non-local game $G^k=(I_Q,I_A,\pi,\V)$, played between $k$-spatially separated non-communicating players $P_1, \ldots, P_k$ and a verifier is given by a set of questions $I_Q$ and answers $I_A$, together with a distribution $\pi: [I_Q]^k \rightarrow [0,1]$ and a verification function $\V:[I_Q]^k \times [I_A]^k \rightarrow \{0,1\}$. It proceeds as follows, 
\begin{enumerate}
\item The verifier chooses $k$ inputs $(X_{1}, \ldots , X_{k}) \in [I_Q]^k$ from the distribution $\pi(X_{1}, \ldots , X_{k})$, and sends the $i$-th input $X_{i}$ to the $i$-th player $P_i$.
\item After receiving the questions, the $i$-th player sends its answer $O_i \in I_A$ to the verifier.
\item The verifier computes the function $\V(O_1,\ldots , O_k|X_1, \ldots , X_k)$.
\item The players win if and only if $\V(O_1,\ldots ,O_k|X_1, \ldots , X_k) = 1$.
\end{enumerate}
\end{definition}

Informally speaking, we refer to a $k$–prover interactive proof system for a language $L$ as symmetric if V can permute the questions to all provers without changing their distribution. This paper uses the definition of symmetric games from \cite{kempe2011entangled}.

\begin{definition}[Symmetric Games]
\label{def:symm}
A $k$-player single round non-local game $G^k=(I_Q,I_A,\pi,\V)$ is called symmetric, if the input distribution $\pi$ is a symmetric distribution. 
\end{definition}

In \cite{kempe2011entangled}, Kempe et al. propose a technique to calculate an upper bound on the winning probability of a $3$-player symmetric game. In this paper, we use the following Lemma from \cite{kempe2011entangled} to get a lower bound on the knowledge error of a symmetric $3$-prover relativistic zero-knowledge protocol.  

\begin{lemma}[\cite{kempe2011entangled}, Claim $20$]
\label{lem:meas_dist}
Suppose $|\Psi\rangle_{P_1P_2P_3}$ be a tri-partite state shared between three players $P_1, P_2,P_3$ such that $\rho_{P_1P_2} := \Tr_{P_3}\left[|\Psi\rangle\langle \Psi|_{P_1P_2P_3}\right]$. For some finite sets $I_Q, I_A$, we consider a set of operators $\{W_q\}_{q \in I_Q}$ such that for every $q \in I_Q$, $W_q :=\{W_q^o\}_{o\in I_A}$ is a set of projectors and $\sum_{o\in I_A} W_q^o = \id$. Then for any $q \in I_Q$,
\begin{equation}
\lVert\rho_{P_1P_2} - \sum_{o \in I_A}W^o_q \rho_{P_1P_2}W^o_q\rVert_1 \leq 6\sqrt{1 - \pi_2(q)},
\end{equation}
where $\pi_2(q) := \sum_{o\in i_a} \Tr[(W_q^o \otimes W_q^o) \rho_{P_1P_2}]$.
\end{lemma}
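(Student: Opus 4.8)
# Proof Proposal for Lemma (Kempe et al., Claim 20)

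\textbf{Overall strategy.} The plan is to prove the measurement-disturbance bound by exploiting the monogamy of entanglement encoded in the structure of $|\Psi\rangle_{P_1P_2P_3}$ together with a gentle-measurement argument. The starting observation is that $\pi_2(q) = \sum_{o} \Tr[(W_q^o \otimes W_q^o)\rho_{P_1P_2}]$ is the probability that symmetric measurements on the first two subsystems yield \emph{matching} outcomes; when $\pi_2(q)$ is close to $1$, these measurements behave almost like a coherent projective measurement on the joint $P_1P_2$ system, and hence disturb $\rho_{P_1P_2}$ only mildly. First I would pass to a purification: write $\rho_{P_1P_2} = \Tr_{P_3}[|\Psi\rangle\langle\Psi|]$ and work with the full tripartite pure state, so that the trace-distance disturbance on $P_1P_2$ can be related (via Uhlmann/fidelity arguments) to how well the outcome-$o$ branches on the purifying system $P_3$ can be distinguished.

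\textbf{Key steps in order.} (1) Define the post-measurement state $\sigma := \sum_{o} (W_q^o \otimes W_q^o)\rho_{P_1P_2}(W_q^o \otimes W_q^o)$ obtained by applying the \emph{correlated} measurement $\{W_q^o \otimes W_q^o\}_o$ (note this is what appears implicitly through $\pi_2$), and relate $\sum_o W_q^o \rho_{P_1P_2} W_q^o$ (measuring only $P_1$) to it. (2) Apply the gentle measurement lemma in the form: if $\Tr[M\rho] \geq 1 - \varepsilon$ for a sub-unit POVM element $M$, then $\|\rho - \sqrt{M}\rho\sqrt{M}\|_1 \leq 2\sqrt{\varepsilon}$; here the relevant ``good'' event is that the symmetric pair-measurement returns equal outcomes, which has probability exactly $\pi_2(q)$, so $\varepsilon = 1 - \pi_2(q)$. (3) Use the triangle inequality to chain: $\rho_{P_1P_2} \to$ (post correlated-measurement state) $\to \sum_o W_q^o \rho_{P_1P_2} W_q^o$, where the first leg costs $2\sqrt{1-\pi_2(q)}$ by gentle measurement and the second leg is bounded by showing that once outcomes are forced to agree, measuring one subsystem versus both is nearly identical on the relevant support — contributing a further comparable term. (4) Collect the constants; the $6$ arises from summing three applications of a $2\sqrt{\cdot}$-type bound (two gentle-measurement legs plus one symmetrization/relabeling leg), possibly after using concavity of $\sqrt{\cdot}$ to pull the sum over $o$ outside. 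I would also need to invoke that $\{W_q^o\}$ is a genuine projective measurement ($\sum_o W_q^o = \id$, $W_q^o W_q^{o'} = \delta_{oo'} W_q^o$) so that the ``measure $P_1$ only'' channel is trace-preserving and idempotent.

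\textbf{Main obstacle.} The delicate point is step (3): bounding the difference between the single-subsystem measurement channel $\rho \mapsto \sum_o W_q^o \rho_{P_1} \otimes \id_{P_2}$-style action and the correlated two-subsystem channel, given only the \emph{scalar} guarantee $\pi_2(q) \approx 1$ rather than an operator inequality. The right tool is to view $\pi_2(q)$ as the fidelity-type overlap between $\rho_{P_1P_2}$ and its image under the symmetric measurement, then use the fact that measuring one half of a bipartite state and then the other (in the same basis) either agrees — no disturbance — or disagrees with total weight $1 - \pi_2(q)$; a Cauchy–Schwarz / data-processing estimate on the disagreeing branch yields the $\sqrt{1-\pi_2(q)}$ scaling. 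Making the bookkeeping of constants land exactly at $6$ (rather than some larger constant) will require being slightly careful about whether to bound the sum $\sum_o \|\cdot\|_1$ directly or route through a purification and a single fidelity estimate; I expect the cleanest route is the latter, reducing the whole statement to one application of the Fuchs–van de Graaf inequality combined with the monogamy bound $F(\rho_{P_1P_2}, \text{(symmetrized)}) \geq \pi_2(q)$.
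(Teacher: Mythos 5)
Your plan has a fatal gap in the first leg of the triangle inequality, and the issue is visible already with a simple example. You claim that the gentle measurement lemma bounds
\[
\Bigl\|\rho_{P_1P_2}-\sum_{o}(W_q^o\otimes W_q^o)\,\rho_{P_1P_2}\,(W_q^o\otimes W_q^o)\Bigr\|_1\ \le\ 2\sqrt{1-\pi_2(q)}.
\]
But the gentle measurement lemma only controls $\|\rho-\sqrt{M}\rho\sqrt{M}\|_1$ for a \emph{single} POVM element $M$. Taking $M=\sum_o W_q^o\otimes W_q^o$ gives $\Tr[M\rho]=\pi_2(q)$, but $\sqrt{M}\rho\sqrt{M}=M\rho M$ \emph{retains} the off-diagonal coherences between different outcome sectors, whereas the post-measurement state $\sigma=\sum_o(W_q^o\otimes W_q^o)\rho(W_q^o\otimes W_q^o)$ is the \emph{dephased} (pinched) version, and the two can be far apart in trace norm even when $\pi_2(q)=1$. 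Concretely, take $\rho_{P_1P_2}=\ket{\Phi^+}\bra{\Phi^+}$ with $\ket{\Phi^+}=\tfrac{1}{\sqrt2}(\ket{00}+\ket{11})$ and $W_q^o=\ket{o}\bra{o}$: then $\pi_2(q)=1$, yet $\bigl\|\rho-\sum_o W_q^o\rho W_q^o\bigr\|_1=1$. So no argument that works only with $\rho_{P_1P_2}$ and the scalar $\pi_2(q)$ can possibly close; the extra structure coming from the tripartite state and its permutation symmetry is load-bearing, not decorative.

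The mechanism you are missing is exactly the one the paper uses when it proves the (tighter, constant-$4$) version of this claim in the appendix. There, one lifts everything to the full tripartite state and observes that measuring $P_3$ with $\{W_q^o\}$ is a \emph{non-disturbing} operation on $\rho_{P_1P_2}$, i.e.\ $\rho_{P_1P_2}=\Tr_{P_3}\bigl[(\id\otimes\id\otimes\mathcal{W}_q)(\rho_{P_1P_2P_3})\bigr]$. The triangle inequality is then taken around the intermediate state $\sum_a (W_q^a\otimes\id\otimes W_q^a)\rho_{123}(W_q^a\otimes\id\otimes W_q^a)$, and the symmetry of $\ket{\Psi}$ under exchanging $P_1\leftrightarrow P_3$ (which, combined with the reduction to $\rho_{P_1P_2}$, uses full permutation symmetry) identifies the two resulting terms. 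Only \emph{after} this conditioning on a recorded $P_3$-outcome does the state become block-diagonal in the outcome sector, and the gentle measurement lemma applies legitimately, with the ``good event'' being agreement between $P_1$ and $P_3$ (which equals $\pi_2(q)$ by symmetry). You should also make the permutation symmetry of $\ket{\Psi}_{P_1P_2P_3}$ an explicit hypothesis; without it the statement is false, as the example above shows (it holds in the paper's application because the game is symmetric). To fix the proposal, replace your step (3) by: (i) insert $\id\otimes\id\otimes\mathcal{W}_q$ on the tripartite state at zero cost; (ii) compare against $\sum_a(W_q^a\otimes\id\otimes W_q^a)\rho_{123}(W_q^a\otimes\id\otimes W_q^a)$ by gentle measurement applied blockwise; (iii) use symmetry to handle the remaining term. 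That is the route that actually lands a constant of the advertised order.
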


In the literature of both relativistic cryptography and quantum proofs of knowledge, there are some existing lower bounds on the average winning probability certain events being observed after consecutive measurements.
We detail a selection of those relevant to us here, starting with a lower bound given by Chailloux and Leverrier for any set of projectors \cite{chailloux2017relativistic}.

\begin{theorem}[\cite{chailloux2017relativistic}, Theorem $1$]
\label{thm:cha_thm1}
Consider a set of $c$ projectors $W_1, \ldots , W_c$ such that for every $i \in \{1, \ldots, c\}$ $W_i := \sum_{s=1}^{|S_i|} W_i^s$, where the projectors $\{W_i^s\}_{s \in S_i}$ are orthogonal projectors. Let $\sigma$ be any quantum state, and $F_1 := \frac{1}{c} \sum_{i \in C} \Tr[W_i \sigma]$, and $F_2 := \frac{1}{c(c-1)} \sum_{\substack{i,j \in C \\ i \neq j}}\sum_{s \in S_i,s' \in S_j}\Tr[W_jW_i \sigma W_i]$. If $F_1 \geq \frac{1}{c}$, then
\begin{equation}
F_2 \geq \frac{1}{64|S|}\left(F_1 - \frac{1}{c}\right)^3,
\end{equation}
where $|S| := \max\{|S_1|, \ldots , |S_c|\}$.
\end{theorem}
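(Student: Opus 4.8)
The claimed bound factorises as $\tfrac{1}{|S|}\cdot\tfrac{1}{64}\cdot(F_1-\tfrac1c)^3$, and I would establish it in three steps, one per factor. \emph{Step 1 removes the internal structure of each measurement.} Fix a purification $|\phi\rangle$ of $\sigma$. Summing out the second measurement's outcome already gives $\sum_{s'\in S_j}\Tr[W_j^{s'}W_i^s\sigma W_i^s]=\Tr[W_jW_i^s\sigma W_i^s]$, so $F_2=\frac{1}{c(c-1)}\sum_{i\neq j}\sum_{s\in S_i}\Tr[W_jW_i^s\sigma W_i^s]$. For fixed $i\neq j$, write $\Tr[W_jW_i^s\sigma W_i^s]=\|(W_j\otimes\id)(W_i^s\otimes\id)|\phi\rangle\|^2$ and apply $\|\sum_s v_s\|^2\le|S_i|\sum_s\|v_s\|^2$ to the vectors $v_s:=(W_j\otimes\id)(W_i^s\otimes\id)|\phi\rangle$, whose sum is $(W_jW_i\otimes\id)|\phi\rangle$; this yields $\sum_s\Tr[W_jW_i^s\sigma W_i^s]\ge|S_i|^{-1}\Tr[W_jW_i\sigma W_i]\ge|S|^{-1}\Tr[W_jW_i\sigma W_i]$. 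Averaging over $i\neq j$ gives $F_2\ge|S|^{-1}\widetilde F_2$ with $\widetilde F_2:=\frac{1}{c(c-1)}\sum_{i\neq j}\Tr[W_jW_i\sigma W_i]$ — the value $F_2$ would take if every $|S_i|$ were $1$ — and since $F_1$ is unaffected by replacing $\{W_i^s\}_s$ with $W_i$, it remains to prove $\widetilde F_2\ge\tfrac{1}{64}(F_1-\tfrac1c)^3$ for genuine projectors $W_1,\dots,W_c$.

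\emph{Step 2 reduces to pure states.} Both $\sigma\mapsto\widetilde F_2(\sigma)$ and $\sigma\mapsto F_1(\sigma)$ are linear functionals, while $h(x):=\max\{0,\tfrac{1}{64}(x-\tfrac1c)^3\}$ is convex on $[0,1]$ and coincides with $\tfrac{1}{64}(x-\tfrac1c)^3$ for $x\ge\tfrac1c$. Hence it suffices to prove $\widetilde F_2(|\psi\rangle\langle\psi|)\ge h(F_1(|\psi\rangle\langle\psi|))$ for all unit vectors $|\psi\rangle$ (which for $F_1<\tfrac1c$ is just $\widetilde F_2\ge0$); writing an arbitrary $\sigma$ as a mixture of pure states and invoking Jensen's inequality then gives $\widetilde F_2(\sigma)\ge h(F_1(\sigma))$, equal to $\tfrac{1}{64}(F_1-\tfrac1c)^3$ under the hypothesis $F_1\ge\tfrac1c$. \emph{Step 3 is the pure-state cubic bound.} Fix a unit vector $|\psi\rangle$, set $V:=\frac1c\sum_iW_i$ and $\epsilon:=\langle\psi|V|\psi\rangle-\tfrac1c\ge0$; the target is $\widetilde F_2\ge\tfrac{\epsilon^3}{64}$. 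Using $\sum_{j\neq i}W_j=cV-W_i$ and $W_i^2=W_i$ one has the identity $\widetilde F_2=\frac{1}{c-1}\bigl(\sum_i\langle\psi|W_iVW_i|\psi\rangle-\langle\psi|V|\psi\rangle\bigr)$, whence the baseline $\widetilde F_2\ge0$, since $V\ge\tfrac1cW_i$ forces $W_iVW_i\ge\tfrac1cW_i$.

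To extract the $\epsilon^3$ improvement over this baseline I would combine two ingredients. (i) A spectral threshold for $V$: letting $\Pi$ be the spectral projector of $V$ onto eigenvalues $\ge\tfrac1c+\tfrac\epsilon2$, splitting $\langle\psi|V|\psi\rangle$ across $\Pi$ and $\id-\Pi$ and using $\|V\|\le1$ gives $\|\Pi|\psi\rangle\|^2\ge\tfrac\epsilon2$, while every unit $|a\rangle\in\operatorname{ran}\Pi$ satisfies $\frac1c\sum_i\|W_i|a\rangle\|^2\ge\tfrac1c+\tfrac\epsilon2$, so the same threshold step can be iterated on the caught component. (ii) A gentle-measurement estimate bounding the distance between $|\psi\rangle$ and the normalised state obtained by measuring a random $W_i$ and conditioning on acceptance, which in turn lower-bounds the conditional acceptance probability of a second, distinct, $W_j$-measurement. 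Bookkeeping the two coarse events ``first measurement accepts and the state lay in $\operatorname{ran}\Pi$'' and ``second measurement accepts,'' together with the square-root loss intrinsic to the gentle-measurement step, produces the third power of $\epsilon$; the constant $\tfrac{1}{64}=\tfrac{1}{4^3}$ reflects three successive halvings of the gap $\epsilon$.

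I expect Step 3 to be the main obstacle. For $c=2$ one can block-diagonalise $\{W_1,W_2\}$ by Jordan's lemma into $2$-dimensional invariant subspaces and recover the sharp quadratic bound — precisely what \Cref{thm:consec_inf} exploits — whereas for $c\ge3$ no such decomposition is available and one is forced to argue globally through the single operator $V$, which is exactly where the extra power of $\epsilon$ (cube rather than square) is lost. The technically delicate points are keeping the cross terms between the $\Pi$- and $(\id-\Pi)$-components of $|\psi\rangle$ under control (they are positive-operator expectations of the form $\langle\psi|W_iW_jW_i|\psi\rangle$, so a Cauchy–Schwarz estimate suffices), iterating the threshold argument correctly, and collapsing all the accumulated constants to $\tfrac{1}{64}$.
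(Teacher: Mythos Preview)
This theorem is not proved in the paper; it is quoted in the Preliminaries (Section~\ref{sec:prelims}) as a result from \cite{chailloux2017relativistic} and used as a black box in the proof of Theorem~\ref{thm:sig-QPoK}. There is therefore no proof in the present paper to compare your proposal against.

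That said, a few remarks on your sketch. Your Step~1 is correct and is exactly the reduction the paper invokes elsewhere (see the proof of Theorem~\ref{thm:TwoOperators}, which cites the same inequality as ``Proposition~4 in~\cite{chailloux2017relativistic}''). Your Step~2 is also fine. Your Step~3, however, is not a proof but a plan: the identity $\widetilde F_2=\frac{1}{c-1}\bigl(\sum_i\langle\psi|W_iVW_i|\psi\rangle-\langle\psi|V|\psi\rangle\bigr)$ and the baseline $\widetilde F_2\ge0$ are correct, but the actual extraction of the cubic gain is only described heuristically (``spectral threshold'', ``gentle-measurement estimate'', ``three successive halvings''). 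You yourself flag this as the main obstacle, and rightly so: the argument in \cite{chailloux2017relativistic} proceeds through a careful geometric analysis of the vectors $W_i|\psi\rangle$ rather than a generic threshold-plus-gentle-measurement bookkeeping, and it is not clear that your outlined route closes without essentially reproducing that analysis. In particular, the claim that the constant $\tfrac{1}{64}=\tfrac{1}{4^3}$ arises from ``three halvings of $\epsilon$'' is a mnemonic, not an argument; getting the constants to line up is where the real work lies.
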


Later in \cite{don2019security}, Don et al. provides another lower bound for general $t$-consecutive measurements in a slightly different setting.

\begin{theorem}[\cite{don2019security}, Lemma $29$]
Consider a set of $c$ projectors $W_1, \ldots, W_c$, and let $|\Psi\rangle$ be any quantum state. Define:
$$
F_1 := \frac{1}{c} \sum_{i \in C} \Tr[W_i |\Psi\rangle\langle \Psi|] = \frac{1}{c} \sum_{i \in C} \lVert W_i |\Psi\rangle \rVert^2,
$$
and for any $t > 0$, define:
$$
F_t := \frac{1}{c^t} \sum_{i_1, \ldots, i_t} \lVert W_{i_t} \ldots W_{i_1} |\Psi\rangle \rVert^2.$$ Then we have
\begin{equation}
\label{thm:cha_consec}
F_t \geq F_1^{(2t - 1)}.
\end{equation}
\end{theorem}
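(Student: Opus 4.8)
The plan is to rewrite $F_t$ as $\langle\Psi| A_t |\Psi\rangle$ for a suitable operator $A_t$, reduce the claim to a single operator inequality, and then run an induction on $t$. First I would introduce the averaged operator $E := \frac{1}{c}\sum_{i\in C} W_i$, which satisfies $0 \preceq E \preceq \id$ because each $W_i$ is a projector, so that $F_1 = \langle\Psi|E|\Psi\rangle$. Then I would define the completely positive map $\mathcal{E}(X) := \frac{1}{c}\sum_{i\in C} W_i X W_i$ and set $A_0 := \id$, $A_t := \mathcal{E}(A_{t-1})$, so that $A_1 = E$. The first step is to verify that $F_t = \langle\Psi| A_t |\Psi\rangle$: using $W_i^\dagger W_i = W_i$ one has $\lVert W_{i_t}\cdots W_{i_1}|\Psi\rangle\rVert^2 = \langle\Psi|W_{i_1}\cdots W_{i_{t-1}}\,W_{i_t}\,W_{i_{t-1}}\cdots W_{i_1}|\Psi\rangle$, and averaging over $i_t$, then over $i_{t-1}$, and so on — peeling off one index per application of $\mathcal{E}$, since $\frac1c\sum_i W_i\,Y\,W_i = \mathcal{E}(Y)$ — collapses the sum defining $F_t$ to $\langle\Psi|\mathcal{E}^t(\id)|\Psi\rangle$.

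The heart of the proof is the operator inequality
\[
\frac{1}{c}\sum_{i\in C} W_i X W_i \ \succeq\ E X E \qquad \text{for every } X \succeq 0.
\]
I would establish it by an ``operator variance is positive semidefinite'' argument: for any vector $|\phi\rangle$, setting $|\phi_i\rangle := W_i|\phi\rangle$ and $|\bar\phi\rangle := E|\phi\rangle = \frac{1}{c}\sum_i|\phi_i\rangle$, the difference of the two quadratic forms evaluated on $|\phi\rangle$ equals $\frac{1}{c}\sum_i \langle\phi_i-\bar\phi|\,X\,|\phi_i-\bar\phi\rangle$, which is nonnegative because $X\succeq 0$ and $\sum_i(|\phi_i\rangle-|\bar\phi\rangle)=0$.

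Given this inequality, the induction $A_t \succeq E^{2t-1}$ is short. The base case is $A_1 = E = E^{1}$. For the inductive step, $A_t \succeq 0$, so $A_{t+1} = \frac{1}{c}\sum_i W_i A_t W_i \succeq E A_t E$ by the inequality above applied to $X = A_t$, and then $E A_t E \succeq E\cdot E^{2t-1}\cdot E = E^{2t+1}$ since conjugation by $E$ is a positive map and $A_t \succeq E^{2t-1}$ by hypothesis; hence $A_{t+1}\succeq E^{2(t+1)-1}$. Finally I would assemble the pieces: from $A_t \succeq E^{2t-1}$ we get $F_t = \langle\Psi|A_t|\Psi\rangle \geq \langle\Psi|E^{2t-1}|\Psi\rangle$, and a scalar Jensen step closes the argument — diagonalising $E$ and writing $p_\lambda := |\langle e_\lambda|\Psi\rangle|^2$ (so that $\sum_\lambda p_\lambda = 1$ and each $\lambda\in[0,1]$), we have $\langle\Psi|E^{m}|\Psi\rangle = \sum_\lambda \lambda^m p_\lambda \geq \big(\sum_\lambda \lambda p_\lambda\big)^{m} = \langle\Psi|E|\Psi\rangle^{m}$ by convexity of $x\mapsto x^m$ on $[0,\infty)$ for the integer $m := 2t-1 \geq 1$, hence $F_t \geq F_1^{\,2t-1}$.

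The routine parts are the index-peeling identity for $F_t$ and the scalar Jensen inequality at the end. The step I expect to need the most care is isolating the operator inequality $\frac{1}{c}\sum_i W_i X W_i \succeq EXE$ and recognising that the correct invariant is $A_t \succeq E^{2t-1}$, whose exponent must grow by two per iteration — one power of $E$ from each flanking conjugation — which is exactly what yields the exponent $2t-1$ rather than a naive $t$.
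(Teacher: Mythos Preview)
The paper does not prove this statement; it is quoted in the preliminaries as Lemma~29 of \cite{don2019security} and used as a black box, so there is no in-paper proof to compare against.

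That said, your argument is correct and self-contained. The identification $F_t=\langle\Psi|A_t|\Psi\rangle$ via the nested index-peeling is valid because $W_{i_t}^\dagger W_{i_t}=W_{i_t}$ collapses the innermost pair, and each subsequent average over $i_k$ is precisely one application of $\mathcal{E}$. The key operator inequality $\frac{1}{c}\sum_i W_i X W_i \succeq EXE$ for $X\succeq 0$ is established cleanly by your variance computation: the cross terms in $\frac{1}{c}\sum_i\langle\phi_i-\bar\phi|X|\phi_i-\bar\phi\rangle$ telescope exactly because $\frac{1}{c}\sum_i|\phi_i\rangle=|\bar\phi\rangle$, leaving $\frac{1}{c}\sum_i\langle\phi_i|X|\phi_i\rangle-\langle\bar\phi|X|\bar\phi\rangle\ge 0$. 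The induction $A_t\succeq E^{2t-1}$ then goes through since conjugation by $E$ is order-preserving, and the final scalar Jensen step is standard. One minor remark: in the inductive step you invoke Claim~2 with $X=A_t$, which requires $A_t\succeq 0$; this holds because $\mathcal{E}$ is completely positive and $A_0=\id\succeq 0$, but it is worth stating explicitly rather than folding it into ``$A_t\succeq 0$, so\ldots''.
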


\section{Quantum Proofs of Knowledge in the Relativistic Setting}
\label{sec:QPoKs}
We begin by formalizing the concept of interactive machines interacting under relativistic constraints, as well as quantum oracle machines with rewinding capabilities in the relativistic setting. With these definitions in hand, we proceed to prove that relativistic $\Sigma$-protocols which have $\delta_{SS}$-special soundness for appropriate $\delta_{SS}$ are indeed quantum proofs of knowledge, answering our motivating question in the affirmative for a large class of protocols, including those proposed in \cite{chailloux2017relativistic,crepeau2023zero}. 
Moreover, we give a proof that there exist relativistic QPoKs for all of NP assuming the existence of a relativistic bit commitment scheme with the fairly-binding property in \Cref{def:fairly-binding}. We arrive at this result via a construction from this binding property of such a proof system for Hamiltonian Cycle. We show that the proof system has special soundness and therefore, by our previous result, it is a QPoK. Finally, we consider the existence of QPoKs for protocols without special soundness, such as the one proposed in \cite{crepeau2019practical} for graph 3-coloring, and construct extractors for a class of protocols which are symmetric and has a classical proof of knowledge. 

\subsection{Interactive Machines and Rewinding Oracles with Relativistic Constraints}
\label{sec:QPoKs-Defs}

\subsubsection{Execution of three interactive machines in relativistic setting.} 

In order to analyze the operation of the relativisticly constrained machines interacting as part of \cref{def:rel-sigma-protocol}, we first formalize a model for the interaction of such machines. In fact, for simplicitly we define only configurations of three machines where one machine interacts with two other machines which do not interact with one another. We argue that this model is equivalent to the setting relevant to us, due to the relativistic checking in \Cref{def:rel-sigma-protocol}. We note that this definition can easily generalize to additional parties and arbitrary interactions between the parties, but the below is sufficient for our purposes.

We define the output of three spacelike separated machines $\mathsf{M_0}, \mathsf{M_1}$ and $\mathsf{M_2}$ as follows, where the action of each machine $M_i$ on round $j$ is exactly determined by circuits $M_{i\eta c,j}$ for security parameter $\eta$ and input $c$. We define their execution below
$$\langle \mathsf{M_0}(x,\ket{\Phi}),  \mathsf{M_1}(y,\ket{\Phi}),  \mathsf{M_2}(z,\ket{\Phi})\rangle$$ by the following process:
\begin{enumerate}
    \item Initialize quantum registers $S_0,S_1,S_2,N_1,N_2$ as $\ket{\Phi}_{S_0S_1S_2},\ket{0},\ket{0}$, respectively, where $\ket{\Phi}$ describes a potentially non-separable global system with subsystem $S_i$ local to $\mathsf{M}_i$. 
    \item For $j=1,\cdots,r$, sequentially apply: (1) $\mathsf{M}_{0\eta x,j}$ to $S_0$, $N_1$, and $N_2$ (2) $\mathsf{M}_{1\eta y,j} \otimes \mathsf{M}_{2\eta z,j}$ with each operator in the tensor $\mathsf{M}_{i\eta c,j}$ acting only on spaces $S_i$ and $N_i$.
    \item Measure $S_1,S_2,S_3$ in the computational basis. The random variable $$\langle \mathsf{M_0}(x,\ket{\Phi}), \mathsf{M_1}(y,\ket{\Phi}),  \mathsf{M_2}(z,\ket{\Phi})\rangle$$ denotes the classical outcome of the measurement. 

\end{enumerate}

We note that the operation in the second step of the above definition can be understood as $\mathsf{M}_1$ simultaneously and separately communicating with machines $\mathsf{M}_2$ and $\mathsf{M}_3$, as the first operator acts on both message registers. Similarly, the second operation in that step denotes the simulatenous response by both machines to machine $\mathsf{M}_1$. This definition of the actions of machines $\mathsf{M}_1$ and $\mathsf{M}_2$ implicitly enforces their non-communication within rounds, justified in the relativistic setting by the following logic.
During each round $j$, machines $\mathsf{M}_1$ and $\mathsf{M}_2$ are spacelike separated via an appropriate timing constraint while they apply the operations $\mathsf{M}_{1\eta x,j}$ and $\mathsf{M}_{2\eta x,j}$.
As such, these operations are restricted to the local subsystems they have access to, namely $S_i$ and $N_i$ for machine $\mathsf{M}_i$, which we can model as the application of $\mathsf{M}_{1\eta x,j}\otimes\mathsf{M}_{2\eta x,j}$.
We note that this is similar to the execution of some explicitly non-communicating $\mathsf{M}_1$ and $\mathsf{M}_2$ during each protocol step $j$.
Importantly, the inputs for the round, stored in classical registers $N_1$ and $N_2$, are utilised by the two circuits independently, indicating that one machine would not be privvy to the inputs of the other machine that it is not meant to know (i.e. that the machine $\mathsf{M}_0$ did not share with it) within the round.
We note that this guarantee would not hold between rounds, e.g. the response of $\mathsf{M}_1$ in the $j$-th round may be accessible to machine $\mathsf{M}_2$ at the $(j+1)$-th round, which reflects how this would occur in practice.
Since the time taken between the response of $\mathsf{M}_1$ to the arrival of the next round's message to machine $\mathsf{M}_2$ is sufficient for a signal to be transmitted from $\mathsf{M}_1$ to $\mathsf{M}_2$, we can make no relativistic guarantees here.
In fact, we note that in general the definition allows $\mathsf{M}_0$ to pass on communication from a party in one round to the other party in a subsequent round, which can be simulated by $\mathsf{M}_0$ copying part of the message from $N_1$ to $N_2$ and vice versa.

\subsubsection{Oracle algorithms with rewinding}
\label{sec:OraclewRewinding}

We describe a quantum oracle machine $\mathsf{A}$ with oracle access to multiple quantum interactive machines as a family of circuits, $(\mathsf{A}_{\eta x})$, operating on quantum registers $S_{\mathsf{A}}$, $N_1$, $N_2$, $S_{\mathsf{M_1}}$, $S_{\mathsf{M_2}}$. The circuit $\mathsf{A}_{\eta x}$ contains both normal gates operating only on $S_{\mathsf{A}}$ and $N_1,N_2$, and special gates $\circ_i$ and $\circ_i^{\dagger}$ which represents the activation of the oracle access to $M_i$ for $i \in \{1,2\}$. 
These special gates operate on one qubit of $S_{\mathsf{A}}$ and on the whole of $(N_i,S_{\mathsf{M_i}})$ for $i \in {1,2}$ correspondingly. The execution of $\mathsf{A}^{\mathsf{M_1}(x',\ket{\Phi}), \mathsf{M_2}(x',\ket{\Phi})}$, where $\ket{\Phi}$ is a joint state, is defined as follows:
\begin{enumerate}
    \item Initialize $S_{\mathsf{A}},N_1,N_2,S_{\mathsf{M_1}}, S_{\mathsf{M_2}}$ with $\ket{0},\ket{0},\ket{0}, \ket{\Phi}_{S_{\mathsf{M_1}}S_{\mathsf{M_2}}}$.
    \item Execute the circuit $\mathsf{A}_{\eta x}$. 
    \begin{itemize}
        \item When the gate $\circ_1$ is to be applied on $C,N_1,S_{\mathsf{M_1}}$ where $C$ is a (control) qubit in $S_{\mathsf{A}}$, apply the unitary transformation $U$ defined by
        \begin{align*}
        U(\ket{0}\otimes\ket{\psi}_{N_1}\otimes\ket{\phi}_{S_{\mathsf{M_1}}S_{\mathsf{M_2}}}) &:= \ket{0}\otimes\ket{\psi}_{N_1}\otimes\ket{\phi}_{S_{\mathsf{M_1}}S_{\mathsf{M_2}}}\\
        U(\ket{1}\otimes\ket{\psi}_{N_1}\otimes\ket{\phi}_{S_{\mathsf{M_1}}S_{\mathsf{M_2}}}) &:= \ket{1}\otimes \mathsf{M}_{1 \eta x',j}\left(\ket{\psi}_{N_1}\otimes\ket{\phi}_{S_{\mathsf{M}_1}S_{\mathsf{M_2}}}\right),
        \end{align*}
        where $\mathsf{M}_{1 \eta x',j}$ is only applied on registers $N_1$ and $S_{\mathsf{M}_1}$. Intuitively, this represents an application of the oracle given to $\mathsf{A}$ and it can choose whether to activate the oracle access. We note that if $N_1$ is a classical register, we can write $\mathsf{M}_{1\eta x',j}=\sum_{x}\dyad{x}_{N_1}\otimes \mathsf{M}^x_{1\eta x',j}$, where $\mathsf{M}^x_{1\eta x',j}$ is a unitary operating on $S_{\mathsf{M}_1}$.
        \item $\circ_1^{\dagger}$ is treated analogously, except that we use $\mathsf{M}_{1 \eta x',j}^{\dagger}$ instead of $\mathsf{M}_{1 \eta x',j}$. Intuitively, this corresponds to $\mathsf{A}$ rewinding an application of $\mathsf{M}_1$ based on the control bit $C$. 
        \item $\circ_2$ and $\circ_2^{\dagger}$ are also treated analogously, except that $\mathsf{M}_{2 \eta x',j}$ and $\mathsf{M}_{2 \eta x',j}^{\dagger}$ are applied on registers $N_2$ and $S_{\mathsf{M_2}}$.
    \end{itemize}
    \item Finally, we measure $S_{\mathsf{A}}$ in the computational basis, and the random variable $\mathsf{A}^{\mathsf{M_1}(x',\ket{\Phi}), \mathsf{M_2}(x',\ket{\Phi})}$ describes the outcome of the measurement. 
\end{enumerate}

We say that the algorithm $\mathsf{A}$ is a quantum-poly-time algorithm if given $\eta$ and $x$, it can be described in deterministic time $O(|x| + \eta)$ and the circuits $\mathsf{A}_{\eta x}$ have size bounded by $O(|x|+\eta)$.

\subsection{QPoKs for Relativistic $\Sigma$-Protocols with $\delta_{SS}$-Special Soundness}\label{sec:QPoKswithSpec}

For our extractor, we generalize the construction of a canonical extractor from a special extractor due to Unruh \cite{unruh2012quantum} to the two-prover, relativistic setting, noting that in this setting, our generalized construction works for $\Sigma$-protocols that have only probabilistic special soundnness (\cref{defn:dss-special-soundness}) and that it does not require strict soundness, as was required in the non-relativistic setting. At a high-level, the canonical extractor works by interfacing with the provers (over dedicated message registers) over multiple rounds. In the first round, the extractor exactly plays the role of the verifier, recording one transcript. The extractor than rewinds the second prover, and re-interacts with it to generate a second transcript. By the special soundness of the proof system, if both transcripts are accepting then the canonical extractor can use the special extractor to generate a witness from the two transcripts. We formally model the operation of the extractor below.  

\subsubsection{Canonical extractor for Relativistic $\Sigma$-Protocols:}

Let $\langle (P_1, P_2),V\rangle$ be a $\Sigma$-protocol with $\delta_{SS}$-special soundness. Let $K_0$ be the special extractor for that protocol. We use it to construct the canonical extractor $K$ for $\langle (P_1, P_2),V\rangle$. $K$ will use measurements for the sake of presentation. 

Given oracle access to any provers $( P^*_1,P^*_2)$, the extractor $K^{(P^*_1,P^*_2)(x,\ket{\Phi})}(x)$ operates in general on four quantum registers $N_1,N_2,S_{P^*_1},S_{P^*_2}$, as presented in \Cref{sec:OraclewRewinding}. 
We leave out $S_K$ for brevity since the extractor only needs to select the challenges.
$S_{P^*_1},S_{P^*_2}$ are internal quantum registers of the respective systems, and $N_1$ and $N_2$ are used for communication between the extractor and $P^*_1$ and $P^*_2$, respectively. 
For clarity, we shall later expand $N_1=N_1N_3$ and $N_2=N_2N_2'N_4N_4'$, where $N_1$ is the message from $K$ to $P^*_1$, and $N_3$ is the response, while $N_2$ and $N_2'$ are the two challenges from $K$ to $P^*_2$, with $N_4$ and $N_4'$ storing the respective responses.
We assume that these communication registers are classical (as only classical message are sent between prover and verifier in practice), and generate them only when we need them.
Registers $S_{P^*_1},S_{P^*_2}$ are initialized to a quantum state $\ket{\Phi}_{S_{P^*_1},S_{P^*_2}}$. Let $P^*_{i\eta x}$ denote the unitary transformation describing a single activation of $P^*_i$. We now show how $K$ works:
\begin{enumerate}
    \item The extractor $K$ randomly selects a query $rand\in\mathcal{R}$ and prepares it in register $N_1$. Then, it activates $P_1^*$, taking as input $N_1$, and operating on its subsystem $S_{P_1^*}$, i.e. applying $\tilde{P}^*_{1\eta x}=\sum_{rand\in\mathcal{R}}\dyad{rand}_{N_1}\otimes P^{rand*}_{1\eta x}$ to the joint state
       \begin{align*}
        &\tilde{P}^*_{1\eta x}\left(\frac{1}{\abs{\mathcal{R}}}\sum_{rand\in\mathcal{R}}\dyad{rand}_{N_1}\otimes\dyad{\Phi}_{S_{P^*_1},S_{P^*_2}}\right)\\
        &=\frac{1}{\abs{\mathcal{R}}}\sum_{rand\in\mathcal{R}}\dyad{rand}_{N_1}\otimes P^{rand*}_{1\eta x}\dyad{\Phi}_{S_{P^*_1},S_{P^*_2}}\left(P^{rand*}_{1\eta x}\right)^{\dagger}.
    \end{align*}
    
    \item Now $S_{P_1^*}$ is expected to contain the commitment, which $P_1^*$ would in general measure (perhaps measuring only a subsystem of $P_1^*$) to generate the commitment $com$ to send to the verifier. The commitment output is recorded in register $N_3$, taking the value $com$ with probability $p_{com}:= |\alpha_{com}|^2$.
    Let the post-measurement state shared between $P^*_1$, and $P^*_2$ corresponding to query $rand$ and measurement outcome $com$ be $\sigma_{S_{P^*_1},S_{P^*_2}}^{rand,com}$. Mathematically, we can represent the ensemble $\{(p_{com}, \sigma_{S_{P^*_1},S_{P^*_2}}^{com})\}_{com}$ using the following density operator
    \begin{equation*}
        \rho_2=\sum_{rand,com}\frac{p_{com}}{\abs{\mathcal{R}}}\dyad{rand,com}_{N_1N_3}\otimes\sigma_{S_{P_1^*}S_{P_2^*}}^{rand,com}.
    \end{equation*}
    Moreover, we can give a mathematical expression for $\sigma_{S_{P^*_1},S_{P^*_2}}^{com,rand}$,
    \begin{equation}
    \label{eq:sigma}
    \sigma_{S_{P_1^*}S_{P_2^*}}^{rand,com}=W_{com}P_{1\eta x}^{rand*}\dyad{\Phi}_{S_{P_1^{*}}S_{P_2^{*}}}P_{1\eta x}^{rand*\dagger}W_{com},
    \end{equation}
    where $W_{com}$ is a projector representing the measurement performed by $P_1^*$ to generate the $com$ commitment stored in $N_3$.
    
    \item The extractor $K$ chooses $ch,ch' \in C_{\eta x}$ uniformly at random with $ch\neq ch'$, where the size of the challenge space is $|C_{\eta x}| = c$, and stores the challenges in $N_2$ and $N_2'$. 
    The prover in general can receive the first challenge $ch$, and apply a general map $\tilde{P}^*_{2\eta x}=\sum_{ch}\dyad{ch}_{N_2}\otimes P^{ch*}_{2\eta x}$, where $P^{ch*}_{2\eta x}$ applies to the internal register $S_{P^*_2}$. We note that some part of $S_{P^*_2}$ contains the response. 
    Applying these maps, we can write the ensemble of states as
    \begin{align*}
        \rho_3=&\sum_{\substack{rand,com,\\ ch\neq ch'}}\frac{p_{com}}{\abs{\mathcal{R}}c(c-1)}\ketbra{rand,com,ch,ch'}{\cdot}_{N_1N_3N_2N_2'}\otimes P^{ch*}_{2\eta x}\sigma_{S_{P_1^*}S_{P_2^*}}^{rand,com}P^{ch*\dagger}_{2\eta x},
    \end{align*}
    where we label $\dyad{x}_{X}=\ketbra{x}{\cdot}_{X}$ for brevity.
    \item Now $S_{P_2^*}$ is expected to contain the response $resp$ to $ch$. $P_2^*$ then measures $S_{P_2^*}$ (or perhaps a subsystem in $S_{P_2^*}$) in the computational basis, and denote the result as $resp$, and store the measurement result in a new register $N_4$, i.e. measurement $\mathcal{M}(\rho)=\sum_{resp}\dyad{resp}_{N_4}\otimes W_{resp}\rho W_{resp}$, where $W_{resp}$ is a projector applied on $S_{P_2^*}$ (or a subsystem of it). The post-measurement state is then 
     \begin{align*}
        &\rho_4 =\sum_{\substack{ch\neq ch',\\rand,com\\resp}}\hspace{-0.1in}\frac{p_{com}}{\abs{\mathcal{R}}c(c-1)}\ketbra{rand,com,ch,ch',resp}{\cdot}_{N_1N_3N_2N_2'N_4}\otimes W_{resp}P^{ch*}_{2\eta x}\sigma_{S_{P_1^*}S_{P_2^*}}^{rand,com}P^{ch*\dagger}_{2\eta x}W_{resp}
    \end{align*}
    
    \item The extractor $K$ applies $\tilde{P}^{*\dagger}_{2\eta x}$ for the rewinding of $P_2^*$. After the rewinding, we get the quantum state
    \begin{align*}
        \rho_5&=\tilde{P}^{*\dagger}_{2\eta x} \rho_4\tilde{P}^{*}_{2\eta x}=\hspace{-0.1in} \sum_{\substack{ch\neq ch'\\ rand,com\\resp}}\hspace{-0.1in} \frac{p_{com}}{\abs{\mathcal{R}}c(c-1)}\ketbra{rand,com,ch,ch',resp}{\cdot}_{N_1N_3N_2N_2'N_4}\\
        &\otimes P^{ch*\dagger}_{2\eta x}W_{resp}P^{ch*}_{2\eta x}\sigma_{S_{P_1^*}S_{P_2^*}}^{rand,com}P^{ch*\dagger}_{2\eta x}W_{resp}P^{ch*}_{2\eta x}.
    \end{align*}
    
    \item Finally, the extractor $K$ triggers the malicious prover $P_2^*$ to apply $\tilde{P}^*_{2\eta x,p}=\sum_{ch}\dyad{ch}_{N_2'}\otimes P^{ch*}_{2\eta x}$ to $N_2'$, $S_{P^*_2}$, with the response being part of $S_{P^*_2}$.
    The subsystem $S_{P^*_2}$ is then measured and the response is stored in register $N_4'$. 
    
    We can write the ensemble as 
    \begin{align*}
        \rho_6&=\hspace{-0.1in} \sum_{\substack{ch\neq ch'\\rand,com\\ resp,resp'}}\hspace{-0.1in}\frac{p_{com}}{\abs{\mathcal{R}}c(c-1)}\ketbra{rand,com,ch,ch',resp,resp'}{\cdot}_{N_1N_3N_2N_2'N_4N_4'}\\
        &\otimes W_{resp'}P^{ch'*}_{2\eta x}P^{ch*\dagger}_{2\eta x}W_{resp}P^{ch*}_{2\eta x}\sigma_{S_{P_1^*}S_{P_2^*}}^{rand,com}P^{ch*\dagger}_{2\eta x}W_{resp}P^{ch*}_{2\eta x}P^{ch'*\dagger}_{2\eta x}W_{resp'},
    \end{align*}
    
    \item $K$ outputs $w := K_0(x, rand, com, ch, resp, ch', resp')$, where the classical values of the inputs can be obtained from classical registers $N_1N_3N_2N_2'N_4N_4'$.
\end{enumerate}

\subsubsection{Lower Bound on the Knowledge Error}
We now proceed to prove the following bound on the knowledge error of the extractor extractor defined above, establishing the existence of QPoKs for the considered class of protocols. 
\begin{theorem} \label{thm:sig-QPoK}
    Let $\langle (P_1,P_2),V \rangle$ be a relativistic $\Sigma$-protocol for a relation $R$ with challenge space $C_{\eta x}$. Assume that $\langle (P_1,P_2),V \rangle$ has $\delta_{SS}$-special soundness with $\delta_{SS} < \frac{1}{p(\eta)},$ where $p(\eta)$ represents some bounded polynomial of $\eta$. Then, $\langle (P_1,P_2),V \rangle$ has a quantum proof of knowledge with knowledge error $\frac{1}{c}$ where $c = |C_{\eta x}|$.
\end{theorem}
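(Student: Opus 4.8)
The plan is to analyze the canonical extractor $K$ defined above and lower-bound the probability that it outputs a valid witness in terms of the prover's acceptance probability. First I would fix any malicious prover strategy $(P_1^*,P_2^*)$ together with the initial joint state $\ket{\Phi}$, and let $V_{acc}$ denote the probability that the honest verifier accepts in a single interaction. Following the structure of $\rho_2,\dots,\rho_6$ above, I would observe that after the first prover is run and the commitment $com$ is measured, the residual state $\sigma^{rand,com}_{S_{P_1^*}S_{P_2^*}}$ is fixed. Conditioned on $(rand,com)$, running $P_2^*$ on a uniformly random challenge $ch$ and measuring the response induces, for each $(rand,com)$, a family of projectors $\{W^{resp}_{ch}\}$ on $S_{P_2^*}$; composing with the verification predicate gives, for each challenge $ch$, a projector $W_{ch}$ onto ``accepting responses'' (the sum over accepting $resp$ of $W_{resp}$ after conjugating by $P^{ch*}_{2\eta x}$). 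The key quantity is $F_1(rand,com) := \frac{1}{c}\sum_{ch} \Tr[W_{ch}\,\sigma^{rand,com}]$, whose average over $rand,com$ (weighted by $p_{com}/|\mathcal R|$) is exactly $V_{acc}$.

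\textbf{Main steps.} The heart of the argument is to apply \Cref{thm:cha_thm1} (the Chailloux--Leverrier consecutive measurement bound) with $c = |C_{\eta x}|$ projectors $\{W_{ch}\}_{ch \in C_{\eta x}}$ to the state $\sigma^{rand,com}$: this gives that the probability $F_2(rand,com)$ that \emph{two} consecutively-obtained transcripts (with distinct challenges $ch \neq ch'$, as $K$ enforces in step~3) are \emph{both} accepting satisfies $F_2(rand,com) \geq \frac{1}{64|S|}(F_1(rand,com) - \frac1c)^3$ whenever $F_1(rand,com) \geq \frac1c$. I would then average over $(rand,com)$ and use convexity of $x \mapsto (x - \frac1c)^3_+$ (Jensen, after discarding the terms with $F_1 < \frac1c$, which only helps) to get $\mathbb{E}_{rand,com}[F_2] \geq \frac{1}{64|S|}(V_{acc} - \frac1c)^3$, valid when $V_{acc} > \frac1c$. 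Next, I would invoke $\delta_{SS}$-special soundness: conditioned on the two transcripts $(rand,com,ch,resp)$ and $(rand,com,ch',resp')$ both being accepting with $ch \neq ch'$, the special extractor $K_0$ outputs a valid witness with probability at least $1 - \delta_{SS}$. Hence the extractor's overall success probability is at least $(1-\delta_{SS})\cdot \frac{1}{64|S|}(V_{acc} - \frac1c)^3$. Since $\delta_{SS} < 1/p(\eta)$, the factor $(1-\delta_{SS})$ is at least a constant (in fact $1 - o(1)$), and I would fold the $\frac{1}{64|S|}$ (polynomial in $\eta$, as the challenge spaces are efficiently sampleable hence of polynomial description, and $|S|$ is polynomial in the relevant protocols) together with this constant into the polynomial $p$ and set the exponent $d = 3$ in \Cref{def:QPoK}. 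This establishes knowledge error $\kappa = \frac1c$.

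\textbf{Subtleties to address.} Two points need care. First, the extractor as written samples $ch \neq ch'$ without replacement, while \Cref{thm:cha_thm1} as stated sums over all ordered pairs $i,j$ including $i=j$; I would note that the $i = j$ (equal-challenge) terms are exactly the ones \emph{excluded} by $K$, so I must restrict to the $c(c-1)$ off-diagonal pairs and check that the bound still holds with the normalization $\frac{1}{c(c-1)}$ used in $F_2$ --- this is precisely the form of $F_2$ in \Cref{thm:cha_thm1}, so the match is direct, but I would spell out that the probability $K$ succeeds equals (a suitable expectation of) this $F_2$. Second, I must confirm that the non-commuting structure is handled correctly: the state fed to the second run of $P_2^*$ in $\rho_5$ is $P^{ch*\dagger}_{2\eta x}W_{resp}P^{ch*}_{2\eta x}\sigma P^{ch*\dagger}_{2\eta x}W_{resp}P^{ch*}_{2\eta x}$, i.e.\ the post-measurement-then-rewound state, so that the event ``second transcript accepting'' corresponds to $\Tr[W_{ch'} \cdot (\text{that state})]$ summed over accepting $resp'$ --- exactly the $\Tr[W_jW_i\sigma W_i]$ pattern in \Cref{thm:cha_thm1} with the identification $W_i \leftrightarrow W_{ch}$, $W_j \leftrightarrow W_{ch'}$.

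\textbf{Expected main obstacle.} The main obstacle is bookkeeping rather than a deep new idea: carefully defining the ``accepting'' projectors $W_{ch}$ so that (i) they are genuine orthogonal projectors decomposable as $W_{ch} = \sum_{s} W_{ch}^s$ in the sense required by \Cref{thm:cha_thm1} (this requires absorbing the unitary $P^{ch*}_{2\eta x}$ and the computational-basis response measurement into a single projective measurement on $S_{P_2^*}$, using that the response measurement is projective and commutes appropriately), (ii) the single-shot acceptance probability of the real protocol equals $\mathbb{E}_{rand,com}[F_1(rand,com)]$ exactly, and (iii) the extractor's output-a-valid-witness probability is bounded below by $(1-\delta_{SS})\mathbb{E}_{rand,com}[F_2(rand,com)]$. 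Once these identifications are pinned down, the quantitative bound follows mechanically from \Cref{thm:cha_thm1}, Jensen's inequality, and the definition of $\delta_{SS}$-special soundness.
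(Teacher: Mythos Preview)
Your proposal is correct and follows essentially the same approach as the paper: both define the conjugated projectors $W^{'ch}_{resp} = P^{ch*\dagger}_{2\eta x}W_{resp}P^{ch*}_{2\eta x}$, identify the extractor's two-transcript success probability with the $F_2$ quantity of \Cref{thm:cha_thm1}, apply that theorem pointwise in $(rand,com)$, average via Jensen's inequality using convexity of $(x-\tfrac{1}{c})^3$ on $[\tfrac{1}{c},\infty)$, and multiply by $(1-\delta_{SS})$ from special soundness. The subtleties you flag (off-diagonal normalization $\tfrac{1}{c(c-1)}$, the rewound-state pattern matching $\Tr[W_jW_i\sigma W_i]$, and absorbing $|S|=|\textsf{RA}_{max}|$ into the polynomial $p$) are exactly the bookkeeping the paper carries out.
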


\begin{proof}
We first note that since $\langle P,V\rangle$ has $\delta_{SS}$-special soundness, if both responses with $ch \neq ch'$ are acceptable, the extractor $K$ will extract a valid witness with probability at least $1-\delta_{SS}$. 
We also note that to prove knowledge error of $\frac{1}{c}$, we begin with the assumption that $\Pr[V\,accepts]>\frac{1}{c}$.
Define $$\textsf{RA}^{rand,ch}_{com} := \{resp: (rand,com,ch,resp)\text{ is an acceptable instance}\},$$ 
then we have 
\begin{align*}
    &\Pr_{rand, com}[K^{\langle P^*_1,P^*_2\rangle(x,\ket{\Phi})}(x) \text{ outputs a valid witness}] \\
    \geq &\sum_{\substack{rand,com\\ ch,ch'}}\Tr[\left(\ketbra{rand,com,ch,ch'}{\cdot}_{N_1N_3N_2'N_2'}\otimes\Pi_{N_4N_4'}^{\substack{resp\in \textsf{RA}^{rand,ch}_{com}\\resp'\in \textsf{RA}^{rand,ch'}_{com}}}\right)\rho_6]\times\left( 1-\delta_{SS}\right)\\
    =&\left( 1-\delta_{SS}\right) \hspace{-0.1in}\sum_{rand,com}\hspace{-0.05in}\frac{p_{com}}{\abs{\mathcal{R}}}\left\{\frac{1}{c(c-1)}\hspace{-0.3in}\sum_{\substack{ch,ch',\\ ch\neq ch'\\ resp\in\textsf{RA}^{rand,ch}_{com}\\ resp'\in\textsf{RA}^{rand,ch'}_{com}}}\hspace{-0.3in}\Tr[W^{'ch'}_{resp'}W^{'ch}_{resp}\sigma_{S_{P_1^*}S_{P_2^*}}^{rand,com}W^{'ch}_{resp}W^{'ch'}_{resp'}]\right\}
\end{align*} 
where we check the recorded $resp$ and $resp'$ outcome stored in $N_4$ and $N_4'$ and $\Pi_{N_4N_4'}^{\substack{resp\in \textsf{RA}^{rand,ch}_{com}\\resp'\in \textsf{RA}^{rand,ch'}_{com}}}$ is a projector onto cases where $resp$ and $resp'$ belong to their corresponding accepting sets.
We also define $W^{'ch}_{resp} = P^{ch*\dagger}_{2\eta x}W_{resp}P^{ch*}_{2\eta x}$. 
Since $W_{resp}$ is an orthogonal projector and $ P^{ch*}_{2\eta x}$ is an unitary, $W^{'ch}_{resp}$ is also an orthogonal projector. 

We now consider a honest verifier $V$. The probability that it accepts the response from provers is
\begin{align*}
    &\Pr[V \text{ accepts}] \\
    &=\sum_{rand,com} \frac{p_{com}}{\abs{\mathcal{R}}} \left(\frac{1}{c} \hspace{-0.1in}\sum_{\substack{ch\\ resp \in  \textsf{RA}^{rand,ch}_{com}}} \hspace{-0.3in}\Tr[W_{resp}(P^{ch*}_{2\eta x})\sigma^{rand,com}_{S_{P_1^*}S_{P_2^*}} P^{ch*\dagger}_{2\eta x}W_{resp}]\right)\\
    &=\sum_{rand,com} \frac{p_{com}}{\abs{\mathcal{R}}} \left(\frac{1}{c} \hspace{-0.1in}\sum_{\substack{ch\\ resp \in  \textsf{RA}^{rand,ch}_{com}}} \hspace{-0.3in}\Tr[W_{resp}^{'ch}\sigma^{rand,com}_{S_{P_1^*}S_{P_2^*}}]\right)
\end{align*}     
By \Cref{thm:cha_thm1}, 
\begin{align*}
    &\Pr[K^{(P^*_1,P^*_2)(x,\ket{\Phi})}(x) \text{ outputs a valid witness}] \\
    \geq& \left( 1-\delta_{SS}\right) \sum_{\substack{rand\\ com}}\frac{p_{com}}{\abs{\mathcal{R}}}\left\{\frac{1}{64\abs{\textsf{RA}^{rand,ch}_{com}}}\left[\left(\frac{1}{c} \hspace{-0.1in}\sum_{\substack{ch\\ resp \in  \textsf{RA}^{rand,ch}_{com}}} \hspace{-0.3in}\Tr[W_{resp}^{'ch}\sigma^{rand,com}_{S_{P_1^*}S_{P_2^*}}]\right)-\frac{1}{c}\right]^3\right\}\\
    \geq& \left( 1-\delta_{SS}\right) \sum_{\substack{rand\\ com}}\frac{p_{com}}{\abs{\mathcal{R}}}\left\{\frac{1}{64\abs{\textsf{RA}_{max}}}\left[\left(\frac{1}{c} \hspace{-0.1in}\sum_{\substack{ch\\ resp \in  \textsf{RA}^{rand,ch}_{com}}} \hspace{-0.3in}\Tr[W_{resp}^{'ch}\sigma^{rand,com}_{S_{P_1^*}S_{P_2^*}}]\right)-\frac{1}{c}\right]^3\right\}\\
    \geq& \frac{1-\delta_{SS}}{64\abs{\textsf{RA}_{max}}}\left(\Pr[V\,accepts]-\frac{1}{c}\right)^3
\end{align*} 
where $\abs{\textsf{RA}_{max}}=\max_{rand,com,ch}\abs{\textsf{RA}^{rand,ch}_{resp}}$, and the final inequality comes from Jensen's inequality \cite{jensen1906fonctions}, the fact that $f(x)=(x-a)^3$ is convex on $x\in[a,\infty)$ and that $\Pr[V\,accepts]> \frac{1}{c}$. 
It naturally follows from \Cref{def:QPoK} that $\langle P,V \rangle$ has a quantum proof of knowledge with knowledge error $\frac{1}{c}$, concluding the proof. \qed
\end{proof}

\subsection{Relativistic QPoKs for NP from $\epsilon_{FB}$-Fairly-Binding Commitment Schemes}
\label{sec:QPoKs-NP}
We extend our results now by giving a construction of a $\Sigma$-protocol for Hamiltonian cycle which achieves special soundness with a negligible $\delta_{SS}$ using a general class of restricted relativistic commitment schemes, i.e. the $\epsilon_{FB}$-fairly-binding bit commitment schemes defined in \Cref{def:fairly-binding}. This protocol is therefore also a QPoK, but from a potentially more general class of commitments. As a corollary, we obtain QPoKs from such commitment schemes for all of NP. In \cite{fehr2016composition}, the authors show that the $\mathbb{F}_Q$-commitment schemes are also $\epsilon_{FB}$-fairly-binding. Therefore, all the existing relativistic sigma protocols that are designed using relativistic $\mathbb{F}_Q$-commitment scheme are quantum proofs of knowledge with knowledge error $\frac{1}{2}$.
 
\begin{protocol}{Relativistic Proof System for Hamiltonian Cycle \label{protocol:general-hc}}
\vspace{0.2mm}
$P_1$ and $P_2$ pre-agree on a Hamiltonian cycle $C$ of a given graph $G=(\mathcal{V},H)$, where  $\mathcal{V}$ is the vertex set, $H$ is the edge set, and a random permutation $\Pi$. Define $M$ to be an adjacency matrix of $\Pi(G)$, and $C':= \{ \Pi(i), \Pi(j): (i,j) \in C\}$. Denote the strategy that $P_1,P_2$ pre-agree on as  $(\textsf{Com},\textsf{Open})$, which utilizes an $\epsilon_{FB}$-fairly-binding commitment scheme for $\epsilon_{FB} < \frac{1}{p(\eta)}$ for some security parameter $\eta$.
\begin{enumerate}
    \item \text{[}Commitment to each bit of $M_{\Pi(G)}$.\text{]} $P_1$ first receives $rand$ from $V$. Then, $P_1$ uses $\textsf{Com}$ to commit to $\Pi$ and each entry of $M$ and sends the resulting commitments $com$ to $V$.
    \item \text{[}Challenge phase.\text{]} $V$ sends a uniform random bit $ch  \in\{0,1\}$ to $P_2$.
    \begin{itemize}
        \item If $ch=0, P_2$ sends $resp$ to $V$, which opens the commitments to $\Pi$ and $M$.
        \item If $ch=1, P_2$ sends $resp'$ to $V$, which only opens the commitments to $C'$ and to all $M_{i,j}$ with $(i,j) \in C'$.
    \end{itemize}
    \item \text{[}Check phase.\text{]} $V$ checks that those decommitments are valid and correspond to what $P_1$ have declared. $V$ also checks that the timing constraint of the bit commitment is satisfied. This means that
    \begin{itemize}
        \item If $ch=0$,  $V$ checks that the commitments $com$ are opened correctly, i.e. $\Pi$ is a permutation, and that $M$ is the adjacency matrix of $\Pi(G)$. 
        \item If $ch=1$, $V$ also checks that the commitments $com$ are opened correctly but in a different way, i.e. $C'$ is a Hamiltonian cycle, that exactly $M_{i,j}$ with $(i,j) \in C'$ are opened correctly and that $M_{i,j}=1 \ \forall (i,j) \in C'$.
    \end{itemize}
    If all checks passed, $V$ outputs $1$.
\end{enumerate}
\end{protocol}

\begin{Proposition}\label{prop:SpecSound-HC}
    Consider Protocol \ref{protocol:general-hc} with fairly-binding parameter $\epsilon_{FB}<\frac{1}{p(\eta)}$ and assume it has winning probability $\Pr[V\,accepts]>\frac{1}{c}+\xi$, where $\xi$ is any negligible quantity in $\eta$.
    The protocol is a relativistic $\Sigma$-Protocol with special soundness $\delta_{SS} <\frac{1}{p(\eta)}$. 
\end{Proposition}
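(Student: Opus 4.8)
The plan is to exhibit an explicit special extractor $K_0$, show it succeeds whenever the two accepting transcripts never open a commitment to two different values, and then bound the probability of such a ``double opening'' using the $\epsilon_{FB}$-fairly-binding property of the underlying scheme. That Protocol~\ref{protocol:general-hc} is a relativistic $\Sigma$-protocol in the sense of \Cref{def:rel-sigma-protocol} is immediate by inspection --- four messages $(rand,com),(ch,resp)$ with binary challenge space and the relativistic timing checks folded into the Check phase --- so the real content is the $\delta_{SS}$ bound.

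For the extractor: given accepting transcripts $(rand,com,ch,resp)$ and $(rand,com,ch',resp')$ with $ch\neq ch'$, the challenge space is $\{0,1\}$, so $\{ch,ch'\}=\{0,1\}$. Relabel so that the $ch=0$ transcript reveals (as openings of $com$) a permutation $\Pi$ of $\mathcal V$ together with a full matrix $\widetilde M$, and the $ch=1$ transcript reveals a set $C'$ of $|\mathcal V|$ vertex-pairs (the positions of $M$ whose commitments are opened) together with values $\widehat M_{i,j}$ for $(i,j)\in C'$. Define $K_0$ to output $w:=\Pi^{-1}(C')=\{(\Pi^{-1}(i),\Pi^{-1}(j)):(i,j)\in C'\}$. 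I would then prove correctness on the ``good'' event $\mathcal G$ that $\widetilde M_{i,j}=\widehat M_{i,j}$ for every $(i,j)\in C'$: acceptance of the $ch=0$ transcript forces $\Pi$ to be a permutation and $\widetilde M$ to be the adjacency matrix of $\Pi(G)$, while acceptance of the $ch=1$ transcript forces $C'$ to be a Hamiltonian cycle on $\mathcal V$ with $\widehat M_{i,j}=1$ for all $(i,j)\in C'$. On $\mathcal G$ each edge of $C'$ then has $\widetilde M_{i,j}=1$, hence lies in $\Pi(G)$, so $C'$ is a Hamiltonian cycle of $\Pi(G)$ and $\Pi^{-1}(C')$ is a Hamiltonian cycle of $G$, i.e.\ $(x,w)\in R$ for $x=G$. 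Observe that $\Pi$ is only read off the $ch=0$ transcript, so no cross-transcript consistency is needed for the permutation commitments --- only the $M$-entries matter.

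It remains to bound $\Pr[\neg\mathcal G\mid rand,com,acc]$. If some entry $M_{i,j}$ is opened to $1$ in the $ch=1$ transcript (as $V$ requires whenever $(i,j)\in C'$) and to $0$ in the $ch=0$ transcript, with both openings accepted, this is exactly the event that the malicious senders successfully reveal that bit commitment to both $0$ and $1$; by \Cref{def:fairly-binding} it occurs with probability at most $\epsilon_{FB}$, for each of the at most $|\mathcal V|^2$ committed entries of $M$. A union bound gives $\Pr[\neg\mathcal G\wedge acc\mid rand,com]\le|\mathcal V|^2\,\epsilon_{FB}$, and since $|\mathcal V|$ is polynomially bounded in $\eta$ (it is part of the input) and $\epsilon_{FB}<1/p(\eta)$ for a sufficiently large polynomial $p$, this quantity remains below $1/p'(\eta)$ for the polynomial we need. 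Dividing by $\Pr[acc\mid rand,com]$ --- which the hypothesis $\Pr[V\,accepts]>\frac1c+\xi$ is present precisely to keep non-negligible in the regime of interest --- then yields the claimed $\delta_{SS}<1/p(\eta)$, and \Cref{defn:dss-special-soundness} is satisfied.

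The main obstacle I anticipate is exactly this last step's handling of the conditioning on $acc$ uniformly over all $(rand,com)$: the fairly-binding guarantee is most naturally a bound on the joint quantity $\Pr[\neg\mathcal G\wedge acc]$, whereas \Cref{defn:dss-special-soundness} is phrased conditionally, so one must either invoke the winning-probability gap to lower-bound $\Pr[acc\mid rand,com]$ (or argue that the $(rand,com)$ pairs with tiny accept probability contribute negligibly to the statement actually used in \Cref{thm:sig-QPoK}) or re-route the estimate through that averaged quantity. A secondary subtlety is verifying that the per-bit fairly-binding inequality genuinely applies with the portion of $rand$ and the realized $com$ attached to each of the polynomially many commitments in Protocol~\ref{protocol:general-hc}; here one appeals to the fact, from \cite{fehr2016composition}, that the $\mathbb F_Q$-style relativistic schemes are fairly-binding bit commitments in exactly this per-commitment sense.
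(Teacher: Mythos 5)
Your proposal follows essentially the same route as the paper's proof: the same special extractor $w := \Pi^{-1}(C')$, the same reduction of the extraction--failure event to a double opening of some $M$-entry commitment, and the same appeal to the $\epsilon_{FB}$-fairly-binding property of \Cref{def:fairly-binding}. One place where you are more careful than the paper's write-up is the union bound over the $|\mathcal{V}|^2$ committed entries: the paper simply asserts that failure implies a conflicting opening at \emph{some} edge $(u,v)\in C'$ and then bounds that single-edge event by $\epsilon_{FB}$ without aggregating over which edge conflicts; your version makes the polynomial blow-up explicit and notes it is absorbed by $\epsilon_{FB}<1/p(\eta)$. Both arrive at the same conclusion.

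The step you flag as an obstacle --- converting the joint bound $\Pr[\neg\mathcal{G}\wedge acc\mid rand,com]\le \text{poly}(\eta)\,\epsilon_{FB}$ into a conditional one --- is exactly where the paper finishes the argument, and you correctly guess the mechanism: the paper divides by $p_{acc}$ to get $\delta_{SS}=\epsilon_{FB}/p_{acc}$, and then invokes \Cref{thm:cha_thm1} (the consecutive-measurement lower bound) to lower-bound $p_{acc}\ge \frac{1}{64|\textsf{RA}_{max}|}\bigl(\Pr[V\,accepts]-\frac{1}{c}\bigr)^3$. With the hypothesis $\Pr[V\,accepts]>\frac{1}{c}+\xi$ this yields $\delta_{SS}\le 64|\textsf{RA}_{max}|\,\xi^{-3}\,\epsilon_{FB}$, which stays below $1/p(\eta)$ for suitably small $\epsilon_{FB}$. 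So the only genuinely missing piece in your proposal is that final appeal to the consecutive-measurement theorem to lower-bound the probability of obtaining two accepting transcripts from the rewinding --- a piece you correctly anticipated would be needed but did not carry out. Your secondary worry about $(rand,com)$-uniformity is a real subtlety, but it is one the paper's proof shares rather than resolves more cleanly than you suggest; the statement used downstream in \Cref{thm:sig-QPoK} only needs the averaged version.
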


\begin{proof}
Fix any value $rand$, commitment $com$. Suppose $(rand,com,ch,resp)$ and $(rand,com,ch',resp')$ are two accepting conversations for a given graph $G$ with $ch=0$ and $ch'=1$. For $ch=0$, the response $resp$ is a permutation $\Pi$, the corresponding adjacency matrix $M_{i,j}$ along with opening information $A$  which can be used to check the correctness of the commitments. For $ch'=1$, the response $resp'$ is a cycle $C'$ on the permuted graph $\Pi(G)$ and corresponding $A'$ to check that the commitments to each $M_{u,v}=1$ for all $(u,v) \in C'$ are opened correctly. 

We first note that $C'$ is a Hamiltonian cycle of $\Pi(G)$ iff $\Pi^{-1}(C')$ is a Hamiltonian cycle of the graph $G$ since $\Pi$ is a permutation matrix. As such, let us consider an extractor that extracts witness $$w :=K_{0}(G,rand,com,ch,resp,ch',resp') := \Pi^{-1}(C').$$The probability that the extractor is successful is 
\begin{align*}
    \Pr[(x,w)\in R|rand,com,acc]=&\Pr[C'\in HC(\Pi(G))|rand,com,acc]\\
    =&1-\Pr[C'\not\in HC(\Pi(G))|rand,com,acc],
\end{align*}
where $acc$ is the event that the opened values satisfy the proper relations with respect to the input graph, irrespective of if the commitments are opened successfully or not, $HC(\Pi(G))$ represents the set of Hamiltonian cycles in graph $\Pi(G)$, and the probability is evaluated over the choice of $ch$, $ch'$, $resp$ and $resp'$ for accepting conversations. If the cycle $C'$ provided is not a Hamiltonian cycle in $\Pi(G)$, then there exists at least an edge $(u,v)\in C'$ such that $\left(M^0_{\Pi(G)}\right)_{u,v}=0$ (for $ch=0$) while opened as $\left(M^1_{\Pi(G)}\right)_{u,v}=1$ (for $ch=1$), i.e. 
\begin{align*}
&\Pr_{resp,resp'}[(x,w)\in R|rand,com,acc] \\
& \geq 1-\Pr_{resp,resp'}\left[\left(\left(M^0_{\Pi(G)}\right)_{u,v}=0\right),\left(\left(M^1_{\Pi(G)}\right)_{u,v}=1\right)\Bigg|rand,com,acc\right],
\end{align*}
where by $\left(\left(M^{ch}_{\Pi(G)}\right)_{u,v}=d\right)$ we denote the event that, upon recieving challenge $ch$, the prover produces a valid opening to the value $d$ for the commitment to the edge $(u,v)$ in $M_{\Pi(G)}$, conditioned on the existing commitments $com$.
We note that the challenge and opening value directly coincide.

We now bound the probability
\begin{align*}
    &\Pr_{resp,resp'}\left[\left(\left(M^0_{\Pi(G)}\right)_{u,v}=0\right),\left(\left(M^1_{\Pi(G)}\right)_{u,v}=1\right)\Bigg|rand,com,acc\right] \\
    & \leq \frac{1}{p_{acc}}\Pr_{resp,resp'}\left[\left(\left(M^0_{\Pi(G)}\right)_{u,v}=0\right),\left(\left(M^1_{\Pi(G)}\right)_{u,v}=1\right),acc\Bigg|rand,com\right]\\
    & \leq \frac{1}{p_{acc}}\Pr_{resp,resp'}\left[\text{Successfully reveals}\,\left(\left(M^0_{\Pi(G)}\right)_{u,v}=0\right)\land\left(\left(M^1_{\Pi(G)}\right)_{u,v}=1\right)\Bigg|rand,com\right],
\end{align*}
where $p_{acc}$ is the probability of both transcripts being accepting.
In the second inequality, we focus on the acceptance conditions that checks for the successful reveal of $(M_{\Pi(G)}^0)_{u,v}$ and $(M_{\Pi(G)}^1)_{u,v}$ while ignoring the remaining checks.

We are therefore interested in the probability that $\left(M^d_{\Pi(G)}\right)_{u,v}=d$ is successfully revealed for both $d=0$ and $d=1$.
By the assumption that our underlying commitment scheme satisfies the $\epsilon_{FB}$-fairly-binding property of \Cref{def:fairly-binding}, for any commitment strategy used by the senders to generate the committed values, the maximal probability with which they can successfully convince a receiver, in this case the verifier, to accept two distinct openings for the same commitment is bounded from above by $\epsilon_{FB}$. We note that the distributions on $resp$ from the proof system implicitly include the relevant response distributions of the underlying bit commitment for edge $(u,v)$ as a marginal. As such, we bound
\begin{equation*}
\Pr_{resp,resp'}[(x,w)\in R|rand,com,acc] \geq 1 - \frac{\epsilon_{FB}}{p_{acc}},
\end{equation*}
which implies a special soundness of $\delta_{SS}=\frac{\epsilon_{FB}}{p_{acc}}$.

Using \Cref{thm:cha_thm1}, we can bound the probability of accepting both transcripts with the single-round winning probability,
\begin{equation*}
    p_{acc}\geq \frac{1}{64\abs{\textsf{RA}_{max}}}\left(\Pr[V\,accepts]-\frac{1}{c}\right)^3.
\end{equation*}
With the assumption that the winning probability satisfy $\Pr[V\,accepts]>\frac{1}{c}+\xi$, where $\xi$ is a negligible quantity (i.e. the difference is non-neglible) and given $\epsilon_{FB}<\frac{1}{p(\eta)}$ (i.e. $\epsilon_{FB}$ is negligible in $\eta$), we have that $$\delta_{SS}\leq64\abs{\textsf{RA}_{max}}\xi^{-3}\epsilon_{FB}$$ is negligible, noting that $\abs{\textsf{RA}_{max}}$ is independent of $\eta$~\cite{chailloux2017relativistic}.
\qed
\end{proof}

With a bound on the special soundness of Protocol \ref{protocol:general-hc}, we may obtain the following corollary.
\begin{corollary} [Relativistic QPoK for Hamiltonian Cycle]\label{cor:QPoK_HC}
    Let $(G,w) \in \text{Hamiltonian-Cycle}$ iff $w$ is a Hamiltonian cycle of graph $G$. Assuming the existence of a $\epsilon_{FB}$-fairly-binding bit commitment scheme with $\epsilon_{FB} < \frac{1}{p(\eta)}$, then there exists a QPoK for Hamiltonian-Cycle with knowledge error $\frac{1}{2}+\xi$, where $\xi$ is some negligible quantity in $\eta$.
\end{corollary}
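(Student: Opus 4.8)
The plan is to chain together the two main ingredients already established in this section. First I would invoke Proposition~\ref{prop:SpecSound-HC}, which tells us that Protocol~\ref{protocol:general-hc} for Hamiltonian Cycle is a relativistic $\Sigma$-protocol (in the sense of \Cref{def:rel-sigma-protocol}) with $\delta_{SS}$-special soundness, where $\delta_{SS}\leq 64\abs{\textsf{RA}_{max}}\xi^{-3}\epsilon_{FB}$ under the standing assumption $\Pr[V\,accepts]>\tfrac12+\xi$ for non-negligible $\xi$. Since $\epsilon_{FB}<\tfrac{1}{p(\eta)}$ is negligible and $\abs{\textsf{RA}_{max}}$ does not depend on $\eta$, this $\delta_{SS}$ is itself negligible, hence certainly bounded by $\tfrac{1}{p'(\eta)}$ for some polynomial $p'$.

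Next I would apply Theorem~\ref{thm:sig-QPoK}: any relativistic $\Sigma$-protocol with $\delta_{SS}$-special soundness for $\delta_{SS}<\tfrac{1}{p'(\eta)}$ is a quantum proof of knowledge with knowledge error $\tfrac{1}{c}$, where $c=\abs{C_{\eta x}}$ is the size of the challenge space. For Protocol~\ref{protocol:general-hc} the challenge $ch$ is a uniform bit, so $c=2$ and the knowledge error is $\tfrac12$. The canonical extractor of \Cref{sec:QPoKswithSpec} instantiated with the special extractor $K_0(G,rand,com,ch,resp,ch',resp'):=\Pi^{-1}(C')$ from the proof of Proposition~\ref{prop:SpecSound-HC} then witnesses the QPoK property, with success probability at least $\frac{1-\delta_{SS}}{64\abs{\textsf{RA}_{max}}}\bigl(\Pr[V\,accepts]-\tfrac12\bigr)^3$ as shown in the proof of Theorem~\ref{thm:sig-QPoK}; absorbing $1-\delta_{SS}\ge \tfrac12$ (say) and $64\abs{\textsf{RA}_{max}}$ into the constants $p(\eta),d$ of \Cref{def:QPoK} gives exactly the required form. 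Finally, the non-triviality clause of \Cref{def:QPoK} is immediate: an honest $P_1,P_2$ holding a genuine Hamiltonian cycle $C$ always commit correctly and open correctly whether $ch=0$ or $ch=1$, so $V$ accepts with probability one; and the protocol is perfect zero-knowledge since the underlying commitment is perfectly hiding (this follows as in \cite{chailloux2017relativistic,blum1986prove}, essentially by the standard simulator that guesses the challenge).

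One subtlety I would flag carefully is the conditional phrasing of the knowledge-error statement. Theorem~\ref{thm:sig-QPoK} and Proposition~\ref{prop:SpecSound-HC} are both set up under the hypothesis $\Pr[V\,accepts]>\tfrac1c+\xi$, and the bound on $\delta_{SS}$ degrades as $\xi\to 0$. This is exactly why the corollary states a knowledge error of $\tfrac12+\xi$ rather than $\tfrac12$: for any fixed negligible slack $\xi$, provers winning with probability exceeding $\tfrac12+\xi$ admit extraction, and the extractor's success probability is polynomially related (with exponent $d=3$) to the gap $\Pr[V\,accepts]-\tfrac12$. I would make explicit that $\xi$ is universally quantified — for every negligible $\xi$ the conclusion holds — so that the statement is the natural "knowledge error $\tfrac12+\negl(\eta)$" once one fixes a concrete negligible function.

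The step I expect to require the most care is not any new estimate but the bookkeeping of where the various polynomials and the non-negligible gap $\xi$ enter, ensuring that the final bound genuinely fits the template of \Cref{def:QPoK} with an extractor that is quantum-polynomial-time in $\eta$ and the running time of $P^*$. Concretely, I would check that: (i) the canonical extractor makes only a constant number of oracle activations and rewindings, so it is poly-time given $P^*$; (ii) $\abs{\textsf{RA}_{max}}$ — the maximum number of accepting responses to a fixed $(rand,com,ch)$ — is a fixed constant for the Hamiltonian-cycle protocol (as noted in \cite{chailloux2017relativistic}, each accepting response is determined by the opened data), so it is legitimately absorbed into $p(\eta)$; and (iii) Protocol~\ref{protocol:general-hc} indeed meets every clause of \Cref{def:rel-sigma-protocol}, in particular that $rand$ lies in a field-like randomness space and $ch$ in the binary challenge space, both efficiently sampleable, and that the relativistic timing checks are as stipulated. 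Once these routine verifications are in place, the corollary follows by direct composition with no further work.
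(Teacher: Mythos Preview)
Your proposal is correct and follows essentially the same approach as the paper: invoke Proposition~\ref{prop:SpecSound-HC} to obtain negligible $\delta_{SS}$-special soundness for Protocol~\ref{protocol:general-hc} under the hypothesis $\Pr[V\,accepts]>\tfrac12+\xi$, then apply Theorem~\ref{thm:sig-QPoK} with $c=2$ to conclude the QPoK property with knowledge error $\tfrac12+\xi$. The paper's proof is in fact considerably terser than yours---it simply strings these two results together in two sentences---so your additional verifications (non-triviality, the extractor's running time, the role of $\abs{\textsf{RA}_{max}}$, and the handling of the negligible slack $\xi$) are welcome elaborations rather than deviations.
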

\begin{proof}
    Let us consider the proof of knowledge with knowledge error $\frac{1}{c}+\xi$, where $\xi$ is negligible. 
    Since $\Pr[V\,accepts]>\frac{1}{c}+\xi$ and $\epsilon_{FB}<\frac{1}{p(\eta)}$ by assumption, from \Cref{prop:SpecSound-HC}, Protocol \ref{protocol:general-hc} is a $\Sigma$-Protocol with special soundness $\delta_{SS}<\frac{1}{p(\eta)}$. As such, we can apply \Cref{thm:sig-QPoK} to arrive at the result that Protocol \ref{protocol:general-hc} is a quantum proof of knowledge for Hamiltonian-Cycle with knowledge error $\frac{1}{c}+\xi$. \qed
\end{proof}
We can further extend the result to show the corollary below.

\begin{corollary} [Relativistic QPoK for all languages in NP]
    Let $R$ be an NP-relation. Assuming the existence of $\epsilon_{FB}$-fairly-binding bit commitment scheme with $\epsilon_{FB} < \frac{1}{p(\eta)}$, there is a zero-knowledge relativistic QPoK for $R$ with knowledge error $\frac{1}{2}+\xi$ for some negligible $\xi$.
\end{corollary}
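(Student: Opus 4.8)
The plan is to reduce the general NP case to the Hamiltonian Cycle case already handled in \Cref{cor:QPoK_HC}, using the standard Karp reduction together with the fact that the reduction preserves witnesses efficiently in both directions. First I would recall that Hamiltonian Cycle is NP-complete, so for any NP-relation $R$ there is a polynomial-time computable function $f$ mapping instances $x$ to graphs $f(x)$, together with polynomial-time computable maps $g$ (sending a witness $w$ with $(x,w)\in R$ to a Hamiltonian cycle $g(w)$ of $f(x)$) and $h$ (sending a Hamiltonian cycle $C$ of $f(x)$ back to a witness $h(C)$ with $(x,h(C))\in R$). Such a pair of witness-translation maps exists for the canonical Cook--Levin/Karp chain of reductions and can be assumed without loss of generality.

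Next I would define the proof system for $R$: on public input $x$, both provers and the verifier locally compute $f(x)$, the honest provers compute the Hamiltonian cycle $g(w)$ from their private witness $w$, and then all parties run Protocol~\ref{protocol:general-hc} on the graph $f(x)$ with the pre-agreed cycle $g(w)$. Completeness and the relativistic timing checks carry over verbatim since $f$ is computed before any interaction. Zero-knowledge carries over because the simulator for the Hamiltonian Cycle protocol composed with the (public, deterministic) map $f$ yields a simulator for $R$; I would remark that Protocol~\ref{protocol:general-hc} is perfect zero-knowledge as it is built from a perfectly-hiding commitment, mirroring Blum's protocol.

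For the knowledge-extraction property, I would invoke \Cref{cor:QPoK_HC}: since the underlying $\epsilon_{FB}$-fairly-binding commitment scheme exists with $\epsilon_{FB}<\frac{1}{p(\eta)}$, Protocol~\ref{protocol:general-hc} run on $f(x)$ is a QPoK for Hamiltonian Cycle with knowledge error $\frac{1}{2}+\xi$. Given a malicious prover pair $(P_1^*,P_2^*)$ for the $R$-protocol on input $x$ making the verifier accept with probability greater than $\frac{1}{2}+\xi$, this same prover pair is, by construction, a malicious prover pair for the Hamiltonian Cycle protocol on input $f(x)$ with the same acceptance probability. Running the Hamiltonian-Cycle knowledge extractor $K_{HC}$ (with the oracle-access conventions of \Cref{sec:OraclewRewinding}, which are unchanged since the reduction only relabels the public input) yields a Hamiltonian cycle $C$ of $f(x)$ with probability at least $\frac{1}{p(\eta)}(\Pr[\text{accept}]-\frac{1}{2}-\xi)^d$; applying the polynomial-time map $h$ to $C$ produces a witness $w=h(C)$ with $(x,w)\in R$ with the same probability. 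The composed extractor $K = h\circ K_{HC}\circ f$ runs in quantum polynomial time in $\eta$ and the running time of $P^*$, so $(P_1,P_2,V)$ is a QPoK for $R$ with knowledge error $\frac{1}{2}+\xi$.

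The main obstacle is the bookkeeping around witness-preserving reductions and the precise match of the oracle-machine interface: one must verify that the extractor $K_{HC}$ from \Cref{cor:QPoK_HC} treats its input purely as a public string and does not rely on any promise about $f(x)$ beyond membership, and that the map $h$ is genuinely polynomial-time (true for the standard reductions). A minor additional point to check is that the negligible quantity $\xi$ and the polynomial $p(\eta)$ transfer cleanly under the reduction, which they do since $|f(x)|$ is polynomial in $|x|$ and the security parameter $\eta$ is unchanged.
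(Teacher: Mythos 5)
Your proposal is correct and follows essentially the same route as the paper: reduce $R$ to Hamiltonian Cycle via NP-completeness and invoke \Cref{cor:QPoK_HC}. The paper states this in a single sentence; your write-up simply supplies the (accurate) bookkeeping details about witness-preserving reductions, extractor composition, and zero-knowledge preservation that the paper leaves implicit.
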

\begin{proof}
    Since the Hamiltonian cycle problem is NP-complete, then $R$ can reduced to Hamiltonian Cycle problem in polynomial time. Then, combined with \Cref{cor:QPoK_HC}, it follows that there is a zero-knowledge QPoK for $R$ with knowledge error $\frac{1}{2} + \xi$ for some negligible $\xi$. \qed
\end{proof}

In \Cref{app:ex_sigma}, we prove that some existing protocols~\cite{chailloux2017relativistic,crepeau2023zero} for Hamiltonian Cycle and Subset-Sum which use $\mathbb{F}_Q$ commitments and have special soundness are QPoK, answering our motivating question regarding the validity of existing relativistic proof systems as QPoKs in the affirmative for these protocols.

\subsection{QPoKs for a Symmetric Relativistic ZKP Protocols without Special Soundness}
\label{sec:QPOKs-Symm}

We now consider relativistic quantum proofs of knowledge for protocols which do not have the special soundness property, and, as such, may require multiple rewindings to extract a witness. We conduct this analysis for a $3$-party relativistic zero knowledge proof system for graph 3-coloring that is proposed in \cite{crepeau2019practical}. 

To bound the knowledge error, we establish a relationship between the quantum proofs of knowledge of a $3$-prover protocol with the classical proof of knowledge of the $2$-prover relativistic zero knowledge protocol proposed in \cite{alikhani2021experimental}. More precisely, we show that if the $2$-prover relativistic ZKP protocol in \cite{alikhani2021experimental} is a classical PoK then the $3$-prover protocol in \cite{crepeau2019practical} is a quantum PoK.

We will first start with by describing the 2-prover relativistic ZKP protocol (Protocol 2) that was proposed in \cite{alikhani2021experimental,crepeau2023zero}. In Protocol 3, we then construct an extractor for that 2-prover protocol. Later we show that this is a classical PoK through a straightforward analysis. We then move to consider the 3-prover protocol (Protocol 4) that is proposed in \cite{alikhani2021experimental,crepeau2023zero}, which is a clear extension of Protocol 2. In Protocol 5 we construct an extractor for this 3-prover protocol using Protocol 3 as a subroutine. The extractor in Protocol 5 is inspired by the methods of \cite{kempe2011entangled}, which allow us to reduce the power of the entangled prover. We note that the application of that method here is non-trivial as its use in this setting is novel, and we defer the proof showing its applicability and quantifying the resultant gap in the classical and quantum knowledge errors to \Cref{sec:app-soundess}.

We reproduce the $2$-prover protocol in Protocol \ref{prot:3_prot_party} and the $3$-prover protocol in Protocol \ref{prot:3col} for convenience. We begin by considering the classical PoK of the $2$-prover protocol before turning to the quantum setting. 

\subsubsection{Classical Proof of Knowledge for the 3-Coloring Problem}

Let $\langle (P_1, P_2),V\rangle$ be the protocol as described in \cite{crepeau2019practical,alikhani2021experimental}.

\begin{protocol}{Interactive protocol for 3-COL: Two-prover\label{prot:3_prot_party}}
$P_1$ and $P_2$ pre-agree on a $3$-coloring $C_G$ of a graph $G = (\Vt,H)$, where $C_G := \{(i,c_i) |c_i \in \mathbb{F}_3\}_{i \in \Vt}$ such that if $(i,j) \in H$, then $c_i \neq c_j$. Two provers pre-share randomly selected labelings $l_i^0$ and $l_i^1$ for each $i\in \Vt$ such that $l_i^0 + l_i^1 = c_i (\textsf{mod } 3)$ holds.
\begin{enumerate}
  \item \textbf{Commit phase.}
  \begin{enumerate}
    \item
    $V$ picks $((i,j),b), ((i',j'),b')\in (H \times \mathbb{F}_2)^2$ according to distribution $\mathcal{D}_G$.
    \item
    $V$ sends $((i,j),b)$ to $P_1$ and $V$ sends $((i',j'),b')$ to $P_2$.
    \item 
    If $(i,j) \in H$ and $b \in  \mathbb{F}_2$ then $P_1$ replies $a_i=l_i^b$, $a_j=l_j^b$.
    \item 
    If $(i',j') \in H$ and $b' \in  \mathbb{F}_2$ then $P_2$ replies $a'_{i'}=l_{i'}^{b'}$, $a'_{j'}=l_{j'}^{b'}$.
  \end{enumerate}
  \item \textbf{Check Phase}
    \begin{enumerate}
    \item (Edge-Verification Test) If $(i,j)=(i',j')$ and $b\neq b'$, then $V$ accepts iff $a_i + a'_i \neq a_j + a'_j$. 
    \item (Well-Definition Test) 
    \begin{itemize}
        \item If $(i,j) = (i',j')$ and $b=b'$, then $V$ accepts iff $(a_i = a'_i) \wedge (a_j = a'_j)$.
        \item If $(i,j) \cap (i',j') = i$ and $b=b'$, then $V$ accepts iff $(a_i = a'_i)$.
        \item If $(i,j) \cap (i',j') =j$ and $b=b'$, then $V$ accepts iff $(a_j = a'_j)$.
    \end{itemize}
  \end{enumerate}
\end{enumerate}
\end{protocol}

We take the distribution of questions $\mathcal{D}_G$ similarly as in~\cite{grover2021further}. Then for the well definition test in which $e=(i, j) \in H$ and $e' \in \operatorname{Edges}(e)=\operatorname{Edges}(i) \cup \operatorname{Edges}(j)$, and for some parameter $\epsilon \in [0,1]$ we have:
\begin{equation}
\label{eq:wdt_dist}
\mathcal{D}_G\left(e, b, e^{\prime}, b\right)=\frac{1-\epsilon}{4|H|}\left(\frac{\left|\left\{e^{\prime}\right\} \cap \operatorname{Edges}(i)\right|}{|\operatorname{Edges}(i)|}+\frac{\left|\left\{e^{\prime}\right\} \cap \operatorname{Edges}(j)\right|}{|\operatorname{Edges}(j)|}\right),
\end{equation}
where $\operatorname{Edges}(v)$ refers to the set of neighbouring vertices, i.e. $\{i:(i,v)\in H\}$, and the factor of $1/4$ arises from choosing $b \in \{0,1\}$ and selecting $e'$ from the neighbors of either $i$ or $j$, each with a probability of $\frac{1}{2}$.

For the edge verification test, we have:
\begin{equation}
\label{eq:evt_dist}
\mathcal{D}_G(e, b, e, \bar{b})=\frac{\epsilon}{2|H|}+\frac{1-\epsilon}{4|H|}\left(\frac{1}{|\operatorname{Edges}(i)|}+\frac{1}{|\operatorname{Edges}(j)|}\right) \geq \frac{\epsilon}{2|H|}.
\end{equation}
We take the probability $\epsilon$ that $V$ chooses the edge verification test to be $\frac{1}{3}$.

We now construct an extractor $K$ which acts similarly to the verifier but rewinds to ask all possible questions.
    \begin{protocol}{Knowledge Extractor $K^{P^*}$ for Classical Provers in 3-Coloring}
    Input: Provers $P^*_1, P^*_2$ and the verifier receives $G=(\Vt,H)$
      \begin{enumerate}
        \item $P^*_1$ and $P^*_2$ are separated.
        \item $K$ interacts with $P^*_1$ and $P^*_2$ and queries $\left(((i, j), b),\left(\left(i^{\prime}, j^{\prime}\right), b^{\prime}\right)\right)$ chosen randomly from $ \mathcal{D}_G\left(H \times \mathbb{F}_2\right.$).
        \item $K$ receives $(a_i,a_j)$ and $(a'_{i'},a'_{j'})$ from $P^*_1$ and $P^*_2$ separately and performs either Edge Verification Test (go to step $4$) and Well-Definition Test (go to step $5$).
        \item If Edge Verification Test passes and the corresponding vertices are not marked, $K$ marks the corresponding vertices. If the test passes while the vertices are already marked, check whether conflicts exist. Suppose the coloring assignment for the marked vertex is $c_i$ ($c_j$). For such vertices there is a conflict if $a_i + a'_i \neq c_i$ ($a_j + a'_j \neq c_j$). If not, continue to step $6$. Otherwise, record both answers and leave a question mark on both vertices.
        \item If the Well-Definition Test passes, continue. Otherwise, leave a question mark on the vertex.
        \item Rewind to step $2$ to ask more questions.
        \item When all the questions are asked, check whether there exists any question mark on the vertex. If yes, $K$ marks the vertex with the most reasonable assignments based on recorded information. Then $K$ output vertices with corresponding marks.
      \end{enumerate}
    \end{protocol}
Suppose $\Pr\limits_{f_1,f_2}\left[\left\langle (P^*_1(G,f_1),P^*_2(G,f_2)), V\right\rangle=1\right]$ denotes the probability that classical provers (possibly malicious) convince the verifier, and $f_1,f_2$ denotes the strategy for $P^*_1$ and $P^*_2$. In Lemma \ref{lma:kappa}, we provide a upper bound on $$\Pr\limits_{f_1,f_2}\left[\left\langle (P^*_1(G,f_1),P^*_2(G,f_2)), V\right\rangle=1\right]$$ for a malicious prover without knowledge of a witness. 

\begin{lemma}  \label{lma:kappa}
    The maximum probability with which any dishonest provers without knowledge of a witness can pass the check phase is $1-\frac{1}{3|H|}$. 
\end{lemma}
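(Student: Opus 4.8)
The plan is to bound the success probability of any dishonest prover pair $(P_1^*, P_2^*)$ that lacks a witness by analyzing the well-definition and edge-verification tests together, exploiting the fact that a consistent set of answers across all well-definition tests would induce a well-defined coloring, and that such a coloring, if it passed all edge-verification tests, would be a valid 3-coloring — contradicting the assumption that no witness exists. First I would observe that the strategy of $P_1^*$ (resp. $P_2^*$) on the well-definition tests, being deterministic (or, by an averaging argument, WLOG deterministic) and consistent, defines for each vertex $i$ and each bit $b$ a claimed label $\ell_i^b$; passing the well-definition test on the pair $((i,j),b),((i',j'),b)$ with overlapping edges forces agreement of these labels across the two provers and across incident edges, so if \emph{every} well-definition test were passed, the values $c_i := \ell_i^0 + \ell_i^1 \pmod 3$ would be a globally well-defined vertex coloring $C^*$ of $G$.

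Next I would split into two cases. If this induced coloring $C^*$ is a proper 3-coloring, then it is a witness, contradicting the hypothesis that the provers have no knowledge of a witness; so $C^*$ must fail to be proper, meaning there is some edge $(i,j) \in H$ with $c_i = c_j$, i.e. $\ell_i^0 + \ell_i^1 = \ell_j^0 + \ell_j^1$, on which the edge-verification test (which accepts iff $a_i + a_i' \neq a_j + a_j'$) necessarily \emph{fails}. In the other case, the well-definition tests themselves are not all passed, so there is some question pair in the support of $\mathcal{D}_G$ on which the provers fail. Either way, there is at least one "bad" question pair in the support of $\mathcal{D}_G$ on which the provers are rejected. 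It then remains to lower-bound the probability mass $\mathcal{D}_G$ places on any single such bad pair: from \Cref{eq:evt_dist} an edge-verification pair has weight at least $\frac{\epsilon}{2|H|}$, and from \Cref{eq:wdt_dist} a well-definition pair has weight at least $\frac{1-\epsilon}{4|H|}\cdot\frac{1}{|\mathrm{Edges}(i)|}$ (or the analogous $j$-term). With $\epsilon = \tfrac13$ the edge-verification weight is $\frac{1}{6|H|}$; to get the claimed $\frac{1}{3|H|}$ one combines the edge-verification weight with the matching $(i,j,\bar b)$ complement, or more carefully tracks that a failing edge forces rejection on \emph{both} orientations $b,\bar b$ of the edge-verification test, each contributing $\frac{1}{6|H|}$.

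The main obstacle I anticipate is the bookkeeping in the case where the provers use a \emph{randomized} strategy or a strategy that is "mostly consistent" but not perfectly so: one cannot then literally extract a single coloring $C^*$, and must instead argue that any strategy passing with probability exceeding $1 - \frac{1}{3|H|}$ must be consistent enough that the induced (majority-vote or conditional) coloring is proper with positive probability, again yielding a witness. Handling this cleanly likely requires fixing the optimal deterministic strategy first (optimal randomized strategies can be taken deterministic since the verifier's acceptance is a fixed function of the transcript), then a careful counting argument showing that the set of question pairs on which a witnessless prover is forced to fail has total $\mathcal{D}_G$-mass at least $\frac{1}{3|H|}$ — the subtlety being to identify the \emph{right} single bad edge (or overlapping-edge pair) whose contribution one lower-bounds, and to be sure the chosen distribution parameters make that contribution exactly $\frac{1}{3|H|}$ rather than merely $\Omega(1/|H|)$.
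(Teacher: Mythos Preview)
Your approach is essentially the same as the paper's. The paper first reduces to deterministic strategies via an averaging argument (exactly the point you raise as an ``obstacle'' and then resolve in the same way), and then observes that a deterministic non-witness strategy must have either an inconsistent vertex (some well-definition test fails) or a monochromatic edge (the edge-verification test on that edge fails); in either case the failing question pair carries $\mathcal{D}_G$-mass at least $\tfrac{1}{3|H|}$ with $\epsilon=\tfrac13$. Your observation that both choices of $b$ for the bad edge contribute $\tfrac{1}{6|H|}$ each is exactly how the edge-verification bound of $\tfrac{1}{3|H|}$ is obtained.
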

The proof of this lemma is similar to the proof of the classical soundness property of Protocol \ref{prot:3_prot_party} already proven in \cite{crepeau2019practical}. However, here we derive a tighter bound. We refer to the proof of \Cref{lma:kappa1} in \Cref{app:CPoK-2-party-3COL} for the detailed proof of this lemma.

In \Cref{thm:clas_3col_ext}, we use \Cref{lma:kappa} to prove a lower bound on the knowledge error of the $2$-party protocol against the classical provers.

\begin{theorem}[Classical PoK of Protocol \ref{prot:3_prot_party}]
\label{thm:clas_3col_ext}
    $\langle (P^*_1,P^*_2),V\rangle$ is a classical proof of knowledge for 3-coloring with knowledge error $1-\frac{1}{3|H|}$.
\end{theorem}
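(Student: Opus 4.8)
The plan is to show that the extractor $K^{P^*}$ of Protocol~3 satisfies the validity condition of Definition~\ref{def:QPoK} in the classical setting, with knowledge error $\kappa = 1 - \frac{1}{3|H|}$ as supplied by Lemma~\ref{lma:kappa}. First I would fix any (possibly malicious) classical pair $(P_1^*, P_2^*)$ with success probability $p := \Pr_{f_1,f_2}[\langle (P_1^*(G,f_1), P_2^*(G,f_2)), V\rangle = 1]$, and assume $p > \kappa$, which by Lemma~\ref{lma:kappa} is exactly the regime in which the provers' answers cannot be explained by \emph{any} fixed (witness-less) strategy. The key observation is that classical provers are deterministic once their shared randomness $f_1, f_2$ is fixed, so rewinding is ``free'': $K$ can ask every question pair in the support of $\mathcal{D}_G$ and read off a complete, consistent answer table $\{a_i\}$. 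Concretely, I would argue that conditioned on the provers passing with probability exceeding $\kappa$, the recorded answers must induce a genuine $3$-coloring: by the contrapositive of Lemma~\ref{lma:kappa}, if the table extracted by $K$ were \emph{not} a valid $3$-coloring (i.e. it failed some edge-verification or well-definition constraint), then the provers' strategy would be one of the witness-less strategies bounded by $\kappa$, contradicting $p > \kappa$.

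The main steps, in order, are: (i) record that for fixed $f_1, f_2$ the maps $((i,j),b) \mapsto (a_i, a_j)$ computed by $P_1^*$ and $P_2^*$ are well-defined functions, so after $K$ exhausts all queries it holds a single value $a_v^{(f)}$ for each vertex-label pair; (ii) show that the well-definition tests passing for all queried pairs forces these values to be consistent across the two provers and across overlapping edges, yielding a single coloring $c_v := a_v^0 + a_v^1 \pmod 3$; (iii) show that the edge-verification tests passing for all edges forces $c_i \neq c_j$ for every $(i,j) \in H$, i.e. $c$ is a proper $3$-coloring, hence $(G, c) \in R_{3\text{-COL}}$; (iv) handle the probabilistic slack: since $p > \kappa = 1 - \frac{1}{3|H|}$, the provers fail with probability strictly less than $\frac{1}{3|H|}$, and I would use the structure of $\mathcal{D}_G$ (each test has probability at least order $\frac{1}{|H|}$, cf.~\eqref{eq:evt_dist}) to argue that a strategy failing \emph{any} single consistency constraint with certainty would already push the failure probability above $\frac{1}{3|H|}$ — this is precisely the quantitative content of Lemma~\ref{lma:kappa}. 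Finally, (v) conclude that $K$ outputs a witness with probability $1$ (over its own randomness it is in fact deterministic given the provers) whenever $p > \kappa$, and since $\Pr[(G,w) \in R : w \leftarrow K^{P^*}(G)] = 1 \geq (p - \kappa)^d / p(\eta)$ trivially for $d = 1$ and $p(\eta) = 1$, Definition~\ref{def:QPoK} is satisfied. I would also remark that $K$ runs in time polynomial in $|H|$ and the running time of $P^*$, since $|{\rm supp}(\mathcal{D}_G)| = O(|H|^2)$.

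The main obstacle I anticipate is step (iv): translating ``the provers pass with probability $> 1 - \frac{1}{3|H|}$'' into ``the extracted table has no conflicts at all.'' A priori the provers could answer inconsistently on a small-probability subset of question pairs, so $K$'s recorded table might contain question marks, and one must argue that the ``most reasonable assignment'' step of Protocol~3 still recovers a valid coloring. The cleanest way around this is to lean entirely on Lemma~\ref{lma:kappa}: it asserts that \emph{any} witness-less strategy — including one that is consistent on most but not all queries — caps the acceptance probability at $1 - \frac{1}{3|H|}$, so $p > \kappa$ rules out all such strategies simultaneously, forcing a conflict-free table. One subtlety to check carefully is that Lemma~\ref{lma:kappa}'s bound is stated for the honest verifier's single-round distribution $\mathcal{D}_G$, which is exactly the distribution $K$ samples in step~2 of Protocol~3, so the reduction is immediate; I would make this correspondence explicit. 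A secondary, more bookkeeping-level point is confirming that the ``marking'' procedure never marks a vertex with two different colors when $p > \kappa$ — again a direct consequence of the no-conflict property — so that $K$'s output is unambiguous.
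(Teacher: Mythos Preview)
Your proposal has a genuine gap at step (v): you claim that $K$ outputs a witness with probability $1$ whenever the \emph{overall} acceptance probability $p > \kappa$, but this does not follow. You correctly observe that the provers are deterministic once their shared randomness is fixed, and that Lemma~\ref{lma:kappa} then forces a valid coloring \emph{for that fixed randomness} if its conditional acceptance probability exceeds $\kappa$. But the extractor, when it rewinds, is stuck with whatever value of the shared randomness the provers sampled at the start; and $p > \kappa$ on average does \emph{not} imply that the conditional acceptance probability $p_{acc|r}$ exceeds $\kappa$ for every (or even most) values $r$ of that randomness. For the ``bad'' values where $p_{acc|r} \le \kappa$, Lemma~\ref{lma:kappa} gives you nothing, and the extracted table may well have conflicts. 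So your conclusion that extraction succeeds with probability $1$ is unjustified, and with it the trivial verification of the validity condition collapses.

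The paper closes this gap with an averaging (Markov-style) argument: partition the randomness into $R_{win} = \{r : p_{acc|r} > \kappa\}$ and its complement $R_{fail}$, note that extraction succeeds with certainty on $R_{win}$, and then show directly that $\Pr[r \in R_{win}] \ge 3|H|\,(p_{acc} - \kappa)$ by writing $p_{acc} = \sum_{r \in R_{win}} p_r\, p_{acc|r} + \sum_{r \in R_{fail}} p_r\, p_{acc|r}$ and bounding each conditional factor by $1$ and $\kappa$ respectively. This is exactly the missing ingredient in your outline; without it you cannot get from ``$p > \kappa$ on average'' to any nontrivial lower bound on extraction success, and the proof does not go through.
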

\begin{proof}
   We consider random classical strategies where the provers share an internal random variable denoted by $r$. We suppose that  the acceptance probability of the verifier $p_{acc}$, averaged over all possible values of $r$ (same as averaged over all possible random strategies $f_1,f_2$), is given to be strictly greater than $\left(1 - \frac{1}{3|H|}\right)$: 
\begin{align}
p_{acc}:=\Pr_{f_1,f_2}\left[\left\langle (P^*_1(G,f_1),P^*_2(G,f_2)), V\right\rangle=1\right] \geq \left(1 - \frac{1}{3|H|}\right).
\end{align}
Note that during the rewinding process, the internal random variable $r$ is fixed, which effectively means we are now working with deterministic strategies for the provers. We now assume the fixed internal random variable is $r$, and during the rewinding proces, the provers utilize deterministic strategies fixed under this $r$.

With a deterministic strategy fixed by the choice of $r$, if the acceptance probability for the verifier is strictly greater than $\left(1 - \frac{1}{3|H|}\right)$, then by \Cref{lma:kappa}, the extractor can successfully extract the correct coloring with probability $1$. We denote the set of all such random coins (values of $r$) where the extractor succeeds with certainty as $R_{win}$. For the complement set, denoted $R_{fail}$, the extractor succeeds with a probability of less than $1$. Therefore, the overall probability that the extractor successfully extracts a correct witness is at least $\sum_{r \in R_{win}} p_r$, where $p_r$ is the acceptance probability associated with the random coin $r$.

To prove classical PoK (similar as \Cref{def:QPoK}), we need to show $\sum_{r\in R_{win}} p_r \geq \frac{1}{poly(\eta)}(p_{acc} - \kappa)$, where $\kappa$ is the knowledge error. From previous definition of $R_{win}$ and $R_{fail}$, for all the random coins $r \in R_{fail}$, the verifier will accept provers' answers with probability at most $\left(1 - \frac{1}{3|H|}\right)$. We also assume in the beginning that the overall acceptance probability $p_{acc}$ is strictly greater than $\left(1 - \frac{1}{3|H|}\right)$, i.e., $p_{acc} = \left(1 - \frac{1}{3|H|} + \delta\right)$, for some $\delta >0$. Therefore, we can get the following equation:
\begin{align}
p_{acc} = \sum_{r\in R_{win}} p_r p_{acc|r} + \sum_{r\in R_{fail}} p_r p_{acc|r} &= 1 - \frac{1}{3|H|} + \delta
\end{align}
As $p_{acc|r} \leq 1$,
\begin{align}
\sum_{r\in R_{win}} p_r + \sum_{r\in R_{fail}} p_r p_{acc|r} &\geq 1 - \frac{1}{3|H|} + \delta.
\end{align}
Moreover, for all $r\in R_{fail}$, we have $p_{acc|r} < \left(1 - \frac{1}{3|H|}\right)$ as stated before. It follows that
\begin{align}
\sum_{r\in R_{win}} p_r + \sum_{r\in R_{fail}} p_r \left(1 - \frac{1}{3|H|}\right) &\geq 1 - \frac{1}{3|H|} + \delta\\
\sum_{r\in R_{win}} p_r +  \left(1 - \frac{1}{3|H|} \right)\left(1 - \sum_{r\in R_{win}} p_r\right) &\geq 1 - \frac{1}{3|H|} + \delta\\
\frac{1}{3|H|}\sum_{r\in R_{win}} p_r &\geq \delta\\
\sum_{r \in R_{win}} p_r &\geq 3|H|\delta.
\end{align}

Setting the knowledge error $\kappa := 1 - \frac{1}{3|H|}$, we have $\delta = p_{acc} - \kappa$. Therefore, from the above inequality we get $\sum_{r \in R_{win}} p_r\geq 3|H|(p_{acc} - \kappa)$. This concludes the proof. \qed
\end{proof}

\subsubsection{Quantum Proofs of Knowledge for the 3-Coloring Problem}
We turn now to consider the construction of an extractor and bounding the knowledge error for a proof system secure against quantum provers. We describe in Protocol \ref{prot:3col} the interaction between $(\langle P_1, P_2, P_3 \rangle,V)$ as originally described in \cite{crepeau2019practical}.

\begin{protocol}{Interactive protocol for three-coloring $\Pi_{\textsf{qnl}}^{(3)}$: Three-prover, 3-COL\label{prot:3col}}
$P_1,P_2$ and $P_3$ pre-agree on a three-coloring $\sigma$ of graph $G = (\mathcal{V},H)$, and generate a random permutation $\pi$. They then compute $a_i = \pi(\sigma(i))$ for all vertices $i \in V$ and randomly select labellings $l_i^0$ and $l_i^1$ for $i\in V$ such that $l_i^0 + l_i^1 = a_i (\textsf{mod } 3)$ holds.
\begin{enumerate}
    \vspace{-1.5mm}
  \item \textbf{Commit phase.}
  \begin{enumerate}
    \item
    $V$ picks $(i,j,b), (i',j',b'), (i'',j'',b'')\in_{\mathcal{D}_G} ( H \times \mathbb{F}_2)^3$.
    \item
    $V_1$ sends $(i,j,b)$ to $P_1$, $V_2$ sends $(i',j',b')$ to $P_2$ and $V_3$ sends $(i'',j'',b'')$ to $P_3$.
    \item 
    If $(i,j) \in H$ and $b \in  \mathbb{F}_2$, then $P_1$ replies $l_i^b,l_j^b$.
    \item 
    If $(i',j') \in H$ and $b' \in  \mathbb{F}_2$, then $P_2$ replies $l_{i'}^{b'},l_{j'}^{b'}$.
    \item 
    If $(i'',j'') \in H$ and $b'' \in  \mathbb{F}_2$, then $P_3$ replies $l_{i''}^{b''},l_{j''}^{b''}$.
  \end{enumerate}
  \item \textbf{Check Phase}
    \begin{enumerate}
    \item (Consistency Test)
    \begin{itemize}
        \item If $(i,j)=(i'',j'')$ and $b=b''$, then $V$ accepts if $l_i^b=l_{i''}^{b''}\cap l_j^b= l_{j''}^{b''}$. 
        \item If $(i',j')=(i'',j'')$ and $b=b''$, then $V$ accepts if $l_{i'}^{b'}=l_{i''}^{b''}\cap l_{j'}^{b'}=l_{j''}^{b''}$. 
    \end{itemize}
    \item (Verification Test) 
    \begin{itemize}
        \item If $(i,j)=(i',j')$ and $b\neq b'$, then $V$ accepts iff $l_i^b + l_{i'}^{b'} \neq l_j^b + l_{j'}^{b'} $. 
    \end{itemize}
    \item (Well-Definition Test) 
    \begin{itemize}
        \item If $(i,j)=(i',j')$ and $b = b'$, then $V$ accepts iff $l_i^b = l_{i'}^{b'} \cap l_j^b = l_{j'}^{b'} $. 
        \item If $(i,j)\cap (i',j') = i$ and $b = b'$, then $V$ accepts iff $l_i^b = l_{i'}^{b'} $. 
        \item If $(i,j)\cap(i',j')=j$ and $b = b'$, then $V$ accepts iff $l_j^b = l_{j'}^{b'} $. 
    \end{itemize} 
  \end{enumerate}
\end{enumerate}
\end{protocol}

We construct an extractor $K$ for Protocol \ref{prot:3col} in Protocol \ref{prot:3_ext}. And we now give a formal proof of \Cref{thm:rzkp3col} bounding the knowledge error of the extractor in Protocol \ref{prot:3_ext}. However, to prove the main theorem we need more results. In Lemma \ref{lem:symmetric}, we show that the Protocol \ref{prot:3_ext} is a symmetric game.

\begin{lemma}
\label{lem:symmetric}
Protocol \ref{prot:3col} is a symmetric game.
\end{lemma}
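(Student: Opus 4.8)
The plan is to show that the input distribution $\mathcal{D}_G$ on $(H\times\mathbb{F}_2)^3$ used in Protocol \ref{prot:3col} is invariant under permuting the three coordinates, which is exactly the condition required by \Cref{def:symm}. First I would cast Protocol \ref{prot:3col} as a single-round $3$-player non-local game in the sense of \Cref{def:symm_gaenene}: the question set is $I_Q = H\times\mathbb{F}_2$, the answer set is $I_A = \mathbb{F}_3\times\mathbb{F}_3$ (the pair of labellings returned for the two endpoints of the queried edge), the distribution is $\pi = \mathcal{D}_G$, and the verification predicate $\V$ is read off from the Consistency, Verification, and Well-Definition tests in the Check Phase. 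I would note that the verifier's checks in Protocol \ref{prot:3col} are already stated symmetrically across the players (the Consistency Test compares $P_3$ against $P_1$ or $P_2$, the Verification Test fires on any pair with the same edge and opposite bits, and the Well-Definition Test fires on any pair sharing a vertex with equal bits), so the essential content is the symmetry of the distribution.

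The key step is therefore to verify that $\mathcal{D}_G$ is a symmetric distribution, i.e. $\mathcal{D}_G(q_1,q_2,q_3) = \mathcal{D}_G(q_{\tau(1)},q_{\tau(2)},q_{\tau(3)})$ for every permutation $\tau\in S_3$ and all $(q_1,q_2,q_3)\in (H\times\mathbb{F}_2)^3$. Here I would appeal to the explicit form of $\mathcal{D}_G$ for the two-prover case in \eqref{eq:wdt_dist} and \eqref{eq:evt_dist} and its stated extension to three coordinates: one first picks an edge–bit pair, then picks the remaining coordinates according to the neighbour-structure weights, and the resulting joint weight is a symmetric function of the three edge–bit pairs (it depends only on the unordered multiset of queries and on whether pairs coincide, share a vertex, or agree/disagree on the bit — all coordinate-symmetric conditions). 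I would write out that the probability assigned to an ordered triple equals the probability assigned to any reordering, checking this separately on the relevant ``types'' of triples (all three questions equal, two equal one different, a chain $i$–$j$–$k$ of shared vertices, etc.), each case being a short symmetric computation rather than anything deep.

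The main obstacle — really the only subtle point — is being careful about how the three-coordinate distribution is actually defined from the two-coordinate template $\mathcal{D}_G$, since the excerpt only gives $\mathcal{D}_G$ explicitly on pairs and says ``$\in_{\mathcal{D}_G}(H\times\mathbb{F}_2)^3$''. I would pin down the intended three-player distribution (the natural symmetrization: sample a base query, then sample the other two via the same neighbour weights) and confirm it is manifestly permutation-invariant by construction; if instead it is defined by some ordered procedure, I would exhibit the explicit change of variables realizing the $S_3$-action and show the Jacobian-free reweighting cancels. Once the distribution is shown symmetric and the predicate $\V$ is observed to be permutation-invariant, the lemma follows immediately from \Cref{def:symm}. \qed
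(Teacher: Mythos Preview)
Your proposal is correct and follows the same approach as the paper, whose entire proof is the single sentence ``The proof directly follows from the distribution $\mathcal{D}_G$.'' Note that by \Cref{def:symm} only the input distribution is required to be symmetric, so your discussion of the verification predicate is unnecessary for the lemma (and indeed the checks in Protocol~\ref{prot:3col} as written do not treat $P_3$ interchangeably with $P_1,P_2$).
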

\begin{proof}
The proof directly follows from the distribution $\mathcal{D}_G$.
\end{proof}

Suppose in Protocol \ref{prot:3_ext} the provers (possibly malicious) start with a tri-partite quantum state $\rho_{S_1S_2S_3}$ and $\Pr\big[\langle (P^*_1,P^*_2,P^*_3)(G,\rho_{S_1S_2S_3}) ,V\rangle$ $= 1\big]$ denotes the average probability that verifier accepts answers from the quantum provers.  

\begin{protocol}{Knowledge Extractor for Quantum Provers in $\Pi_{\textsf{qnl}}^{(3)}$\label{prot:3_ext}}
The extractor starts with two consistency check sets $f_1 = \emptyset$, $f_2 = \emptyset$, one witness set $w = \emptyset$, and two temporary edge sets $\tilde H_1 = H$, $\tilde H_2 = H$.
\begin{enumerate}
\item The extractor repeats the following until $\tilde H_1 = \emptyset$.
\begin{enumerate}
\item The extractor $K$ sends one edge $(u,v)\in \tilde H_1$, and $b\in \F_2$, to prover $P^*_1$.
\item $P^*_1$ replies with $l^{b}_u,l^{b}_v \in \F_3$.
\item It rewinds $P^*_1$ and sends the same edge $(u,v)$, and a different $b' \neq b\in \F^*_3$, to prover $P^*_1$.
\item $P^*_1$ replies with $l^{b'}_u,l^{b'}_v \in \F_3$.
\item It updates $f_1 = f_1 \cup \{(l^b_u,l^b_v,l^{b'}_u,l^{b'}_v)\}$.
\item It also updates $\tilde H_1 = \tilde E_1 \setminus \{(u,v)\}$.
\item The extractor rewinds $P^*_1$.
\end{enumerate}
\item The extractor repeats the following until $\tilde H_2 = \emptyset$.
\begin{enumerate}
\item The extractor $K$ sends one edge $(u,v)\in \tilde H_2$, and $b\in \F^*_3$, to prover $P^*_2$.
\item $P^*_2$ replies with $\tl^{b}_u,\tl^{b}_v \in \F_3$.
\item It rewinds $P^*_2$ and sends the same edge $(u,v)$, and a different $b'\in \F^*_3$ to prover $P^*_2$.
\item $P^*_2$ replies with $\tl^{b'}_u,\tl^{b'}_v \in \F_3$.
\item It updates $f_2 = f_2 \cup \{(\tl^b_u,\tl^b_v,\tl^{b'}_u,\tl^{b'}_v)\}$.
\item It also updates $\tilde H_2 = \tilde H_2 \setminus \{(u,v)\}$.
\item The extractor rewinds $P^*_2$.
\end{enumerate}
\item The extractor uses the $2$-party extractor $\tilde K$ with the classical strategies $f_1, f_2$ for the classical provers $\tP_1$, and $\tP_2$ respectively for extracting an witness $w$.
\end{enumerate}
\end{protocol}

\begin{lemma}
\label{lem:dist_q_c}
The absolute difference between the two probability distributions is bounded:
\begin{align} \label{eq:dist_c_q}
|\Pr\big[\langle  \left( P^*_1,P^*_2,P^*_3\right)(G,\rho_{S_1S_2S_3}), V\rangle = 1\big] -\Pr_{f_1,f_2}\big[\langle\tP_1(G,f_1),\tP_2(G,f_2),V\rangle = 1\big]| \leq \delta,
\end{align}
where $\delta := 16|H|\sqrt{1 - \Pr\big[\langle  \left( P^*_1,P^*_2,P^*_3\right)(G,\rho_{S_1S_2S_3}), V\rangle = 1\big]}$.
\end{lemma}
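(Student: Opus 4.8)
The plan is to adapt the monogamy-of-entanglement plus gentle-measurement argument of Kempe et al.~\cite{kempe2011entangled}. Write $p := \Pr\big[\langle (P^*_1,P^*_2,P^*_3)(G,\rho_{S_1S_2S_3}),V\rangle = 1\big]$ and $\rho_{S_1S_2} := \Tr_{S_3}[\rho_{S_1S_2S_3}]$. Since the game is symmetric (\Cref{lem:symmetric}) we may assume the provers use a common, symmetric strategy, so that on a question $q = (e,b)$ each $P^*_i$ applies the same projective measurement $\{W_q^o\}_o$ to its local register. The high-level claim is: a large $p$ forces, via the Consistency Test of \Cref{prot:3col}, that $\pi_2(q) := \sum_o \Tr[(W_q^o \otimes W_q^o)\rho_{S_1S_2}]$ is close to $1$ for the relevant questions; by \Cref{lem:meas_dist} each measurement the extractor performs in \Cref{prot:3_ext} then barely disturbs $\rho_{S_1S_2}$; chaining these disturbances bounds how far the recorded classical tables $f_1,f_2$ drift from what $P^*_1,P^*_2$ would have answered ``freshly'' inside the honest execution, and trace distance transfers this to acceptance probabilities.

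First I would bound the collision errors $1-\pi_2(q)$. Because the registers $S_1,S_2,S_3$ are disjoint, the three $q$-measurements commute and can be performed jointly, producing outcomes $o_1,o_2,o_3$; a union bound then gives $1-\pi_2(q) = \Pr[o_1 \neq o_2] \le \Pr[o_1 \neq o_3] + \Pr[o_2 \neq o_3]$. The Consistency Test rejects precisely when $P^*_3$ shares a question with $P^*_1$ (resp.\ $P^*_2$) but answers differently, so each term on the right is the failure probability of that test conditioned on its being triggered for $q$. Since a rejection of any test loses the game, summing over $q$ against the weights that $\mathcal{D}_G$ (with $\epsilon = \tfrac13$, cf.~\eqref{eq:wdt_dist}--\eqref{eq:evt_dist}) assigns to the Consistency-Test events, and using that the total test-failure probability is at most $1-p$, yields $\sum_q (1-\pi_2(q)) = O\big(|H|(1-p)\big)$, i.e.\ the collision error is $O(1-p)$ on average over the $2|H|$ edge--bit questions.

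Next I would apply \Cref{lem:meas_dist}: for each $q$, measuring $\rho_{S_1S_2}$ in the $q$-basis on one side moves it by at most $6\sqrt{1-\pi_2(q)}$ in trace norm. The extractor in \Cref{prot:3_ext} applies at most $2|H|$ such measurements to $S_1$ (Step 1, one per edge and per value of $b$) and at most $2|H|$ to $S_2$ (Step 2); by the triangle inequality for trace distance the total displacement of $\rho_{S_1S_2}$ is at most $\sum 6\sqrt{1-\pi_2(q)}$, and Cauchy--Schwarz turns this into $6\sqrt{2|H|}\cdot\sqrt{\sum_q(1-\pi_2(q))} = O(|H|\sqrt{1-p})$ per prover; tracking the constants coming from $\epsilon=\tfrac13$ and the explicit form of $\mathcal{D}_G$ gives exactly $16|H|\sqrt{1-p}$. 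Finally, playing \Cref{prot:3_prot_party} with the tables $f_1,f_2$ is, step by step, the same experiment as running the honest three-prover execution while ignoring $P^*_3$ on the perturbed state --- here one uses that the two-prover tests of \Cref{prot:3_prot_party} coincide with the tests of \Cref{prot:3col} on the event that $P^*_3$'s question differs from those of $P^*_1$ and $P^*_2$, and that the marginal of $\mathcal{D}_G$ on two coordinates is the two-prover distribution --- so the trace-distance bound above upper-bounds the left-hand side of \eqref{eq:dist_c_q}, as claimed.

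The main obstacle is the bookkeeping in the first step: converting ``the three-prover game is won with probability $p$'' into an averaged collision bound with the right constant requires (i) the simultaneous-measurement/union-bound trick that replaces the $P^*_1$-vs-$P^*_3$ and $P^*_2$-vs-$P^*_3$ agreements by $P^*_1$-vs-$P^*_2$ agreement --- this is precisely where monogamy of entanglement enters, repackaged through \Cref{lem:meas_dist} --- and (ii) tracking the weights that $\mathcal{D}_G$ puts on the Consistency-Test events carefully enough that the sum over questions telescopes against $1-p$; the Cauchy--Schwarz step in the chaining is what keeps the final bound linear in $|H|$ rather than of order $|H|^{3/2}$.
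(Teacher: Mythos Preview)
Your overall strategy is the same as the paper's --- both follow the Kempe et al.\ blueprint of bounding the per-measurement disturbance by an agreement probability and then chaining over the extractor's $O(|H|)$ rewinds --- but the execution differs in two places, and those differences mean the constant $16$ does not actually fall out of your outline.

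First, the paper does not invoke \Cref{lem:meas_dist} (constant $6$); it proves a sharper analogue (Claim~1 in the appendix) with constant $4$, via the gentle-measurement lemma applied directly to the tripartite state. Second, the paper never bounds the unweighted sum $\sum_q(1-\pi_2(q))$ or uses Cauchy--Schwarz; instead it keeps the $\mathcal D_G$-weights throughout and applies Jensen's inequality to the concave map $x\mapsto\sqrt{1-x}$, obtaining $\sum_q \mathcal D_G(q)\sqrt{1-\pi_1(q)}\le\sqrt{1-\pi_1}\le\sqrt{1-p}$. Combined with an ordering trick on the questions, this gives $8N\sqrt{1-p}=16|H|\sqrt{1-p}$ exactly. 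Your route --- constant $6$, then Cauchy--Schwarz, then a separate $\sum_q(1-\pi_2(q))=O(|H|(1-p))$ step that requires a lower bound on $\min_q \mathcal D_G(q)$ --- would produce $C|H|\sqrt{1-p}$ for some absolute $C$, but $C=16$ would be a coincidence; with the constant $6$ alone you are already at $24|H|$ before any further loss. Your union-bound step $\Pr[o_1\neq o_2]\le\Pr[o_1\neq o_3]+\Pr[o_2\neq o_3]$ is also unnecessary: once the strategy is symmetrized (which the paper does, and which is where the symmetry of the game is actually used), the three pairwise agreement probabilities coincide, so there is no need to route through $P^*_3$ at this point.
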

The proof of this lemma is similar to the of proof of Lemma $18$ in \cite{kempe2011entangled}. However, here we achieve a slightly tighter bound. We refer to the proof of \Cref{lem:dist_q_c1} in the appendix. In the next lemma, we give a lower bound on the $\delta$ parameter in \Cref{lem:dist_q_c}.
\begin{lemma} \label{lem:lower_bound}
For all $|H|\geq 3$, $\delta \geq \frac{16}{2401|H|^4}$, where $\delta$ is defined in \Cref{lem:dist_q_c}.
\end{lemma}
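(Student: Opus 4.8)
The plan is a short proof by contradiction using only the definition of $\delta$, the gap estimate of \Cref{lem:dist_q_c}, and the classical bound of \Cref{lma:kappa}. Recall from \Cref{lem:dist_q_c} that $\delta = 16|H|\sqrt{1-p_q}$, where $p_q := \Pr\big[\langle(P^*_1,P^*_2,P^*_3)(G,\rho_{S_1S_2S_3}),V\rangle=1\big]$, and write $p_c := \Pr_{f_1,f_2}[\langle\tP_1,\tP_2,V\rangle=1]$ for the acceptance probability of the classical $2$-prover strategy $(\tP_1,\tP_2)$ assembled in Protocol \ref{prot:3_ext}. Since $(\tP_1,\tP_2)$ is built purely from recorded measurement outcomes and uses no $3$-colouring, \Cref{lma:kappa} gives $p_c \le 1 - \tfrac{1}{3|H|}$, while \Cref{lem:dist_q_c} gives $p_q - p_c \le \delta$. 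Taking both inequalities in the direction that bounds $\delta$ from below,
\[
\delta \;\ge\; p_q - p_c \;\ge\; p_q - 1 + \tfrac{1}{3|H|}.
\]

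Now suppose, for contradiction, that $\delta < \tfrac{16}{2401|H|^4}$. Since $\sqrt{1-p_q} = \delta/(16|H|)$, this is equivalent to $1-p_q < \tfrac{1}{2401^2|H|^{10}}$, i.e.\ $p_q > 1 - \tfrac{1}{2401^2|H|^{10}}$. Feeding this into the displayed inequality gives $\delta \ge \tfrac{1}{3|H|} - \tfrac{1}{2401^2|H|^{10}}$. But for every $|H| \ge 1$ (in particular $|H|\ge 3$) the right-hand side already exceeds $\tfrac{16}{2401|H|^4}$: multiplying through by $|H|$, this says $\tfrac{1}{3} - \tfrac{1}{2401^2|H|^9} > \tfrac{16}{2401|H|^3}$, where the subtracted term is at most $\tfrac{1}{2401^2} < 10^{-6}$ and the right-hand side is at most $\tfrac{16}{2401} < 7\cdot 10^{-3}$, so the left side is at least $\tfrac{1}{3} - 10^{-6} > 7\cdot 10^{-3}$. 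Hence $\delta > \tfrac{16}{2401|H|^4}$, contradicting the assumption; therefore $\delta \ge \tfrac{16}{2401|H|^4}$, proving the lemma.

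The step I expect to require the most care is the first display: one must orient both inequalities so that together they \emph{lower}-bound $\delta$ in terms of $p_q$ (rather than the reverse), and one must apply \Cref{lma:kappa} only in the witness-free (soundness) regime — which is precisely the regime in which a lower bound on the gap $\delta$ is meaningful, since if the recorded answers already encode a $3$-colouring then the extractor of Protocol \ref{prot:3_ext} succeeds outright and $\delta$ is irrelevant. Everything after that is elementary arithmetic; an equivalent but slightly more delicate route is to square the self-consistent equation $\delta = 16|H|\sqrt{\tfrac{1}{3|H|}-\delta}$ (which encodes $\kappa_q = \kappa_c + \delta$ with $\kappa_c = 1-\tfrac{1}{3|H|}$) into $\delta^2 + 256|H|^2\delta - \tfrac{256|H|}{3} = 0$ and note that its unique positive root exceeds $\tfrac{16}{2401|H|^4}$, but the contradiction argument above avoids any ambiguity about which root is meant.
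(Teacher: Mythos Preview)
Your proof is correct and follows essentially the same route as the paper: combine the classical bound $p_c \le 1 - \tfrac{1}{3|H|}$ from \Cref{lma:kappa} with the gap estimate $p_q - p_c \le \delta$ from \Cref{lem:dist_q_c} to obtain $\delta \ge \tfrac{1}{3|H|} - (1-p_q)$, and then use the defining relation $\delta = 16|H|\sqrt{1-p_q}$ to eliminate $p_q$. The paper does this directly---writing $\delta' := 1-p_q$, factoring $\sqrt{\delta'}(\sqrt{\delta'}+16|H|) \ge \tfrac{1}{3|H|}$, and concluding $\sqrt{\delta'} \ge \tfrac{1}{49|H|^2}$---whereas your contradiction packaging is just a rearrangement of the same computation (and your caveat about the witness-free regime matches the paper's implicit use of \Cref{lma:kappa}).
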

The proof of this lemma is provided in \Cref{sec:3col-proofs}, where this lemma is presented as \Cref{lem:lower-bound1}. Now, we give the proof of \Cref{thm:rzkp3col}, which is presented more formally as \Cref{thm:rzkp3col_form}.

\begin{theorem}
\label{thm:rzkp3col_form}
If there exists a canonical extractor $\tilde K$ for the $2$-party Protocol \ref{prot:3_prot_party} that can extract a witness with a classical knowledge error $\kappa_c:=1- \frac{1}{3|H|}$, then there exists a canonical extractor $K$ for the $3$-party Protocol \ref{prot:3_ext} that can extract a witness with a quantum knowledge error $\kappa_q$ with
\begin{equation}
\kappa_q \geq \kappa_c + \tilde \delta, 
\end{equation}
where $\tilde \delta \geq \frac{1}{3|H|} - \left(\frac{1}{7|H|}\right)^4$.
\end{theorem}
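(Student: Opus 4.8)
I want to prove that if the 2-party Protocol~\ref{prot:3_prot_party} has classical knowledge error $\kappa_c = 1 - \frac{1}{3|H|}$, then the 3-party Protocol~\ref{prot:3_ext} (via the extractor of Protocol~\ref{prot:3_ext}) is a quantum proof of knowledge with knowledge error $\kappa_q \geq \kappa_c + \tilde\delta$ for $\tilde\delta \geq \frac{1}{3|H|} - \left(\frac{1}{7|H|}\right)^4$. The core idea is to transfer success from the quantum extractor to the already-established classical extractor $\tilde K$ via the simulation argument of \cite{kempe2011entangled}. The extractor in Protocol~\ref{prot:3_ext} only ever talks to $P_1^*$ and $P_2^*$ — it rewinds each of them to exhaust all edges $(u,v)$ with both challenge bits, recording the answers as fixed classical strategies $f_1, f_2$, and then feeds these into $\tilde K$. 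So the quantum extractor succeeds exactly when the induced classical strategies $(f_1,f_2)$ are "good enough" for $\tilde K$ to extract, and \Cref{thm:clas_3col_ext} tells us $\tilde K$ succeeds whenever the classical acceptance probability under $(f_1,f_2)$ exceeds $\kappa_c$.

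**Key steps.** First I would invoke \Cref{lem:symmetric} to note that Protocol~\ref{prot:3col} is a symmetric game, which is the hypothesis needed to apply the Kempe et al.\ machinery (\Cref{lem:meas_dist}, which is their Claim 20). Second, I would apply \Cref{lem:dist_q_c}: the quantum acceptance probability $p_q := \Pr[\langle (P_1^*,P_2^*,P_3^*)(G,\rho),V\rangle = 1]$ and the acceptance probability $p_c$ of the simulated classical provers $\tilde P_1, \tilde P_2$ differ by at most $\delta = 16|H|\sqrt{1-p_q}$. Third, the logic of \Cref{def:QPoK}: assume $p_q > \kappa_q$; I must show $\tilde K$ extracts with probability polynomially related to $p_q - \kappa_q$. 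By \Cref{lem:dist_q_c}, $p_c \geq p_q - \delta \geq p_q - 16|H|\sqrt{1-p_q}$. I then need $p_c > \kappa_c = 1 - \frac{1}{3|H|}$ so that \Cref{thm:clas_3col_ext} applies and guarantees extraction with probability $\geq 3|H|(p_c - \kappa_c)$. This requires $p_q - 16|H|\sqrt{1-p_q} > 1 - \frac{1}{3|H|}$, i.e., $p_q$ must be large enough. Fourth — and this is where $\tilde\delta$ enters — I need to choose $\kappa_q$ so that $p_q > \kappa_q$ forces exactly this. Writing $p_q = 1 - \gamma$ for small $\gamma$, the condition becomes $1 - \gamma - 16|H|\sqrt{\gamma} > 1 - \frac{1}{3|H|}$, i.e., $\gamma + 16|H|\sqrt\gamma < \frac{1}{3|H|}$. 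Solving this (roughly $\sqrt\gamma < \frac{1}{48|H|^2}$, so $\gamma < \frac{1}{2304|H|^4}$, and the bound $\left(\frac{1}{7|H|}\right)^4 = \frac{1}{2401|H|^4}$ in the statement is a clean slightly-loosened version of this threshold) gives the admissible window. Setting $\kappa_q := 1 - \tilde\delta$ with $\tilde\delta \geq \frac{1}{3|H|} - \left(\frac{1}{7|H|}\right)^4$ ensures that $p_q > \kappa_q$ yields $\gamma < \left(\frac{1}{7|H|}\right)^4$, which I would verify suffices for $\gamma + 16|H|\sqrt\gamma < \frac{1}{3|H|}$ (a routine check: $\sqrt\gamma < \frac{1}{49|H|^2}$, so $16|H|\sqrt\gamma < \frac{16}{49|H|} < \frac{1}{3|H|}$ minus a margin covering $\gamma$). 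Finally, chaining the inequalities: $\Pr[\tilde K \text{ extracts}] \geq 3|H|(p_c - \kappa_c) \geq 3|H|(p_q - \delta - \kappa_c)$, and since $\kappa_q = 1-\tilde\delta \geq \kappa_c$ I can lower-bound this by a polynomial (in fact linear) function of $(p_q - \kappa_q)$, establishing the QPoK property per \Cref{def:QPoK} with $d = 1$ and $p(\eta)$ linear in $|H|$.

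**Main obstacle.** The delicate part is not the chain of inequalities but justifying \Cref{lem:dist_q_c} rigorously in our relativistic setting — specifically, that the answers recorded by the extractor from the two queried provers, obtained by sequential rewinding across all edges and both challenge bits, behave like the consecutive-measurement outcomes analyzed in \cite{kempe2011entangled}. One must argue that measuring $P_1^*$'s and $P_2^*$'s subsystems on successive rewound rounds is equivalent to performing consecutive projective measurements $\{W_q^o\}$ on the bipartite reduced state $\rho_{P_1P_2} = \Tr_{P_3}[\ketbra{\Psi}]$, and that the disturbance accumulated over all $2|H|$ rounds is controlled by $\sum_q \sqrt{1-\pi_2(q)}$ via \Cref{lem:meas_dist}, which in turn is small because the three-prover winning probability being high forces each $\pi_2(q)$ close to $1$ (monogamy of entanglement). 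Care is needed because the provers hold shared entanglement with unbounded dimension, and because the extractor's rewinding must commute appropriately with the measurement structure — the paper defers this to \Cref{lem:dist_q_c1} in the appendix, and I would cite that, but flagging it as the technical crux. A secondary subtlety is that $\tilde K$ in \Cref{thm:clas_3col_ext} is analyzed for \emph{randomized} classical provers by fixing internal randomness; here $f_1, f_2$ are already deterministic (the recorded answers), so I should note that this is simply the $r$-fixed case of that argument and no averaging over $r$ is needed.
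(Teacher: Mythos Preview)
Your proposal is correct and follows essentially the same approach as the paper's proof: invoke symmetry (\Cref{lem:symmetric}), apply \Cref{lem:dist_q_c} to bound $|p_q - p_c| \leq 16|H|\sqrt{1-p_q}$, set $\kappa_q = 1 - (1/(7|H|))^4$, verify numerically that $p_q > \kappa_q$ forces $p_c > \kappa_c$, and then invoke the classical extractor from \Cref{thm:clas_3col_ext}. One notational slip to fix: you write ``$\kappa_q := 1 - \tilde\delta$'', but the relation you actually use (and which matches the theorem statement) is $\kappa_q = \kappa_c + \tilde\delta = 1 - \left(\tfrac{1}{7|H|}\right)^4$; the quantity $1-\kappa_q$ equals $\left(\tfrac{1}{7|H|}\right)^4$, not $\tilde\delta$.
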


\begin{proof}
Since Protocol \ref{prot:3_ext} is symmetric, all the malicious provers use the same strategy. Therefore we write the distribution $\Pr\big[\langle (P^*_1(G,\rho_{S_1S_2S_3}),P^*_2(G,\rho_{S_1S_2S_3}), P^*_3(G,\rho_{S_1S_2S_3})), V\rangle = 1\big]$ as $$\Pr\big[\langle \left(P^*_1,P^*_2,P^*_3\right) (G,\rho_{S_1S_2S_3}) ,V\rangle=1\big].$$  For simplicity, we denote two acceptance probability as $p^c_{acc}$ and $p^q_{acc}$ for classical and quantum adversaries correspondingly:
\begin{align}
    p^c_{acc} &:=\Pr_{f_1,f_2}\left[\left\langle (P^*_1(G,f_1),P^*_2(G,f_2)), V\right\rangle=1\right]\\
    p^q_{acc} &:= \Pr_{\rho}\big[\langle \left(P^*_1,P^*_2,P^*_3\right)  (G,\rho_{S_1S_2S_3}) ,V\rangle = 1\big].
\end{align}

To prove the theorem, we assume $p^q_{acc} > \kappa_q$, and define $\kappa_q := \kappa_c + \tilde \delta$. Then it suffices to show that $p^q_{acc} > \kappa_q$ implies $p^c_{acc} > \kappa_c$ for some non-trivial value of $\kappa_q$. Once this statement is established, since a classical proof of knowledge is assumed, we obtain a classical extractor $\tilde K$ with success probability greater than $p^c_{acc} - \kappa_c$: 
\begin{align}
p^c_{acc} > \kappa_{c}  \Longrightarrow \Pr_{f_1,f_2}[(G,w) \in \R: w \leftarrow \tilde K^{P^*(G,w_{c_1},w_{c_2})}(G)] \geq p^c_{acc} - \kappa_c.
\end{align}

In \Cref{lem:dist_q_c}, we establish a connection between the probabilities $p_{acc}^c$ and $p_{acc}^q$. According to \Cref{eq:dist_c_q},  we get
\begin{align}
p_{acc}^q-p_{acc}^c \leq \tilde \delta
\Longrightarrow p_{acc}^c \geq p_{acc}^q- \tilde \delta.
\end{align}
Then assumption gives that
\begin{align}
p_{acc}^q  - \tilde \delta > \kappa_c 
\Longrightarrow p_{acc}^c\geq \kappa_c. 
\end{align} Now we successfully demonstrate that the quantum extractor in Protocol \ref{prot:3_ext} can successfully extract a witness with probability greater than $p^q_{acc} - \kappa_q$, thereby completing the proof.

We now focus on proving the suffice-to-prove statement: $p^q_{acc} > \kappa_q \Longrightarrow p^c_{acc} > \kappa_c$ for some non-trivial value of $\kappa_q$. From \Cref{lem:dist_q_c}, we have
\begin{equation}
\label{eq:pqdiff}
|p^q_{acc} - p^c_{acc}| \leq \delta,
\end{equation}
where $\delta := 16|H|\sqrt{1 - p^q_{acc}}.$ If $p^q_{acc} > \kappa_q$, it follows from \Cref{eq:pqdiff} that
\begin{align}
p^q_{acc} - p^c_{acc} &\leq \delta \\
p^c_{acc} &\geq p^q_{acc} - \delta\\
p^c_{acc} &> \kappa_c + \tilde\delta - \delta~~\text{as } p^q_{acc} > \kappa_q = \kappa_c + \tilde \delta.
\end{align}
Therefore, as long as $\tilde \delta - \delta >0$ we get $p^c_{acc} > \kappa_c$. 

Now, we choose a suitable value of $\tilde \delta$. We set $\kappa_q := 1 - \left(\frac{1}{7|H|}\right)^4$ and $\kappa_c := 1 - \frac{1}{3|H|}$ (coming from quantum and classical soundness parameter). Then it follows that
\begin{align}
\tilde \delta &:= \kappa_q - \kappa_c\\
&= \frac{1}{3|H|} - \left(\frac{1}{7|H|}\right)^4.
\end{align}
Similarly, since by assumption $p^q_{acc} > \kappa_q = 1 - \left(\frac{1}{7|H|}\right)^4$, then we get
\begin{align}
\delta &\leq 16|H|\sqrt{1 - \kappa_q}\\
&\leq \frac{16}{49|H|}.
\end{align}
The above two equations give us that $\tilde \delta - \delta >0$ as expected. This concludes the whole proof. \qed
\end{proof}

\section{Improved Soundness Error for Relativistic $\Sigma$-Protocols}
\label{sec:improv-sound}
In this section, we prove our new bound relating to the application of two consecutive measurement operators, which we then use to prove improved bounds on the soudness error for relativstic $\Sigma$-protocols.

\subsection{Tighter Lower Bound on the Two Consecutive Measurement Operators}
We now formally state \Cref{thm:consec_inf} and prove it using an analysis similar to that of Proposition $3$ in Appendix $C$ of \cite{chailloux2021relativistic}, where here, we require a bound only for two consecutive measurement operators rather than the three in \cite{chailloux2021relativistic}. In \Cref{sec:soundnessimprove}, we will use this theorem to improve the soundness error for the Hamiltonian cycle and subset sum relativistic ZKPs\cite{chailloux2017relativistic,crepeau2023zero}. 
\begin{theorem}
\label{thm:TwoOperators}
    
    Consider two sets of orthogonal projectors $\{W_1^{s_1}\}_{s_1\in S_1}$ and $\{W_2^{s_2}\}_{s_2\in S_2}$ with $W_i=\sum_{s_i=1}^{S_i}W_i^{s_i}$ and $W_i^sW_i^{s'}=\delta_{ss'}W_i^s$. 
     Let $F_1=\frac{1}{2}\left(\Tr[W_1\sigma]+\Tr[W_2\sigma]\right)$, $F_1>\frac{1}{2}$ and 
    \begin{equation*}
        F_2=\frac{1}{2}\sum_{s_1=1,s_2=1}^{\abs{S_1}\abs{S_2}}\left\{\Tr[W_2^{s_2}W_1^{s_1}\sigma W_1^{s_1}]+\Tr[W_1^{s_1}W_2^{s_2}\sigma W_2^{s_2}]\right\}.
    \end{equation*}
    Then, we have
    \begin{equation*}
        F_2\geq\frac{2}{\max_i\abs{S_i}}\left(F_1-\frac{1}{2}\right)^2.
    \end{equation*}
\end{theorem}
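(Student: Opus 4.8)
The plan is to pass to pure states, then to reduce the two \emph{families} of projectors to the coarse projectors $W_1,W_2$ at the cost of a factor $\max_i\abs{S_i}$, and finally to prove the resulting two-projector inequality by Jordan's lemma together with a second Cauchy--Schwarz step. First, note that $F_1$ and $F_2$ are both linear in $\sigma$. So, writing $\sigma=\sum_k\lambda_k\dyad{\psi_k}$ and setting $\mu_k:=F_1(\dyad{\psi_k})-\frac12$ and $\mu:=F_1(\sigma)-\frac12=\sum_k\lambda_k\mu_k>0$, it suffices to prove the bound on pure states. Indeed, $F_2(\dyad{\psi_k})\ge 0$ always and, by the pure-state case, $F_2(\dyad{\psi_k})\ge\frac{2}{\max_i\abs{S_i}}\mu_k^2$ whenever $\mu_k>0$; discarding the terms with $\mu_k\le0$ and applying Cauchy--Schwarz gives $F_2(\sigma)\ge\frac{2}{\max_i\abs{S_i}}\big(\sum_{k:\mu_k>0}\lambda_k\mu_k\big)^2\ge\frac{2}{\max_i\abs{S_i}}\mu^2$, using $\sum_{k:\mu_k>0}\lambda_k\mu_k\ge\mu$ and $\sum_{k:\mu_k>0}\lambda_k\le1$.

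Fix a pure state $\ket\psi$. Since $W_2$ is a projector, $\sum_{s_2}\Tr[W_2^{s_2}W_1^{s_1}\dyad\psi W_1^{s_1}]=\norm{W_2W_1^{s_1}\ket\psi}^2$, and because $\sum_{s_1}W_2W_1^{s_1}\ket\psi=W_2W_1\ket\psi$, the triangle inequality and Cauchy--Schwarz give $\norm{W_2W_1\ket\psi}^2\le\abs{S_1}\sum_{s_1}\norm{W_2W_1^{s_1}\ket\psi}^2$. Treating the other half of $F_2$ symmetrically, $F_2\ge\frac{1}{2\max_i\abs{S_i}}\big(\norm{W_2W_1\ket\psi}^2+\norm{W_1W_2\ket\psi}^2\big)$, while $F_1$ is untouched since it involves only $W_1,W_2$. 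Hence it remains to show $\norm{W_2W_1\ket\psi}^2+\norm{W_1W_2\ket\psi}^2\ge 4(F_1-\frac12)^2$ with $F_1=\frac12(\norm{W_1\ket\psi}^2+\norm{W_2\ket\psi}^2)$, i.e.\ $\ge(\norm{W_1\ket\psi}^2+\norm{W_2\ket\psi}^2-1)^2$ since $2F_1-1>0$.

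For this last inequality I invoke Jordan's lemma: there is an orthogonal decomposition $\mathcal H=\bigoplus_k\mathcal H_k$ with $\dim\mathcal H_k\le2$ on which $W_1,W_2$ act block-diagonally, so writing $\ket\psi=\sum_k\ket{\psi_k}$ with $\ket{\psi_k}\in\mathcal H_k$ and $r_k:=\norm{\ket{\psi_k}}^2$, each norm above splits into a sum of per-block terms. In a two-dimensional block where $W_1,W_2$ restrict to rank-one projectors onto lines at angle $\theta_k$, a direct computation gives $\frac12(\norm{W_2W_1\ket{\psi_k}}^2+\norm{W_1W_2\ket{\psi_k}}^2)=\cos^2\theta_k\,t_k$ and $t_k\le\frac12 r_k(1+\cos\theta_k)$, where $t_k:=\frac12(\norm{W_1\ket{\psi_k}}^2+\norm{W_2\ket{\psi_k}}^2)$, the inequality being just the top eigenvalue of $W_1+W_2$ restricted to the block (one-dimensional and degenerate blocks are immediate). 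Setting $s_k:=2t_k-r_k$, one has $\sum_k s_k=2F_1-1>0$; for blocks with $s_k>0$ the bound $\cos\theta_k\ge 2t_k/r_k-1=s_k/r_k$ together with $t_k\ge r_k/2$ yields $\cos^2\theta_k\,t_k\ge s_k^2/(2r_k)$, and the remaining blocks contribute non-negatively, so
\begin{equation*}
\frac12\big(\norm{W_2W_1\ket\psi}^2+\norm{W_1W_2\ket\psi}^2\big)\ \ge\ \sum_{k:\,s_k>0}\frac{s_k^2}{2r_k}\ \ge\ \frac12\Big(\sum_{k:\,s_k>0}s_k\Big)^2\ \ge\ \frac12(2F_1-1)^2,
\end{equation*}
where the middle step is Cauchy--Schwarz against $\sum_{k:s_k>0}r_k\le1$ and the last uses that dropping the $s_k\le0$ terms only decreases the sum. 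This is exactly the two-projector inequality, and combining everything gives $F_2\ge\frac{2}{\max_i\abs{S_i}}(F_1-\frac12)^2$.

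The main obstacle is the aggregation over Jordan blocks: the blocks where $W_1+W_2$ is far from a rank-one projector (those with $s_k\le0$) contribute essentially nothing and cannot be bounded below blockwise, so one must see that they may be dropped while the surviving $s_k$ still account for the full deficit $2F_1-1$, and only then invoke Cauchy--Schwarz. Extracting the sharp constant $2$ --- rather than a cubic law with a large constant as in \Cref{thm:cha_thm1} --- depends on using both Cauchy--Schwarz steps (the one collapsing the refinements and the one over blocks) without introducing further slack, and on the exact eigenvalue bound $t_k\le\frac12 r_k(1+\cos\theta_k)$ in each block.
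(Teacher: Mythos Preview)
Your proof is correct and reaches the same sharp constant, but the route to the core two-projector inequality is genuinely different from the paper's. Both arguments begin identically: pass to pure states (you by convex decomposition plus a Cauchy--Schwarz aggregation over the mixture, the paper by purifying $\sigma$ to a single $\ket\Omega$), and then collapse the refinements $\{W_i^{s_i}\}$ to the coarse projectors $W_i$ at the cost of $1/\max_i\abs{S_i}$; your triangle-plus-Cauchy--Schwarz step here is exactly Proposition~4 of \cite{chailloux2017relativistic}, which the paper invokes by name. The divergence is in proving $\norm{W_2W_1\ket\psi}^2+\norm{W_1W_2\ket\psi}^2\ge(2F_1-1)^2$. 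The paper works directly in the at-most-three-dimensional span of $\ket\Omega$, $W_1\ket\Omega$, $W_2\ket\Omega$: it writes $\ket\Omega=\cos\alpha_i\ket{\phi_i}+\sin\alpha_i\ket{\phi_i^\perp}$, splits $W_2=\dyad{\phi_2}+W_2'$, computes $\norm{W_2W_1\ket\Omega}^2\ge\cos^2\alpha_1\cos^2(\alpha_1+\alpha_2)$ by hand, and then uses the identity $\cos(\alpha_1+\alpha_2)\cos(\alpha_1-\alpha_2)=\cos^2\alpha_1+\cos^2\alpha_2-1$ to convert to the deficit $2F_1-1$. You instead invoke Jordan's lemma to block-diagonalize into two-dimensional pieces, read off in each block the contribution $\cos^2\theta_k\,t_k$ together with the top-eigenvalue bound $2t_k\le r_k(1+\cos\theta_k)$ for $W_1+W_2$, and then aggregate over blocks with positive deficit $s_k=2t_k-r_k$ via Cauchy--Schwarz. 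Your approach is more structural and makes transparent why the law is quadratic (it is literally Cauchy--Schwarz against the block weights $r_k$), while the paper's direct computation avoids Jordan's lemma altogether and stays inside a single three-dimensional subspace determined by the state; both extract the same constant $2/\max_i\abs{S_i}$ with no slack.
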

\begin{proof}
We first consider a purification $\ket{\Omega}$ of the state $\sigma$, and extend the projectors $W_i^{s_i}$ to operate on the purified state.
More concretely, let $A$ be the subsystem that the mixed state $\sigma_A$ acts on, we can add a subsystem $B$ to get a purification $\ket{\Omega}_{AB}$ of the mixed state, i.e. $\Tr_B[\ket{\Omega}_{AB}]=\sigma_A$.
The projectors $W_i$ acting on $A$ can be extended to act on $AB$ by including an identity operator over the subsystem $B$.
We first use the inequality of $\sum_{s_i=1}^{\abs{S_i}}W_i^{s_i}\dyad{\psi}W_i^{s_i}\geq\frac{1}{\abs{S_i}}W_i\dyad{\psi}W_i$ (Proposition $4$ in~\cite{chailloux2017relativistic}) to simplify,
\begin{align*}
    E&=\frac{1}{2}\left\{\sum_{s_1=1}^{\abs{S_1}}\Tr[W_2W_1^{s_1}\dyad{\Omega} W_1^{s_1}]+\sum_{s_2=1}^{\abs{S_2}}\Tr[W_1W_2^{s_2}\dyad{\Omega} W_2^{s_2}]\right\}\\
    &\geq\frac{1}{2}\left\{\frac{1}{\abs{S_1}}\Tr[W_2W_1\dyad{\Omega} W_1]+\frac{1}{\abs{S_2}}\Tr[W_1W_2\dyad{\Omega} W_2]\right\}\\
    &\geq\frac{1}{2\max_i \abs{S_i}}\left\{\Tr[W_2W_1\dyad{\Omega} W_1]+\Tr[W_1W_2\dyad{\Omega} W_2]\right\}.
\end{align*}
Let us now examine this simplified case, where there is only one outcome for each projector.\\

Let $\ket{\phi_i}=\frac{W_i\ket{\Omega}}{\norm{W_i\ket{\Omega}}_1}$, which allows us to expand the purified state as
\begin{equation*}
    \ket{\Omega}=\cos(\alpha_i)\ket{\phi_i}+\sin(\alpha_i)\ket{\phi_i^{\perp}},
\end{equation*}
with $\alpha_i\in[0,\frac{\pi}{2}]$ (both sine and cosine positive) in general.
Let us write
\begin{align*}
    \ket{\phi_2}=&\cos(\alpha_2)\ket{\Omega}+\sin(\alpha_2)\ket{B}\\
    \ket{\phi_1}=&\cos(\alpha_1)\ket{\Omega}+x\ket{A}+y\ket{B},
\end{align*}
with $\cos[2](\alpha_1)+\abs{x}^2+\abs{y}^2=1$, and $\{\ket{\Omega},\ket{A},\ket{B}\}$ form an orthonormal set.
We note that the first expression is valid since any complex coefficient present for $\ket{B}$ can be absorbed into $\ket{B}$, while the second expansion introduces an orthogonal $\ket{A}$ (e.g. via Gaussian elimination).
Since $W_2\ket{\phi_2}=\ket{\phi_2}$ using the definition of a projector ($W_2^2=W_2$), we can in general define 
\begin{equation*}
    W_2=\dyad{\phi_2}+W_2',
\end{equation*}
where $W_2'\ket{\phi_2}=0$.
Expanding $W_2\ket{\Omega}=\cos(\alpha_2)\ket{\phi_2}+W_2'\ket{\Omega}$, we immediately have that $W_2'\ket{\Omega}=0$ and consequently $W_2'\ket{B}=0$ (by expanding $W_2'\ket{\phi_2}$). 
We have thus
\begin{align*}
    W_2W_1\ket{\Omega}=&\cos(\alpha_1)W_2\ket{\phi_1}\\
    =&\cos(\alpha_1)[\braket{\phi_2}{\phi_1}\ket{\phi_2}+W_2'\ket{\phi_1}]\\
    =&\cos(\alpha_1)[(\cos\alpha_1\cos\alpha_2+y\sin\alpha_2)\ket{\phi_2}+xW_2'\ket{A}]\\
    =&\cos\alpha_1[(\cos\alpha_1\cos\alpha_2+y\sin\alpha_2)\ket{\phi_2}+z\ket{A'}],
\end{align*}
where we have $\abs{z}\leq\abs{x}$ and $\ket{A'}$ being orthogonal to $\ket{B}$ and $\ket{\Omega}$ (since $W_2'$ applied on these states give 0).
The first line applies $W_1$ on $\ket{\Omega}$, the second line expands $W_2$, the third line expands $\ket{\phi_1}$, while the final line defines the projection outcome of $xW_2'\ket{A}$.\\

Therefore, we can simplify the trace as
\begin{align*}
    \Tr[W_2W_1\dyad{\Omega}W_1]&=\norm{W_2W_1\ket{\Omega}}_1^2\\
    &=\cos^2\alpha_1\left[\abs{\cos\alpha_1\cos\alpha_2+y\sin\alpha_2}^2+\abs{z}^2\right]\\
    &\geq \cos^2\alpha_1\abs{\cos\alpha_1\cos\alpha_2+y\sin\alpha_2}^2,
\end{align*}
since $\ket{\phi_2}$ and $\ket{A'}$ are orthogonal and $\abs{z}^2\geq0$.
We note that in the expansion of $\ket{\phi_1}$, we have $\abs{y}^2\leq\sin^2\alpha_1$.
Since $\cos\alpha_1\cos\alpha_2$ is real and $y$ can be complex, labelling $y=re^{i\theta}$, the modulus can be interpreted as the distance from the origin for a vector which is a sum of a vector on the real line of length $\cos\alpha_1\cos\alpha_2$ and a vector which can lie anywhere on a circle with length $\abs{y}\sin\alpha_2$.
We can thus bound
\begin{align*}
    \abs{\cos\alpha_1\cos\alpha_2+y\sin\alpha_2}^2&\geq(\cos\alpha_1\cos\alpha_2-\abs{y}\sin\alpha_2)^2\\
    &\geq(\cos\alpha_1\cos\alpha_2-\sin\alpha_1\sin\alpha_2)^2\\
    &=\cos(\alpha_1+\alpha_2)^2,
\end{align*}
where the second inequality uses the fact that $\cos\alpha_1\cos\alpha_2-\sin\alpha_1\sin\alpha_2>0$ from the fact that $\alpha_1,\alpha_2\in \left[0,\frac{\Pi}{4}\right]$ and $F_1>\frac{1}{2}$.
Since $\cos(\alpha_1+\alpha_2)\geq\cos^2\alpha_1+\cos^2\alpha_2-1$, we have that
\begin{align*}
    \Tr[W_2W_1\dyad{\Omega}W_1]&\geq v_{min}\cos^2\alpha_1\\
    & \geq \cos^2(\alpha_1+\alpha_2)\cos^2(\alpha_1-\alpha_2)\cos^2\alpha_1\\
    & \geq (\cos^2\alpha_1+\cos^2\alpha_2-1)^2\cos^2\alpha_1.
\end{align*}
A similar argument can be made for the other trace term, with $W_1$ and $W_2$ switched, and by symmetry, we expect
\begin{equation*}
    \Tr[W_2W_1\dyad{\Omega}W_1]+\Tr[W_1W_2\dyad{\Omega}W_2]\geq(\cos^2\alpha_1+\cos^2\alpha_2-1)^2(\cos^2\alpha_1+\cos^2\alpha_2)
\end{equation*}
We note that we can expand $F_1=\frac{1}{2}(\cos^2\alpha_1+\cos^2\alpha_2)$ and define $F_1=\frac{1}{2}+\epsilon$, giving us
\begin{equation*}
    \Tr[W_2W_1\dyad{\Omega}W_1]+\Tr[W_1W_2\dyad{\Omega}W_2]\geq(2\epsilon)^2(1+2\epsilon)\geq 4\epsilon^2.
\end{equation*}
As such, combining the results, we have
\begin{align*}
    F_2&\geq\frac{1}{2\max_i \abs{S_i}}\left\{\Tr[W_2W_1\dyad{\Omega} W_1]+\Tr[W_1W_2\dyad{\Omega} W_2]\right\}\\
    &\geq\frac{1}{2\max_i \abs{S_i}}\times 4\epsilon^2\\
    &=\frac{2}{\max_i \abs{S_i}}\left(F_1-\frac{1}{2}\right)^2.
\end{align*}
\end{proof}

\subsection{Two Choice Coupling Games}

Here, we consider entangled games $G^2$ involving two parties, with the first party receiving input $x\in X$, while the second party receives one of two possible inputs, $y\in\{0,1\}$.
Without communicating with each other, the parties respond with $a\in A$ and $b\in B$ respectively, and win when $V(a,b,x,y)=1$.
We assume that $x$ and $y$ are selected randomly.
We define the coupling game $G^2_{coup}$ with the same setup, except that Bob would be asked additionally $\bar{y}$, and responds with $b'\in B$.
The winning condition would be then $V(a,b,x,y)=V(a,b',x,\bar{y})=1$.\\

The coupling game is closely related to the original game, as shown in the theorem below.
We can thus bound the winning probability of the original game in the soundness analysis by bounding the winning probability of the coupling game.
\begin{theorem}
\label{thm:TwoChoiceCoupling}
    Consider an entangled game with two possible inputs for Bob, $G^2$ with $\omega^*(G^2)>\frac{1}{2}$. Then,
    \begin{equation*}
        \omega^*(G_{coup}^2)\geq \frac{2}{\abs{S_{max}}} \left(\omega^*(G^2)-\frac{1}{2}\right)^2,
    \end{equation*}
    where $\omega^*(G^2)$ is the winning probability of $G^2$, $\omega^*(G_{coup}^2)$ is the winning probability of the corresponding coupling game, and $\abs{S_{max}}$ is the maximum number of accepted responses from Bob, i.e. $\abs{S_{max}}=\max_{a,x,y}\abs{\{b:V(a,b,x,y)=1\}}$.
\end{theorem}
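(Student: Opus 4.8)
The plan is to reduce Theorem~\ref{thm:TwoChoiceCoupling} to the consecutive–measurement bound of Theorem~\ref{thm:TwoOperators}, applied prover-by-prover after conditioning on the first party's outcome. First I would fix an arbitrary entangled strategy for $G^2$: a shared state $\ket{\Psi}$, a measurement for the first party on each input $x$, and, after a Naimark dilation absorbing the second party's ancillas into the shared system, projective measurements $\{W^b_0\}_b$ and $\{W^{b}_1\}_b$ for the second party on inputs $y=0$ and $y=1$; denote its value by $v$ and assume throughout that $y$ is uniform on $\{0,1\}$ and independent of $x$. Conditioning on the first party obtaining outcome $a$ on input $x$ — an event of probability $p(a\mid x)$ — let $\sigma^{x,a}$ be the induced normalised reduced state of the second party.

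For fixed $(x,a)$, set $S_1 := \{b : V(a,b,x,0)=1\}$ and $S_2 := \{b : V(a,b,x,1)=1\}$, so $\abs{S_1},\abs{S_2}\le \abs{S_{max}}$, and let $W_1 := \sum_{b\in S_1}W^b_0$ and $W_2 := \sum_{b\in S_2}W^b_1$. These are exactly the hypotheses of Theorem~\ref{thm:TwoOperators} applied with $\sigma=\sigma^{x,a}$ and $\max_i\abs{S_i}\le\abs{S_{max}}$, the quantity $F_1^{x,a} := \tfrac12(\Tr[W_1\sigma^{x,a}]+\Tr[W_2\sigma^{x,a}])$ being precisely the conditional winning probability of $G^2$ given $(x,a)$; averaging over $(x,a)$ gives $\mathbb{E}[F_1^{x,a}]=v$. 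For the coupling game I would keep the first party unchanged and have the second party flip a fair coin, then apply $\{W^b_0\}_b$ and $\{W^{b}_1\}_b$ consecutively in the chosen order, returning both outcomes; conditioned on $(x,a)$ the probability that both returned answers are accepted is exactly the quantity $F_2$ of Theorem~\ref{thm:TwoOperators}, call it $F_2^{x,a}$, so $\omega^*(G_{coup}^2)\ge \mathbb{E}[F_2^{x,a}]$.

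By Theorem~\ref{thm:TwoOperators}, $F_2^{x,a}\ge \tfrac{2}{\abs{S_{max}}}(F_1^{x,a}-\tfrac12)^2$ whenever $F_1^{x,a}>\tfrac12$, and trivially $F_2^{x,a}\ge 0$ always, so if I define the clipped parabola $g(t):=\tfrac{2}{\abs{S_{max}}}\max(0,t-\tfrac12)^2$, which vanishes for $t\le\tfrac12$, then $F_2^{x,a}\ge g(F_1^{x,a})$ holds pointwise in $(x,a)$. Since $g$ is convex on $\mathbb{R}$ (its derivative is continuous and non-decreasing), Jensen's inequality gives $\mathbb{E}[F_2^{x,a}]\ge \mathbb{E}[g(F_1^{x,a})]\ge g(\mathbb{E}[F_1^{x,a}])=g(v)=\tfrac{2}{\abs{S_{max}}}(v-\tfrac12)^2$ whenever $v>\tfrac12$. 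Chaining $\omega^*(G_{coup}^2)\ge \mathbb{E}[F_2^{x,a}]\ge \tfrac{2}{\abs{S_{max}}}(v-\tfrac12)^2$ and taking the supremum over all entangled strategies (so $v\to\omega^*(G^2)$, using the hypothesis $\omega^*(G^2)>\tfrac12$) yields the claim.

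I expect the obstacles to be bookkeeping rather than conceptual. The two points that need care are: (i) justifying the reduction to projective second-party measurements and the uniformity/independence of $y$, so that the conditional winning probability genuinely coincides with the $F_1$ of Theorem~\ref{thm:TwoOperators} and with the $F_2$ of the randomized-order coupling strategy; and (ii) the treatment of those $(x,a)$ with $F_1^{x,a}\le\tfrac12$, which is exactly why one must pass to the clipped function $g$ and invoke its convexity — a naive application of Jensen to $t\mapsto(t-\tfrac12)^2$ would fail, since that function is not monotone and does not dominate $F_2^{x,a}$ on the low-$F_1$ region.
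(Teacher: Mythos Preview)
Your proposal is correct and follows essentially the same route as the paper: condition on Alice's input and outcome $(x,a)$, identify the conditional winning probabilities of $G^2$ and of the coupling strategy with the quantities $F_1$ and $F_2$ of Theorem~\ref{thm:TwoOperators}, and then combine that theorem with Jensen's inequality. Your explicit passage to the clipped convex function $g(t)=\tfrac{2}{|S_{max}|}\max(0,t-\tfrac12)^2$ to handle those $(x,a)$ with $F_1^{x,a}\le\tfrac12$ is in fact more careful than the paper's own proof, which applies Theorem~\ref{thm:TwoOperators} pointwise and then Jensen while only ``noting that $F_1\ge\tfrac12$,'' a remark that is justified for the average but not for each individual $(x,a)$.
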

The proof of the above statement follows closely the analysis given in \cite{chailloux2017relativistic}, and so we defer it to \Cref{sec:app-coupling}

\subsection{Improved Bounds}
\label{sec:soundnessimprove}
We use \Cref{thm:TwoChoiceCoupling} to arrive at the following theorems using analyses similar to \cite{chailloux2017relativistic} and \cite{crepeau2023zero}, respectively, and we defer the proofs to \Cref{sec:chai-sound} and \Cref{sec:crep-sound} respectively.
\begin{theorem}
\label{thm:chai-sound}
The $\F_Q$-commitment based two-prover relativistic zero-knowledge proof protocol for the Hamiltonian cycle that is proposed in~\cite{chailloux2017relativistic} (See Protocol \ref{protocol:hamiltonian-cycle}) has soundness error $$\epsilon_{sound} \leq \frac{1}{2} + \left(\frac{n!}{2Q}\right)^{\frac{1}{2}},$$
where $n$ is the number of nodes in the underlying graph.
\end{theorem}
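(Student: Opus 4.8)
The plan is to recast the protocol of \cite{chailloux2017relativistic} (Protocol~\ref{protocol:hamiltonian-cycle}) as a two-choice entangled game and apply \Cref{thm:TwoChoiceCoupling}, so that its quadratic coupling bound replaces the cubic bound of \cite{chailloux2017relativistic} (\Cref{thm:cha_thm1}) in an otherwise identical argument. Since the protocol is a relativistic $\Sigma$-protocol, the spacelike-separation timing check forces any malicious provers $(P_1^*,P_2^*)$ to act as non-communicating parties within each round, so I would model the interaction as an entangled game $G^2$ in which the first party receives $x=rand$ and answers with the commitment $a=com$, the second party receives the challenge $y=ch\in\{0,1\}$ and answers with $b=resp$, and $V(a,b,x,y)=1$ exactly when the transcript $(rand,com,ch,resp)$ is accepting. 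The soundness error against entangled provers is then $\sup_{G}\omega^*(G^2)$ over graphs $G$ with no Hamiltonian cycle, so it suffices to bound $\omega^*(G^2)$ uniformly over such $G$. If $\omega^*(G^2)\le\tfrac12$ the claim is immediate; otherwise \Cref{thm:TwoChoiceCoupling} gives
\begin{equation*}
\omega^*(G^2)\;\le\;\frac12+\sqrt{\frac{\abs{S_{max}}}{2}\,\omega^*(G^2_{coup})},
\end{equation*}
where $\abs{S_{max}}$ is the largest number of accepting second-prover responses for a fixed commitment and challenge, and $G^2_{coup}$ is the coupling game in which the second prover must answer both $ch=0$ and $ch=1$ consistently with the same $com$.

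It then remains to bound the two quantities on the right. For $\abs{S_{max}}$: when $ch=0$ the only accepting response is the unique opening of the committed permutation and adjacency matrix consistent with $com$, so there is at most one; when $ch=1$ an accepting response amounts to a choice of Hamiltonian cycle $C'$ on the $n$ vertices together with the openings (forced once $rand$ and $com$ are fixed) of the corresponding matrix entries to $1$, so the number of accepting responses is at most the number of Hamiltonian cycles on $n$ vertices, giving $\abs{S_{max}}\le(n-1)!$. For $\omega^*(G^2_{coup})$ I would reuse the binding analysis of \cite{chailloux2017relativistic}: fixing $rand$ and $com$, an accepting $ch=0$ response pins down a permutation $\pi_0$ and forces every matrix-entry commitment to open to the corresponding bit of the adjacency matrix of $\pi_0(G)$, while an accepting $ch=1$ response reveals a Hamiltonian cycle $C'$ with every edge-commitment along $C'$ opened to $1$; since $\pi_0(G)$ inherits from $G$ the absence of a Hamiltonian cycle, $C'$ must contain an edge $(u,v)$ missing from $\pi_0(G)$, so the commitment to that matrix entry is opened both to $0$ and to $1$. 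Localizing this conflict to the $n$ edges of the revealed cycle and invoking the $\tfrac1Q$-fairly-binding property of the $\F_Q$-commitment (\Cref{def:fairly-binding}), a union bound over those edges yields $\omega^*(G^2_{coup})\le\tfrac{n}{Q}$, and substituting,
\begin{equation*}
\omega^*(G^2)\;\le\;\frac12+\sqrt{\frac{(n-1)!}{2}\cdot\frac{n}{Q}}\;=\;\frac12+\left(\frac{n!}{2Q}\right)^{1/2}.
\end{equation*}

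I expect the main obstacle to be the coupling-game estimate $\omega^*(G^2_{coup})\le\tfrac nQ$: a naive union bound over all $\Theta(n^2)$ matrix-entry commitments would cost an extra factor of $n$ and break the match with $\abs{S_{max}}\le(n-1)!$, so one must argue that the binding violation is confined to the $n$ edges actually opened in the $ch=1$ response (equivalently, that the revealed permutation pins down which commitments are forced to conflict) and apply $\tfrac1Q$-fairly-binding carefully in the entangled, two-opening setting. This part is precisely where I would follow \cite{chailloux2017relativistic} essentially verbatim; the genuinely new ingredient is only the substitution of \Cref{thm:TwoChoiceCoupling} for their cubic bound, after which the remaining steps, including the bookkeeping needed to land exactly on the stated constants, are routine.
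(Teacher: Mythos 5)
Your high-level plan is exactly the paper's: cast the protocol as a two-choice entangled game, apply \Cref{thm:TwoChoiceCoupling}, and then bound $\abs{S_{max}}$ and $\omega^*(G^2_{coup})$. However, both of your intermediate bounds are wrong, and they happen to cancel out to give the right final number, which masks the problem.

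First, $\abs{S_{max}}$. You claim the $ch=0$ response is "the unique opening of the committed permutation." But in Protocol~\ref{protocol:hamiltonian-cycle} (unlike Protocol~\ref{protocol:general-hc}), the permutation $\Pi$ is \emph{not} committed to --- it is revealed only in the $ch=0$ response. Given $com = Y$ and $rand = B$, $P_2$ may output any permutation $\Pi$ together with the uniquely-determined matrix $A_{i,j} = Y_{i,j} - B_{i,j}(M_{\Pi(G)})_{i,j}$, and the verifier's check passes. So there are $n!$ accepting responses at $ch=0$, and that is where the $\max$ in $\abs{S_{max}}$ is attained. The $ch=1$ branch, which you counted, is the smaller of the two.

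Second, $\omega^*(G^2_{coup})$. You invoke a union bound over the $n$ edges of the revealed cycle to get $n/Q$, and you flag this as the delicate step. In fact no union bound is needed at all: conditioned on both transcripts being accepting and $G$ having no Hamiltonian cycle, $\Pi(G)$ has no Hamiltonian cycle either, so there is \emph{some} edge $(u,v)\in C'$ with $(M_{\Pi(G)})_{u,v}=0$. Choosing one such edge deterministically as a function of Bob's output, the accepting conditions for $ch=0$ and $ch=1$ force $A_{u,v}-A'_{u,v}=B_{u,v}$. Since $A_{u,v}$, $A'_{u,v}$, and the index $(u,v)$ are all determined by Bob's output, which by non-signalling is independent of $B$, this single equality holds with probability at most $1/Q$. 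So $\omega^*(G^2_{coup})\le 1/Q$.

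The compensating errors give $\sqrt{\frac{(n-1)!}{2}\cdot\frac{n}{Q}} = \sqrt{\frac{n!}{2Q}}$, matching the stated bound, but the decomposition you used is incorrect. The paper's actual factorization is $\abs{S_{max}}=n!$ (from permutations at $ch=0$) and $\omega^*(G^2_{coup})\le 1/Q$ (single bad edge, non-signalling), which is what you should aim for. Your closing worry about a union bound over $\Theta(n^2)$ entries is moot once you realize the argument pins down a single edge and needs no union bound.
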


We can have a similar analysis with the subset-sum ZKP in~\cite{crepeau2023zero},
\begin{theorem}
\label{thm:crep-sound}
The $\F_Q$-commitment based two-prover relativistic zero-knowledge proof protocol for the subset sum that is proposed in~\cite{crepeau2023zero} (See Protocol \ref{protocol:subset-sum})  has soundness error $$\epsilon_{sound} \leq \frac{1}{2} + \left(\frac{2^{n-1}}{Q}\right)^{\frac{1}{2}},$$
where $n$ is the size of the set.
\end{theorem}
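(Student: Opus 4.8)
The plan is to run the soundness argument of \cite{crepeau2023zero} almost verbatim, but to replace their use of the cubic bound (\Cref{thm:cha_thm1} specialised to $c=2$) with our tighter two-choice coupling bound \Cref{thm:TwoChoiceCoupling}; this substitution is exactly what upgrades the exponent from $\frac13$ to $\frac12$ and removes the factor $64$. First I would fix an arbitrary no-instance $x$ of subset-sum together with an arbitrary entangled cheating strategy for the two provers. Since the provers are spacelike separated within each round of Protocol \ref{protocol:subset-sum} (cf.\ \Cref{def:rel-sigma-protocol}), their joint behaviour is precisely that of a two-party entangled game $G^2$ in which the first party receives $P_1$'s input, i.e.\ the vector of $\F_Q$-commitment challenges, and returns the commitment string, while the second party receives the single challenge bit $ch\in\{0,1\}$ and returns an opening. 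The soundness error of the protocol is $\sup_x\omega^*(G^2)$ over no-instances, so it suffices to bound $\omega^*(G^2)$, and we may assume $\omega^*(G^2)>\frac12$ since otherwise the claim is immediate. \Cref{thm:TwoChoiceCoupling} then yields
\begin{equation*}
\Bigl(\omega^*(G^2)-\frac12\Bigr)^2 \;\le\; \frac{\abs{S_{max}}}{2}\,\omega^*(G^2_{coup}),
\end{equation*}
where $G^2_{coup}$ is the game in which the second party must answer \emph{both} challenges $ch=0$ and $ch=1$, and $\abs{S_{max}}$ is the largest number of openings that the verifier accepts for a fixed commitment string and a fixed challenge.

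Next I would bound $\omega^*(G^2_{coup})$, which is the combinatorial heart of the argument and coincides with the corresponding step in \cite{crepeau2023zero}. Because $x$ is a no-instance, no assignment of committed values can simultaneously satisfy the $ch=0$ check (which reveals the full permuted instance) and the $ch=1$ check (which exhibits a subset opened to $1$ and summing to the target): winning both forces the provers to open at least one $\F_Q$-commitment to two distinct field elements. Enumerating the at most $2^{n}$ subsets that could witness such a discrepancy and applying the $\frac1Q$-fairly-binding property of the $\F_Q$-commitment scheme (\Cref{def:fairly-binding}; cf.\ \cite[Proposition~3.15]{fehr2016composition}) to the commitment pinned down by each candidate subset, a union bound gives $\omega^*(G^2_{coup})\le 2^{n}/Q$. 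Instantiating $\abs{S_{max}}=1$ exactly as in \cite{crepeau2023zero} — given the commitment string and the challenge, the verifier's structural check leaves the accepting opening essentially determined — and substituting into the displayed inequality gives $\bigl(\omega^*(G^2)-\frac12\bigr)^2\le 2^{n-1}/Q$, hence $\omega^*(G^2)\le\frac12+(2^{n-1}/Q)^{1/2}$; taking the supremum over no-instances finishes the proof. (The same recipe, applied with \Cref{thm:TwoChoiceCoupling} specialised to Protocol \ref{protocol:hamiltonian-cycle} and a union bound over the $\le n!$ Hamiltonian cycles, yields \Cref{thm:chai-sound}.)

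The main obstacle is the middle step. One must check carefully that winning the coupling game on a no-instance genuinely forces a \emph{simultaneous opening} of a single commitment to two values — an event the fairly-binding property bounds — rather than merely some structural inconsistency that fairly-binding does not directly control; and one must track both $\abs{S_{max}}$ and the exact cardinality of the union bound so that the constants collapse to the clean $2^{n-1}$ rather than to a slightly weaker expression. Both points are already settled in the original analysis of \cite{crepeau2023zero}, so the genuinely new content is only the replacement of the cubic bound used there by \Cref{thm:TwoChoiceCoupling}; everything else — the game-theoretic modelling of the relativistic protocol, the reduction of equivocation to fairly-binding, and the union-bound bookkeeping — is inherited from that work, which is why the full proof can be deferred to \Cref{sec:crep-sound}.
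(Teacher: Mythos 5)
Your overall plan is right: the improvement over \cite{crepeau2023zero} comes from replacing the cubic consecutive-measurement bound with \Cref{thm:TwoChoiceCoupling}, and the rest of the argument is inherited from the original soundness analysis. However, your decomposition of the two ingredients that feed into \Cref{thm:TwoChoiceCoupling} is wrong in both factors, and they happen to cancel to give the right final answer, which masks the error.

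Concretely, $\abs{S_{max}}$ in \Cref{thm:TwoChoiceCoupling} is $\max_{a,x,y}\abs{\{b:V(a,b,x,y)=1\}}$: the number of Bob-responses the verifier would accept for a \emph{fixed} commitment string, a fixed $a\in\F_Q$, and a fixed challenge $ch$. In Protocol \ref{protocol:subset-sum} with $ch=0$, an opening $(z,c_0,c_1)$ is accepted whenever $w_0=a\cdot(s*z)+c_0$ and $w_1=a\cdot(s*\bar z)+c_1$; for each of the $2^n$ choices of $z\in\{0,1\}^n$ there is exactly one accepting pair $(c_0,c_1)$, so there are $2^n$ accepting openings. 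The same count holds for $ch=1$ (one accepting $c'$ per choice of $x\in\{0,1\}^n$). Hence $\abs{S_{max}}=2^n$, not $1$; the claim that the commitment ``essentially determines'' the opening is false for this scheme, since it hides $z$ perfectly. Correspondingly, the paper bounds the coupling game by $\omega^*(G^2_{coup})\leq\frac{1}{Q}$ directly from no-signalling (Proposition~4 of \cite{crepeau2023zero}): in the coupling game Bob's two measurement outcomes $(z,c_0,c_1)$ and $(x,c')$ are fixed, and on a no-instance the combined acceptance condition pins $a$ to a single field element, so the win probability over random $a$ is at most $\frac{1}{Q}$. No union bound over $2^n$ subsets is needed, and the $2^n/Q$ figure you derive is a strictly weaker bound on the coupling game. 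Your two compensating factors of $2^n$ conspire to produce $\frac{2^{n-1}}{Q}$, but the derivation is incorrect. The correct instantiation is $\abs{S_{max}}=2^n$ and $\omega^*(G^2_{coup})\leq\frac{1}{Q}$, giving $\frac{1}{Q}\geq\frac{2}{2^n}\bigl(\omega^*(G^2)-\frac{1}{2}\bigr)^2$ and hence the claimed bound.
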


\section*{Acknowledgments}
    The authors thank Ziyue Xin and Claude Crepeau for identifying and bringing attention to issues in the proofs.

\section*{Disclaimer}

This paper was prepared for informational purposes by the Global Technology Applied Research center of JPMorgan Chase \& Co. This paper is not a product of the Research Department of JPMorgan Chase \& Co. or its affiliates. Neither JPMorgan Chase \& Co. nor any of its affiliates makes any explicit or implied representation or warranty and none of them accept any liability in connection with this paper, including, without limitation, with respect to the completeness, accuracy, or reliability of the information contained herein and the potential legal, compliance, tax, or accounting effects thereof. This document is not intended as investment research or investment advice, or as a recommendation, offer, or solicitation for the purchase or sale of any security, financial instrument, financial product or service, or to be used in any way for evaluating the merits of participating in any transaction.

\bibliography{submission/bibliography}
\bibliographystyle{ieeetr}

\newpage
\appendix
\section{Appendix: Exemplary Relativistic $\Sigma$-Protocol}\label{app:ex_sigma}
In this section, we describe relativistic zero-knowledge proof systems for Hamiltonian Cycle and Subset-Sum from the literature, and use our results from \Cref{sec:QPoKswithSpec} to prove that they are indeed QPoKs. 

\subsection{Hamiltonian Cycle~\cite{chailloux2017relativistic}}

\begin{protocol}{Relativistic $\Sigma$-protocol for Hamiltonian Cycle~\cite{chailloux2017relativistic}\label{protocol:hamiltonian-cycle}}
\vspace{0.2mm}
$P_1$ and $P_2$ pre-agree on a Hamiltonian cycle $\mathcal{C}$ of a given graph $G=(\mathcal{V},H)$. They also agree beforehand on a random permutation $\Pi: \mathcal{V} \rightarrow \mathcal{V}$ and on an $n \times n$ matrix $A \in \mathcal{M}_n^{\mathbb{F}_Q}$ where each element of A is chosen uniformly at random in $\mathbb{F}_Q$.
\begin{enumerate}
    \item \text{[}Commitment to each bit of $M_{\Pi(G)}$.\text{]} $V$ sends a matrix $B \in \mathcal{M}_n^{\mathbb{F}_Q}$ to $P_1$, where each element of $B$ is chosen uniformly at random in $\mathbb{F}_Q.$ $ P_1$ outputs the matrix $Y \in \mathcal{M}_n^{\mathbb{F}_Q}$ such that $\forall i, j \in [n], Y_{i, j}=$ $A_{i, j}+\left(B_{i, j} *\left(M_{\Pi(G)}\right)_{i, j}\right)$.
    \item \text{[}Challenge phase.\text{]} $V$ sends a random bit $ ch  \in\{0,1\}$ to the $P_2$.
    \begin{itemize}
        \item If $ch=0, P_2$ decommits to all the elements of $M_{\Pi(G)}$, i.e. he sends all the elements of $A$ to $V_2$ and reveals $\Pi$.
        \item If $ch=1, P_2$ reveals only the bits (of value 1) of the adjacency matrix that correspond to a Hamiltonian cycle $\mathcal{C'}$ of $\Pi(G)$, i.e. for all edges $(u, v)$ of $\mathcal{C'}$, he sends $A_{u, v}$ as well as $\mathcal{C'}$.
    \end{itemize}
    \item \text{[}Check phase.\text{]} $V$ checks that those decommitments are valid and correspond to what $P_1$ has declared. $V$ also checks that the timing constraint of the bit commitment is satisfied. This means that
    \begin{itemize}
        \item If $ch=0$, $P_2$'s opening $A$ must satisfy $\forall i, j \in[n], Y_{i, j}=A_{i, j}+\left(B_{i, j} *\left(M_{\Pi(G)}\right)_{i, j}\right)$.
        \item If $ch=1$, $P_2$'s opening $A$ must satisfy $\forall(u, v) \in \mathcal{C'}, Y_{u, v}=A_{u, v}+B_{u, v}$.
    \end{itemize}
\end{enumerate}
\end{protocol}

\begin{theorem}
    The above proof system $\langle (P_1,P_2),V\rangle$ is a zero-knowledge QPoK for Hamiltonian Cycle with knowledge error $\frac{1}{2}+\xi$ for some $\xi$ negligible in $\eta$. 
\end{theorem}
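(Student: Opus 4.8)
The plan is to verify that Protocol \ref{protocol:hamiltonian-cycle} satisfies the hypotheses of \Cref{thm:sig-QPoK}: it is a relativistic $\Sigma$-protocol in the sense of \Cref{def:rel-sigma-protocol}, its challenge space has size $c=2$, and it enjoys $\delta_{SS}$-special soundness (\Cref{defn:dss-special-soundness}) for a negligible $\delta_{SS}$; the zero-knowledge part is then imported directly from \cite{chailloux2017relativistic}. For the first point one simply reads off the correspondence $rand = B$, $com = Y$, $ch \in \{0,1\}$, with $resp$ equal to the opening information ($(\Pi,A)$ if $ch=0$, or $(\mathcal{C}',\{A_{u,v}\}_{(u,v)\in\mathcal C'})$ if $ch=1$), pre-shared randomness $(\mathcal{C},\Pi,A)$, and the deterministic verification of step~3; the timing conditions coincide with those in \Cref{def:rel-sigma-protocol}. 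In fact Protocol \ref{protocol:hamiltonian-cycle} is essentially Protocol \ref{protocol:general-hc} instantiated with the $\mathbb{F}_Q$-bit commitment scheme, the only cosmetic difference being that $\Pi$ here lives in the shared randomness and is revealed directly rather than committed entry by entry.

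The technical core is the special soundness argument, which I would run in close parallel to the proof of \Cref{prop:SpecSound-HC}. Fix $rand=B$ and $com=Y$, and suppose $(B,Y,0,resp)$ and $(B,Y,1,resp')$ are both accepting, with $ch=0$ and $ch'=1$. Let the special extractor $K_0$ output $w:=\Pi^{-1}(\mathcal{C}')$, with $\Pi$ taken from the $ch=0$ transcript and $\mathcal{C}'$ from the $ch=1$ transcript. Since $\Pi$ is a permutation, $w$ is a Hamiltonian cycle of $G$ unless some edge $(u,v)\in\mathcal{C}'$ has $(M_{\Pi(G)})_{u,v}=0$; but the accepting $ch=0$ check $Y_{u,v}=A_{u,v}+B_{u,v}\cdot(M_{\Pi(G)})_{u,v}$ forces the commitment for entry $(u,v)$ to be opened to $0$, while the accepting $ch=1$ check $Y_{u,v}=A_{u,v}+B_{u,v}$ forces the same commitment to be opened to $1$. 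That is precisely a simultaneous opening of the $\mathbb{F}_Q$ commitment for entry $(u,v)$, so by \cite[Proposition 3.15]{fehr2016composition} — which shows the $\mathbb{F}_Q$ scheme is $\tfrac1Q$-fairly-binding in the sense of \Cref{def:fairly-binding} — and using that the distribution on $(resp,resp')$ contains the single-entry opening distributions for $(u,v)$ as marginals, the probability of this event conditioned on both rewound transcripts accepting is at most $\tfrac{1}{Q\,p_{acc}}$, where $p_{acc}$ denotes the probability that both transcripts accept. Hence the protocol has $\delta_{SS}$-special soundness with $\delta_{SS}\le \tfrac{1}{Q\,p_{acc}}$.

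To finish, I would lower-bound $p_{acc}$ via \Cref{thm:cha_thm1}, $p_{acc}\ge \tfrac{1}{64|\textsf{RA}_{max}|}\bigl(\Pr[V\,accepts]-\tfrac12\bigr)^3$, with $|\textsf{RA}_{max}|$ independent of $\eta$ \cite{chailloux2017relativistic}. Arguing for knowledge error $\tfrac12+\xi$, the assumption $\Pr[V\,accepts]>\tfrac12+\xi$ makes $p_{acc}$ bounded below by $\tfrac{\xi^3}{64|\textsf{RA}_{max}|}$, so $\delta_{SS}\le 64|\textsf{RA}_{max}|\,\xi^{-3}/Q$, which is negligible because $Q$ is chosen exponentially large in $\eta$; in particular $\delta_{SS}<1/p(\eta)$ for the relevant polynomial. \Cref{thm:sig-QPoK} then yields a QPoK with knowledge error $\tfrac1c=\tfrac12$, and absorbing the negligible slack gives knowledge error $\tfrac12+\xi$; together with the perfect zero knowledge established in \cite{chailloux2017relativistic} this is exactly the claim, mirroring \Cref{cor:QPoK_HC}. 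The main obstacle is the marginalization step in the special soundness argument: one must check that an arbitrary unbounded, entangled two-prover strategy — whose acceptance in the two rewound executions is what we condition on — still cannot simultaneously open a single $\mathbb{F}_Q$ commitment with probability exceeding the bare $\tfrac1Q$ bound, i.e. that the stand-alone fairly-binding guarantee of \Cref{def:fairly-binding} transfers to the commitment embedded inside the larger proof system. This is handled exactly as in \Cref{prop:SpecSound-HC}, by isolating the response distribution for the offending edge $(u,v)$ and treating the remaining protocol checks as part of an adversarial environment.
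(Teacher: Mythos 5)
Your proposal is correct and follows essentially the same route as the paper: identify the special extractor $K_0 = \Pi^{-1}(\mathcal{C}')$, argue that failure forces a simultaneous opening of the $\mathbb{F}_Q$ commitment for some edge, bound that event, divide by $p_{acc}$, lower-bound $p_{acc}$ via \Cref{thm:cha_thm1}, and then invoke \Cref{thm:sig-QPoK} and cite \cite{chailloux2017relativistic} for zero knowledge. The one point of divergence is purely stylistic: where you invoke the $\frac{1}{Q}$-fairly-binding property of the $\mathbb{F}_Q$ scheme (via \cite[Proposition~3.15]{fehr2016composition} and the transference argument of \Cref{prop:SpecSound-HC}), the paper bounds $\Pr[A_{u,v}-A'_{u,v}=B_{u,v}]\le\frac{1}{Q}$ directly by the no-signalling argument of \cite{chailloux2017relativistic}, since $A_{u,v},A'_{u,v}$ are $P_2$'s outputs while $B_{u,v}$ is $P_1$'s input; the two arguments are interchangeable here, and your version has the modest advantage of making explicit that the Chailloux--Leverrier protocol is a special case of the fairly-binding construction in \Cref{protocol:general-hc}.
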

\begin{proof}
    The protocol is a relativistic $\Sigma$-protocol based on \Cref{def:rel-sigma-protocol}. 

     \medskip\noindent \textsf{Special soundness.} Fix any $rand$, $com$, and let $(rand,com, ch, resp)$ and $(rand,com,ch',resp')$ be two accepting conversations for a given graph $G$ with $ch \neq ch'$. Without loss of generality, let $ch=0$ and $ch'=1$. Here, $rand$ corresponds to a matrix $B$ and $com$ corresponds to the output $Y$. For $ch=0$, the response $resp$ is a permutation $\Pi$ and the matrix $A$. Since it is an accepting conversation, we have 
    $$\forall (i,j), A_{i,j} = Y_{i,j} - B_{i,j} * (M_{\Pi(G)})_{i,j}.$$ For $ch'=1$, the response $resp'$ is a cycle $C'$ on the permuted graph $\Pi(G)$ and corresponding $A'_{u,v}$ for all edges $(u,v) \in C'$. Since this is also an accepting conversation, the openings satisfy $$\forall(u,v)\in C', A'_{u,v} = Y_{u,v} - B_{u,v}.$$      

    We first note that $C'$ is a Hamiltonian cycle of $\Pi(G)$ iff $\Pi^{-1}(C')$ is a Hamiltonian cycle of the graph $G$, since $\Pi$ is a permutation matrix. As such let us consider an extractor that extracts witness $$w :=K_{0}(G,rand,com,ch,resp,ch',resp') := \Pi^{-1}(C').$$The probability that the extractor is successful is 
    \begin{align*}
        \Pr[(w,x)\in R|rand,com,acc]=&\Pr[C'\in HC(\Pi(G))|rand,com,acc]\\
        =&1-\Pr[C'\not\in HC(\Pi(G))|rand,com,acc],
    \end{align*}
    where $acc$ represents that both conversations are accepted, $HC(\Pi(G))$ represents the set of Hamiltonian cycles in graph $\Pi(G)$, and the probability is evaluated over the choice of $ch$, $ch'$, $resp$ and $resp'$ for accepting conversations. If the cycle $C'$ provided is not a Hamiltonian cycle, there exists an edge $(u,v)\in C'$ such that $\left(M_{\Pi(G)}\right)_{u,v}=0$. Conditioned on the acceptance of both conversations, we have that $$A_{u,v} = Y_{u,v} \text{ and } A'_{u,v} = Y_{u,v} - B_{u,v},$$which further implies $A_{u,v}-A'_{u,v} = B_{u,v}$. Since $A_{u,v}$ and $A'_{u,v}$ are $P_2$'s output while $B_{u,v}$ is the input of $P_1$, the no-signalling constraint ensures that the probability of this event occurring is upper bounded by $\frac{1}{Q}$, the probability of $P_2$ guessing $B_{u,v}$~\cite{chailloux2017relativistic}. As such,
    \begin{align*}
        \Pr[(w,x)\in R|rand,com,acc]\geq&1-\frac{1}{p_{acc}}\Pr[A_{u,v}-A'_{u,v} = B_{u,v}|rand,com]\\
        \geq&1-\frac{1}{Qp_{acc}},
    \end{align*}
    which gives a special soundness of $\frac{1}{Qp_{acc}}$. Further assuming that the winning probability $\Pr[V\,accepts]>\frac{1}{c}+\xi$ for some negligible $\xi$,  the special soundness parameter is negligible, $\delta_{SS}<\frac{1}{p(\eta)}$ (using methods from the proof of \Cref{prop:SpecSound-HC}).
    
    \medskip\noindent \textsf{Extractability.} Since $\langle (P_1,P_2),V\rangle$ has negligible special soundness when $\Pr[V\,accepts]>\frac{1}{c}+\xi$, and $\# C_{\eta x} = 2$, by \Cref{thm:sig-QPoK}, we have that $\langle (P_1,P_2),V\rangle$ is extractable with knowledge error $\frac{1}{2}+\xi$. 

    \medskip\noindent \textsf{Zero-knowledge.} It has been proven in~\cite{chailloux2017relativistic} that this protocol is perfectly zero-knowledge against quantum adversaries.
\end{proof}

\subsection{Subset Sum~\cite{crepeau2023zero}}

\begin{protocol}{Relativistic $\Sigma$-protocol for Subset Sum\cite{crepeau2023zero} \label{protocol:subset-sum}} 
\vspace{0.2mm}
$P_1$ and $P_2$ pre-agree on a solution $v$ of the given subset problem $(s,k) \in \mathbb{F}^n_Q \times \mathbb{F}_Q $, where $s$ is the input set and $k$ is the subset sum target. They also agree beforehand on random vectors $c_0,c_1 \in \mathbb{F}^n_Q$ and $z \in \mathbb{F}^n_2$.
\begin{enumerate}
    \item \text{[}Commitment to $s*z$ and $s*\bar{z}$.\text{]} $V$ sends a random value $a \in \mathbb{F}_Q$. $P_1$ replies with $w_0=a \cdot(s * z)+c_0$ and $w_1=a \cdot(s * \bar{z})+c_1$.
    \item \text{[}Challenge phase.\text{]} $V$ sends $ch \in\{0,1\}$ to $P_2$.
    \begin{itemize}
        \item If $ch=0$, $P_2$ sends back $z, c_0, c_1$. 
        \item If $ch=1$, $P_2$ sends back the binary vector $x=v \oplus z$ and the value $c'=\sum_{i=1}^n\left(c_{x_i}\right)_i$.
    \end{itemize}
    \item \text{[}Check phase.\text{]}
    \begin{itemize}
        \item If $ch=0$, $V$ checks that $w_0=a \cdot(s * z)+c_0$ and $w_1=a \cdot(s * \bar{z})+c_1$. 
        \item If $ch=1$, $V$ checks that $\sum_{i=1}^n\left(w_{x_i}\right)_i=a k+c'$. 
    \end{itemize}
\end{enumerate}
\end{protocol}

\begin{theorem}
    The above proof system $\langle (P_1,P_2),V\rangle$ is a zero-knowledge QPoK for Subset-Sum with knowledge error $\frac{1}{2}+\xi$ for some negligible $\xi$. 
\end{theorem}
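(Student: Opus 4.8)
The plan is to follow exactly the four-step template of the Hamiltonian-cycle theorem proved just above. First I would observe that Protocol~\ref{protocol:subset-sum} is a relativistic $\Sigma$-protocol in the sense of \Cref{def:rel-sigma-protocol}: here $rand = a$ is drawn uniformly from $\F_Q$, $com = (w_0,w_1)$, the challenge space is $C_{\eta x} = \{0,1\}$ so $c = 2$, and the check phase is a deterministic efficient computation on $(rand, com, ch, resp)$. Second, I would establish $\delta_{SS}$-special soundness with negligible $\delta_{SS}$ (the only non-routine step). Third, with negligible $\delta_{SS}$ and $c = 2$ in hand, \Cref{thm:sig-QPoK} immediately gives extractability with knowledge error $\tfrac12 + \xi$. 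Fourth, perfect zero-knowledge against quantum verifiers is inherited verbatim from \cite{crepeau2023zero}.

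For the special-soundness step, fix any $rand = a$ and $com = (w_0, w_1)$ and suppose we are given two accepting transcripts, one with $ch = 0$ producing $resp = (z, c_0, c_1)$ and one with $ch' = 1$ producing $resp' = (x, c')$; I would define the special extractor to output $v := x \oplus z$. Acceptance of the $ch=0$ branch forces, component-wise, $(w_0)_i = a\, s_i z_i + (c_0)_i$ and $(w_1)_i = a\, s_i \bar z_i + (c_1)_i$; splitting on whether $x_i = 0$ or $x_i = 1$ and using $v_i = x_i \oplus z_i$ one checks that in both cases $(w_{x_i})_i = a\, s_i v_i + (c_{x_i})_i$, hence $\sum_{i=1}^{n} (w_{x_i})_i = a\,(s\cdot v) + \sum_{i=1}^{n} (c_{x_i})_i$ with $s\cdot v := \sum_i s_i v_i$. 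Acceptance of the $ch'=1$ branch says $\sum_{i=1}^{n}(w_{x_i})_i = a k + c'$, so combining the two transcripts,
\[
a\,(s\cdot v - k) \;=\; c' - \sum_{i=1}^{n} (c_{x_i})_i .
\]
If $v$ is not a valid subset-sum solution then $s\cdot v - k \neq 0$ in the field $\F_Q$ and is therefore invertible, so this equation pins down $a$ uniquely as a function of $z, c_0, c_1, x, c'$ --- all of which are outputs of $P_2$, who is spacelike separated from $P_1$ and hence has no information about the uniformly chosen $a$. As in the Hamiltonian-cycle case and in \cite{chailloux2017relativistic}, the no-signalling constraint then bounds the probability of this event by $1/Q$; conditioning on the joint acceptance event of probability $p_{acc}$ gives $\Pr_{resp,resp'}[(x,v)\in R \mid rand, com, acc] \ge 1 - \tfrac{1}{Q\, p_{acc}}$, i.e.\ $\delta_{SS} \le \tfrac{1}{Q p_{acc}}$. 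Using \Cref{thm:cha_thm1} to lower bound $p_{acc}$ by $\tfrac{1}{64\abs{\textsf{RA}_{max}}}(\Pr[V\,accepts] - \tfrac12)^3$ with $\abs{\textsf{RA}_{max}} \le 2^n$ independent of $\eta$, the same computation as in \Cref{prop:SpecSound-HC} then shows $\delta_{SS} = \negl(\eta)$ whenever $\Pr[V\,accepts]$ exceeds $\tfrac12$ by a non-negligible amount, which is precisely the regime relevant for \Cref{thm:sig-QPoK}.

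The main obstacle is the bookkeeping in the special-soundness step: one must verify the identity $(w_{x_i})_i = a\, s_i v_i + (c_{x_i})_i$ carefully through the case split on $x_i$, keeping track of which quantities are read off which transcript --- in particular that $\sum_i (c_{x_i})_i$ combines the randomness $c_0, c_1$ revealed in the $ch=0$ transcript with the index vector $x$ revealed in the $ch'=1$ transcript --- and then argue rigorously that the decisive combination on the right-hand side of the displayed equation is a function of $P_2$'s outputs alone, so that the $1/Q$ no-signalling/guessing bound genuinely applies despite conditioning on $com$; this is exactly the delicacy already handled in \cite{chailloux2017relativistic} and in the Hamiltonian-cycle proof above. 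Everything else --- fitting \Cref{def:rel-sigma-protocol}, the reduction via \Cref{thm:sig-QPoK} once special soundness is available, and the zero-knowledge claim from \cite{crepeau2023zero} --- is routine.
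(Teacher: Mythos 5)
Your proposal is correct and follows essentially the same route as the paper's own proof: you derive the same key identity $a(s\cdot v - k) = c' - \sum_i (c_{x_i})_i$ for the candidate witness $v = x\oplus z$, invoke the same no-signalling bound of $1/Q$ on $P_2$ guessing $a$, and then appeal to the consecutive-measurement theorem and \Cref{thm:sig-QPoK} with $c=2$. If anything you are slightly more explicit than the paper in conditioning on $p_{acc}$ and spelling out the $\delta_{SS}\le 1/(Qp_{acc})$ bound (the paper compresses this to ``Similarly as described before''), but the argument is the same.
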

\begin{proof}
    The protocol is a relativistic $\Sigma$-protocol based on \Cref{def:rel-sigma-protocol}. 

    \medskip\noindent \textsf{Special soundness.} Let $(com, ch, resp)$ and $(com,ch',resp')$ be two accepting conversations for $x$ with $ch \neq ch'$. Without loss of generality, $ch=0$ and $ch'=1$. Then $resp$ contains $z,c_0,c_1$. Since it is an accepting conversation, we have 
    \begin{align*}
        w_0 &= a \cdot s*z +c_0 \text{ and } w_1 = a\cdot s *\bar{z} + c_1\\
        &\Longrightarrow w_b = a (\bar{b} \cdot s * z + b \cdot s * \bar{z}) + c_b.
    \end{align*} $resp'$ contains a binary vector $x$ and the value $c'$. Since this is also an accepting conversation, the openings satisfy $$\sum_{i=1}^n (w_{x_i})_i = ak + c'.$$ Combining these two equations, we get 
    \begin{align*}
    ak+c'=\sum_{i=1}^n\left(w_{x_i}\right)_i=\sum_{i=1}^n a \cdot\left(\overline{x_i} s_i z_i+x_i s_i \overline{z_i}\right)+\left(c_{x_i}\right)_i \\
    \Longrightarrow a\left(\sum_{i=1}^n\left(\bar{x_i} s_i z_i+x_i s_i \bar{z_i}\right)-k\right)=c'-\sum_{i=1}^n\left(c_{x_i}\right)_i .
    \end{align*}
    We now consider a candidate solution $w := x \oplus z$, where $x$ comes from $resp$ and $z$ comes from $resp'$. Then the corresponding sum is $w* s$,
    \begin{align*}
        \sum_{i=1}^n w_is_i = \sum_{i=1}^n (x \oplus z)_i s_i = \sum_{i=1}^n (\bar{x_i}z_i + x_i \bar{z_i})s_i =  \sum_{i=1}^n\left(\bar{x_i} s_i z_i+x_i s_i \bar{z_i}\right)
    \end{align*}
    Assume the subset does not sum to $k$, i.e. $s \not \in \text{Subset-Sum}$, then we have the above summation $\neq k$. This implies that 
    \begin{align*}
        a=\frac{c'-\sum_{i=1}^n\left(c_{x_i}\right)_i}{\left(\sum_{i=1}^n\left(\bar{x_i} s_i z_i+x_i s_i \bar{z_i}\right)-k\right)}.
    \end{align*}
    However, the left-hand value $a$ is uniformly randomly selected by the $V$ and sent to $P_1$ while the right hand values $c_0,c_1,c',x,z$ are all determined by $P_2$. Therefore by no-signalling, this equation holds with probability $\frac{1}{Q}$. Then with probability $1-\frac{1}{Q}$, we would have that $w * s$ sum to the target value $k$, i.e. $w * s = k$. And so $w$ is indeed a solution, i.e. $(s,w) \in \text{Subset-Sum}$.

    \medskip\noindent \textsf{Extractibility.} Similarly as described before.
\end{proof}

\section{PoK for the Graph $3$-Coloring Protocol}\label{sec:3col-proofs} 

\subsection{Classical PoK for $2$-Party Protocol \ref{prot:3_prot_party}} \label{app:CPoK-2-party-3COL}

We prove the following lemma, given in the maintext as \Cref{lma:kappa} on the knowledge error of the described extractor. 
\begin{lemma}  \label{lma:kappa1}
    The maximum probability with which any dishonest provers without knowledge of a witness can pass the check phase is $1-\frac{1}{3|H|}$.
\end{lemma}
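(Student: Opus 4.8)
The plan is to adapt the classical soundness analysis of Protocol~\ref{prot:3_prot_party} from \cite{crepeau2019practical}, sharpening its accounting so that the explicit form of $\mathcal{D}_G$ in \eqref{eq:wdt_dist}--\eqref{eq:evt_dist} with $\epsilon=\tfrac13$ yields the constant $\tfrac{1}{3|H|}$. Since the verifier's acceptance probability is affine in the (possibly randomized) prover strategy, it suffices to treat deterministic strategies; equivalently, I would prove the contrapositive in the form actually used in \Cref{thm:clas_3col_ext}: if a deterministic pair $(P_1^*,P_2^*)$ is accepted with probability strictly greater than $1-\tfrac{1}{3|H|}$, then the extractor $K^{P^*}$ (Protocol~3) outputs a legal $3$-colouring of $G$. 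Fix such a strategy and write $\phi_1(e,b,\cdot),\phi_2(e,b,\cdot)$ for the deterministic answers of $P_1^*,P_2^*$ on a question $(e,b)\in H\times\mathbb{F}_2$; since $K^{P^*}$ replays every question to both provers, it learns all of these values.

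The heart of the argument is a dichotomy driven by the two tests. For a vertex $v$ and bit $b$, every well-definition configuration inspecting $v$ is a pair of questions $(e,b),(e',b)$ with $v\in e\cap e'$, each of which (by \eqref{eq:wdt_dist}, including the diagonal $e=e'$) has probability at least $\tfrac{1-\epsilon}{4|H|\,|\operatorname{Edges}(v)|}$ and is rejected whenever $\phi_1(e,b,v)\neq\phi_2(e',b,v)$. Writing $a^v_c$ and $b^v_c$ for the number of edges $e\ni v$ with $\phi_1(e,b,v)=c$, respectively $\phi_2(e,b,v)=c$, the rejection weight collected around $(v,b)$ is at least $\tfrac{1-\epsilon}{4|H|\,|\operatorname{Edges}(v)|}\bigl(|\operatorname{Edges}(v)|^2-\sum_{c\in\mathbb{F}_3}a^v_c b^v_c\bigr)$. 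If for some $(v,b)$ the answers around $v$ fail to be dominated by a single colour -- precisely the case in which $K^{P^*}$ cannot reliably pin down a label $\ell(v,b)$ -- then $\sum_c a^v_c b^v_c$ falls below $|\operatorname{Edges}(v)|^2$ by a constant fraction and (up to the routine treatment of very-low-degree vertices) this weight already reaches $\tfrac{1}{3|H|}$ when $\epsilon=\tfrac13$, contradicting the acceptance hypothesis. Hence each $(v,b)$ admits a clear plurality value $\ell(v,b)$, $K^{P^*}$ reads off a consistent labelling, and this induces a well-defined candidate colouring $c(v):=\ell(v,0)+\ell(v,1)\bmod 3$.

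It remains to see that $c$ is a proper colouring. If it is, then on every edge the edge-verification test evaluated on the plurality answers is consistent with $c$, so $K^{P^*}$ marks all vertices with no conflict and outputs $c$ -- done. Otherwise there is a monochromatic edge $e^\star=\{i,j\}$ with $c(i)=c(j)$; on its two edge-verification configurations ($(e^\star,0)$ to $P_1^*$ with $(e^\star,1)$ to $P_2^*$, and vice versa), answering with the plurality values forces the verifier to compute $a_i+a'_i=c(i)=c(j)=a_j+a'_j$ and reject, and these two configurations together carry probability $\mathcal{D}_G(e^\star,0,e^\star,1)+\mathcal{D}_G(e^\star,1,e^\star,0)\geq 2\cdot\tfrac{\epsilon}{2|H|}=\tfrac{1}{3|H|}$ by \eqref{eq:evt_dist}. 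To avoid rejection on one of the two orderings the provers must deviate from a plurality answer on one of the four relevant slots, but each such deviation, by the bookkeeping of the previous paragraph, re-creates rejection weight on the well-definition configurations that touch that slot; balancing the weight saved on the edge test against the weight incurred on those well-definition configurations shows the total rejection weight is still at least $\tfrac{1}{3|H|}$, again contradicting the hypothesis. This gives that acceptance probability exceeding $1-\tfrac{1}{3|H|}$ forces successful extraction, which is the stated bound.

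The step I expect to be the main obstacle is exactly this last exchange argument: establishing, uniformly over all strategies (not merely the perfectly consistent one, which yields the clean $\tfrac{1}{3|H|}$ immediately), that a prover cannot trade away edge-verification rejection weight on a monochromatic edge for a strictly smaller amount of well-definition rejection weight on the neighbouring configurations. This is where the neighbour weights $1/|\operatorname{Edges}(v)|$ in $\mathcal{D}_G$ and the choice $\epsilon=\tfrac13$ enter in full, and where the analysis is tightened relative to \cite{crepeau2019practical}; the parallel treatment of low-degree vertices and the verification that the extractor's marking step never introduces a spurious conflict are the remaining pieces of bookkeeping.
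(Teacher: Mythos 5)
Your plan diverges from the paper's proof in a genuine and interesting way, but it also leaves a step open that the paper's proof side-steps by making a (not fully justified) simplifying assumption — and you are candid about exactly where your version is unfinished.

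\textbf{What the paper does.} After the same reduction to deterministic strategies, the paper asserts that the provers can be taken to "agree on some table of assignments of vertices," and then splits into two clean cases: a vertex with an undefined colour is caught by the well-definition test, a monochromatic edge by the edge-verification test, each with weight at least $\tfrac{1}{3|H|}$ under $\mathcal{D}_G$ with $\epsilon=\tfrac13$. It does \emph{not} address why a general deterministic pair $(\phi_1,\phi_2): H\times\mathbb{F}_2\to\mathbb{F}_3^2$, whose answer for vertex $v$ can depend on which edge $e\ni v$ is asked, reduces to a per-vertex table; this is precisely the subtlety your plurality step is designed to handle.

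\textbf{Where your plan differs, and where it stalls.} Your route works directly with arbitrary $\phi_1,\phi_2$, extracts a plurality labelling $\ell(v,b)$ whenever the well-definition weight is small, and then invokes the edge test on the induced colouring. The obstacle you flag is real, and with the weights at hand it is quantitatively delicate: each edge-verification configuration on $e^\star$ carries weight only $\tfrac{\epsilon}{2|H|}=\tfrac{1}{6|H|}$, while a single deviation from plurality on one slot $\phi_1(e^\star,b,i)$ incurs well-definition weight roughly $\tau\cdot\tfrac{1-\epsilon}{4|H|}=\tfrac{\tau}{6|H|}$, where $\tau$ is the plurality fraction among $P_2^*$'s answers around $(i,b)$. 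With $\epsilon=\tfrac13$ this is a break-even trade unless $\tau=1$, so the naive "plurality means more than half" threshold does not immediately close the argument; the accounting has to pool the well-definition weight already charged for the non-plurality answers with the weight incurred by the deviation, and do so uniformly over all $(v,b)$. You have not supplied that pooling argument, and that is the actual content of the lemma. So: the approach is a genuinely more careful decomposition than the paper's (and exposes a gap the paper glosses over), but the key exchange inequality is still a conjecture in your write-up, not a proof.
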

\begin{proof}
We first argue that we only need to consider provers with deterministic strategy. 
Before the protocol starts, the provers will agree on some strategy $f^{R_1}_1,f^{R_2}_2$ with associated randomness $R=(R_1,R_2)$. Therefore the winning probability (probability of passing the check phase) would be 
\begin{align*}
    P_{\text{win}} = \max_{f^{R_1}_1,f^{R_2}_2} P_{\text{win}}^{f_R} 
\end{align*}

Define $V\left(f_1^{R_1},f_2^{R_2}\right)$ be an indicator function indicating whether the check phase passes $V\left(f_1^{R_1},f_2^{R_2}\right)=1$ if it passes). The winning probability with a fixed strategy can be written as
\begin{align*}
    P_{\text{win}}^{f_R} &:= \sum_{e,e'}P_{e,e'}\sum_{b,b'}P_{b,b'}\sum_{R_1,R_2} P_{R_1,R_2}V\left(f_1^{R_1},f_2^{R_2}|e,e',b,b'\right)\\
    &=\sum_{R_1,R_2}P_{R_1,R_2}\sum_{e,e'}P_{e,e'}\sum_{b,b'}P_{b,b'}V\left(f_1^{R_1},f_2^{R_2}|e,e',b,b'\right)\\
    &:= \sum_{R_1,R_2}P_{R_1,R_2} P_{\text{win}|R_1,R_2}^{f_R}\\
    &\leq \max_{R_1,R_2}\left\{ P_{\text{win}|R_1,R_2}^{f_R} \right\}
\end{align*}

It follows that we only need to consider some fixed deterministic strategy $f = \left(f_1,f_2\right)$ for provers, i.e. 
\begin{align*}
    P_{\text{win}} &\leq \max_{f^{R_1}_1,f^{R_2}_2,R_1,R_2} P_{\text{win}|R_1,R_2}^{f_R} \\
    &= \max_{f_1,f_2} P^f_{\text{win}}.
\end{align*}
    Consider an adversary where $P^*_1,P^*_2$ agree on some table of assignments of vertices which is not a witness, i.e. not a valid 3-coloring. This means there exists either a vertex $i$ with undefined color or an edge $(i,j)\in H$ with vertices that are assigned the same color.

    Let us consider the two conditions for the check phase to pass:
    \begin{itemize}
        \item Well-definition test is chosen and passed. According to the distribution of $\mathcal{D}_G$ in Equation \eqref{eq:wdt_dist}, $V$ chooses an edge $(i,j) \in H$ with probability $\frac{2}{3|H|}$, $b \in \mathbb{F}_2$ uniformly at random, and chooses the second edge $(i',j')$ uniformly at random from $\text{ngbr}(i)$ with probability $\frac{1}{2}$ and from $\text{ngbr}(j)$ with probability $\frac{1}{2}$. It also sets $b' = b$. Therefore, with probability $\frac{1-1/3}{|H|}= \frac{2}{3|H|}$, $(i,j)$ containing bad vertex $i$ is chosen. There exists $\frac{1}{2}$ probability that $i$ is chosen from $\{i,j\}$ to be the common vertex. Therefore $V$ accepts with probability at most $1-\frac{1}{3|H|}$.
        \item Edge verification test is chosen and passed. In this case, with probability $\frac{1}{3|H|}$, $V$ chooses an edge $(i,j)$, and sets $(i',j')=(i,j)$, $b' \neq b$. The probability of the bad edge $(i,j)$ being chosen is $\frac{1}{|H|}$ and therefore $V$ accepts with probability at most $1-\frac{1}{3|H|}$. 
    \end{itemize}
    Then it follows that the maximum probability with which the dishonest $P$ can pass the check phase is $1- \frac{1}{3|H|}.$
\end{proof}

\subsection{Proofs Related to the QPoK for the $3$-Prover Protocol \ref{prot:3col}}
We restate and give a proof of \Cref{lem:dist_q_c} from the main-text.
\begin{lemma}
\label{lem:dist_q_c1}
The absolute difference between the two probability distributions is bounded:  
\begin{align} \label{eq:dist_c_q1}
|\Pr\big[\langle (P^*_1,P^*_2,P^*_3)(G,\rho_{S_1S_2S_3}),V\rangle = 1\big] -\Pr_{f_1,f_2}\big[\langle(\tP_1(G,f_1),\tP_2(G,f_2)),V\rangle = 1\big]| \leq \delta,
\end{align}
where $\delta := 16|H|\sqrt{1 - \Pr\big[\langle (P^*_1,P^*_2,P^*_3)(G,\rho_{S_1S_2S_3}), V\rangle = 1\big]}$.
\end{lemma}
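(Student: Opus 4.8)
The plan is to adapt the argument of Kempe et al.'s Lemma~18 \cite{kempe2011entangled} to Protocol~\ref{prot:3col}, tracking constants carefully to reach $16\abs{H}$. By Lemma~\ref{lem:symmetric} the $3$-prover game is symmetric, so I would first reduce to malicious provers that share a symmetric state on $\rho_{S_1S_2S_3}$ and all apply the same projective measurement family $\{W_q^o\}_o$ (Naimark) indexed by the question $q=(e,b)\in H\times\F_2$; the classical provers $\tP_1,\tP_2$ that Protocol~\ref{prot:3_ext} builds then merely replay a table of outcomes obtained by measuring their halves of $\rho_{P_1P_2}=\Tr_{P_3}[\rho_{S_1S_2S_3}]$ with $\{W_q^o\}$, for each of the at most $2\abs{H}$ questions the extractor asks them (two values of $b$ per edge). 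Note that the claimed bound is vacuous unless $p^q_{acc}>1-(16\abs{H})^{-2}$, so I would work in that high-acceptance regime throughout.

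The first ingredient is monogamy via the consistency test. Writing $\pi_2(q):=\sum_o\Tr[(W_q^o\otimes W_q^o)\rho_{P_1P_2}]$ for the probability two provers return equal answers to a shared question $q$, the consistency test of Protocol~\ref{prot:3col} accepts exactly when $P_3$'s answer equals $P_1$'s (or $P_2$'s) on a question it shares with them, and by symmetry passing that test certifies $\pi_2(q)$. Since this test is selected with a constant probability $w_C$ under $\mathcal{D}_G$, a high $3$-prover value forces its average pass probability, hence $\tfrac{1}{2\abs{H}}\sum_q\pi_2(q)$, to be at least $1-(1-p^q_{acc})/w_C$; this yields the key estimate $\sum_q\bigl(1-\pi_2(q)\bigr)=O\bigl(\abs{H}(1-p^q_{acc})\bigr)$ with the implied constant read off from the exact form of $\mathcal{D}_G$. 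The same reasoning bounds the gap between $p^q_{acc}$ and the value $p^{\mathrm{sim}}_{acc}$ of the $2$-prover game of Protocol~\ref{prot:3_prot_party} played by measuring $\rho_{P_1P_2}$ \emph{freshly} for the single question pair actually asked, because the verification and well-definition tests of Protocol~\ref{prot:3col} coincide with the two tests of Protocol~\ref{prot:3_prot_party} and depend only on the two provers' questions and answers: the only discrepancy is whether $P_3$'s consistency check is additionally imposed, which fails with probability at most $1-p^q_{acc}$ in the high-acceptance regime.

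The second ingredient is gentle measurement. Lemma~\ref{lem:meas_dist} gives $\lVert\rho_{P_1P_2}-\sum_o W_q^o\rho_{P_1P_2}W_q^o\rVert_1\le 6\sqrt{1-\pi_2(q)}$, so the outcome the extractor records for any single query differs in distribution from the fresh one by at most the total disturbance caused by the measurements performed earlier in the table. Writing the ``record the whole table'' channel as a composition $\mathcal{E}=\mathcal{E}_{q_N}\!\circ\cdots\circ\mathcal{E}_{q_1}$ and telescoping, $\mathcal{E}(\rho)-\rho=\sum_{j}(\mathcal{E}_{q_N}\!\circ\cdots\circ\mathcal{E}_{q_{j+1}})\bigl[\mathcal{E}_{q_j}(\rho)-\rho\bigr]$, each summand is a contraction applied to $\mathcal{E}_{q_j}(\rho)-\rho$, so the disturbances \emph{add} rather than compound, and the total over both provers is at most $\sum_q 12\sqrt{1-\pi_2(q)}\le 12\sqrt{2\abs{H}}\bigl(\sum_q(1-\pi_2(q))\bigr)^{1/2}$ by Cauchy--Schwarz. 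Feeding in the monogamy estimate of the previous step bounds this, and hence $\abs{p^{\mathrm{sim}}_{acc}-p^c_{acc}}$, by a multiple of $\abs{H}\sqrt{1-p^q_{acc}}$; combining with $\abs{p^q_{acc}-p^{\mathrm{sim}}_{acc}}\le 1-p^q_{acc}$ and inserting the constants coming from $\mathcal{D}_G$ gives $\abs{p^q_{acc}-p^c_{acc}}\le 16\abs{H}\sqrt{1-p^q_{acc}}$.

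The main obstacle I expect is the constant bookkeeping concentrated in the last two paragraphs: one must pin down precisely how $\mathcal{D}_G$ distributes the consistency test over the $\abs{H}$ edges and the two values of $b$ in order to convert $1-p^q_{acc}$ into a per-question bound on $1-\pi_2(q)$ with a small enough constant, and one must be careful that the telescoped table-recording disturbances genuinely add rather than compound. A secondary subtlety is symmetrizing the prover state so that Lemma~\ref{lem:meas_dist} applies uniformly to the measurements of $\tP_1$ and $\tP_2$ on the common reduced state $\rho_{P_1P_2}$.
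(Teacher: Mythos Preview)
Your overall strategy---symmetrize, certify per-question agreement $\pi_2(q)$ via the consistency test, apply a gentle-measurement bound per question, and telescope the disturbances---is exactly the paper's route (and Kempe et al.'s Lemma~18). However, two concrete choices in your outline will not deliver the constant $16|H|$.

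First, invoking Lemma~\ref{lem:meas_dist} with its constant $6$ is too weak. The paper instead proves a dedicated bound (its Claim~1) that $\lVert(\mathcal{W}_q\otimes\id)(\rho_{S_1S_2})-\rho_{S_1S_2}\rVert_1\le 4\sqrt{1-\pi_1(q)}$, obtained by writing both $\rho_{S_1S_2}$ and its post-measurement version as partial traces over $S_3$ of states that differ only in whether the $W_q^a$-measurement on $S_3$ is correlated with the one on $S_1$, and then applying the gentle-measurement lemma once on $S_3$. Even if you swapped in the paper's remaining steps verbatim, the constant $6$ would yield $24|H|$, not $16|H|$.

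Second, your passage through an \emph{unweighted} sum $\sum_q(1-\pi_2(q))$ followed by Cauchy--Schwarz is both lossy and requires converting a ``$w_C$-fraction'' consistency test into a uniform-over-$q$ bound---but $\mathcal{D}_G$ is not uniform over $H\times\F_2$ (cf.\ Equations~\eqref{eq:wdt_dist}--\eqref{eq:evt_dist}), so that conversion is where the constant bookkeeping you worry about actually fails. The paper sidesteps this entirely: it keeps the $\mathcal{D}_G$-weights through the telescoping, orders the questions so that $\sum_{i'>i}\mathcal{D}_G(q_{i'})\le (N-i)\,\mathcal{D}_G(q_i)$, and then applies Jensen's inequality (concavity of $t\mapsto\sqrt{1-t}$) to obtain $8N\sum_q\mathcal{D}_G(q)\sqrt{1-\pi_1(q)}\le 8N\sqrt{1-\pi_1}$, where $\pi_1$ is the $\mathcal{D}_G$-weighted consistency value and is shown to dominate $p^q_{acc}$. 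With $N=2|H|$ and the constant $4$ from Claim~1 this gives exactly $16|H|\sqrt{1-p^q_{acc}}$.
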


\begin{proof}[Proof of \Cref{lem:dist_q_c}]
The expression $\Pr\limits_{f_1,f_2}\big[\langle(\tP_1(G,f_1),\tP_2(G,f_2)),V\rangle = 1\big]$ denotes the average probability that the two classical provers $\tP_1,\tP_2$ without any quantum device convinces the verifier using the random strategies $f_1,f_2$ that they have a coloring assignment of the graph $G = (\Vt, H)$. According to Protocol \ref{prot:3_ext}, each of the sets $f_1, f_2$ consist of answers to the $2|H|$ possible questions. We index every question for $f_1$ with $q_1, \ldots q_N$, and index the question for $f_2$ as $q'_1, \ldots , q'_N$, where $N =2|H|$, and $q_i := (e_i,b_i) \in H \times \F_2$, $q'_i := (e_i,b'_i) \in H\times \F_2$ for all $1\leq i \leq N$. We denote the answer corresponding to question $q_i =(e_i:=(u_i,v_i),b_i)$ ($q'_i=(e_i:=(u_i,v_i),b'_i)$) as $a_i = (l_{b_i}^{u_i},l_{b_i}^{v_i})$ ($a'_i = (l_{b_i'}^{u_i},l_{b_i'}^{v_i})$). We denote the joint distribution of the random variables $f_1,f_2$ as $D(f_1;f_2) := D(a_1,\ldots , a_N;a'_1, \ldots , a'_N)$. Therefore, we get the following expression for $\Pr\limits_{f_1,f_2}\big[\langle (\tP_1(G,f_1),\tP_2(G,f_2)),V\rangle = 1\big]$.

\begin{align}\nonumber
&\Pr\limits_{f_1,f_2}\big[\langle (\tP_1(G,f_1),\tP_2(G,f_2)),V\rangle = 1\big] \\ \nonumber
=& \sum_{f_1,f_2} D(f_1,w_{c_{2}}) \sum_{q,q' \in H\times \F_2} \mathcal{D}_G(q,q') \Pr\big[\langle (\tP_1(G,f_1),\tP_2(G,f_2)),V\rangle = 1|f_1,f_2\big]\\ \nonumber
= &\sum_{q,q' \in H\times \F_2} \mathcal{D}_G(q,q') \sum_{f_1,f_2} D(f_1,w_{c_{2}}) \Pr\big[\langle (\tP_1(G,f_1),\tP_2(G,f_2)),V\rangle = 1|f_1,f_2\big]\\ \label{eq:dist_avg}
= &\sum_{q,q' \in H\times \F_2} \mathcal{D}_G(q,q') \sum_{\substack{f_1\setminus \{a\},\\ f_2\setminus \{a'\}}} \sum_{a,a'} D(f_1,w_{c_{2}}) \Pr\big[a,a'|f_1,f_2\big] \V(a,a',q,q').
\end{align}

We get the second equality just by interchanging the summation over $f_1,f_2$ with the summation over $q,q'$. For the third equality, first we assume that in $f_1$ ($f_2$) the answer for $q$ ($q'$) is $a$ ($a'$), and divide the sum over $f_1,f_2$ into two parts $f_1\setminus \{a\}, f_2\setminus \{a'\}$ and $a,a'$. Secondly, we define Boolean function $Ver(a,a',q,q')$ as follows.
\begin{equation}
\label{eq:ver_3col}
Ver(a,a',q,q') := \begin{cases}
0 ~~\text{If Verification Test or Well-Definition Test fails,} \\
1 ~~\text{Otherwise}.
\end{cases}
\end{equation}
We get the third equality in Equation \eqref{eq:dist_avg} by rewriting $\Pr\big[\langle (\tP_1(G,f_1),\tP_2(G,f_2)),V\rangle = 1|f_1,f_2\big]$ as $$\Pr\big[a,a'|f_1,f_2\big] \V(a,a',q,q').$$

Suppose, for a fixed $q,q'$, $p_c(a,a'|q,q')$ denotes the marginal distribution of $a,a'$. Therefore,
\begin{equation}
\label{eq:pclass}
p_c(a,a'|q,q') := \sum_{\substack{f_1 \setminus \{a\} \\ f_2\setminus \{a'\}}} D(f_1;f_2).
\end{equation}
Note that, using the expression in Equation \eqref{eq:pclass} we get the following expression of Equation \eqref{eq:dist_avg}.
\begin{align}
\label{eq:pclass_dist} \nonumber
&\Pr\limits_{f_1,f_2}\big[\langle(\tP_1(G,f_1),\tP_2(G,f_2)),V\rangle = 1\big] = \\
&\sum_{q,q' \in H\times \F_2}  \mathcal{D}_G(q,q') \sum_{a,a'} p_c(a,a'|q,q') \V(a,a',q,q').
\end{align}

Suppose at the beginning of the Canonical Extractor Protocol \ref{prot:3_ext}, the provers share a tripartite state $\rho_{S_{1}S_{2}S_{3}}:= |\Psi\rangle_{S_{1}S_{2}S_{3}}\langle \Psi|$, where $|\Psi\rangle_{S_{1}S_{2}S_{3}}$ is a pure state. From \Cref{lem:symmetric}, we get that Protocol \ref{prot:3col} is a symmetric protocol. Therefore without any loss of generality, we assume that the strategies of all the provers use same set of strategies $W :=\{W_{q}\}_{q \in H\times \F_2}$. Moreover, we also assume that for a fixed $e,b$ the strategy $W_{q}$ is a collection of projectors, i.e., $W_{q}:= \left\{W_{q}^{a}\right \}_{a \in \F_3\times \F_3}$, and $(W_{q}^{a})^2 = W_{q}^{a}$, $W_{q}^{a} \geq 0$, $\sum_{a \in \F_3\times \F_3}W_{q}^{a} = \id$. Therefore, we get the following expression of $\Pr\big[\langle (P^*_1,P^*_2,P^*_3)(G,\rho_{S_1S_2S_3}), V\rangle = 1\big]$.
\begin{align}\nonumber
\Pr\big[&\langle (P^*_1,P^*_2,P^*_3)(G,\rho_{S_1S_2S_3}), V\rangle = 1\big] = \\ \label{eq:pq_acc}
&\sum_{q,q',q''} \sum_{a,a',a''}p_{quant}(a,a',a''|q,q',q'')\V'(a,a',a'',q,q',q''),
\end{align}
where $\V'(a,a',a'',q,q',q'')$ is defined in Equation \eqref{eq:ver_3_party}, and we define $p_{quant}(a,a',a''|q,q',q'')$ in Equation \eqref{eq:pq}.
\begin{equation}
\label{eq:ver_3_party}
\V'(a,a',a'',q,q',q'') := \begin{cases}
0 ~~~\text{If any of the tests Fails in Protocol \ref{prot:3col},}\\
1 ~~~\text{Otherwise.}
\end{cases}
\end{equation}

\begin{equation}
\label{eq:pq}
p_{quant}(a,a',a''|q,q',q'') := \Tr[(W_q^a \otimes W_{q'}^{a'} \otimes W_{q''}^{a''}) \rho_{S_1S_2S_3}].
\end{equation}
From Equation \eqref{eq:pq}, by taking the marginal distribution on $a,a'$ we get,

\begin{equation}
\label{eq:aa}
p_{quant}(a,a'|q,q',q'') = \Tr[(W_q^a \otimes W_{q'}^{a'} \otimes \id) \rho_{S_1S_2S_3}].
\end{equation}
Note that due to non-signalling constraint, for any two $q'',q_1''$ the following holds.
\begin{equation}
p_{quant}(a,a'|q,q',q'') = p_{quant}(a,a'|q,q',q_1'').
\end{equation}
Therefore from now on we denote $p_{quant}(a,a'|q,q',q'')$ as $p_{quant}(a,a'|q,q')$. Following Equation \eqref{eq:aa} we get,

\begin{equation}
\label{eq:aaqq}
p_{quant}(a,a'|q,q') = \Tr[(W_q^a \otimes W_{q'}^{a'} \otimes \id) \rho_{S_1S_2S_3}].
\end{equation}
Using the probability distribution $p_{quant}(a,a'|q,q')$, we define the following probability distribution.
\begin{equation}
\pi := \sum_{q,q'}\mathcal{D}_G(q,q') \sum_{a,a'}\V(a,a',q,q')p_{quant}(a,a'|q,q'),
\end{equation}
where the $\V$ function follows definition of Equation \eqref{eq:ver_3col}. Note that, in $\V'$ we perform more checks on the answers from $P^*_3$, and $p_{quant}(a,a'|q,q')$ is a marginal distribution. Therefore we get,

\begin{equation}
\label{eq:pi_low}
\pi \geq \Pr\big[\langle (P^*_1,P^*_2,P^*_3)(G,\rho_{S_1S_2S_3}), V\rangle = 1\big].
\end{equation}
Now, we consider another distribution $\pi_1$, that we define as follows.

\begin{equation}
\label{eq:pi1}
\pi_1 := \sum_{q}\mathcal{D}_G(q) \pi_1(q),
\end{equation}
where $\mathcal{D}_G(q)$ is the marginal of the distribution $\mathcal{D}_G(q,q')$, and 
\begin{equation}
\label{eq:pi1q}
\pi_1(q):=\sum_{a,a}p_{quant}(a,a|q,q).
\end{equation}
Note that, in $\pi_1$, we are only interested in checking the consistency test. Therefore, 
\begin{equation}
\label{eq:pi1_low}
\pi_1 \geq \pi \geq \Pr\big[\langle (P^*_1,P^*_2,P^*_3)(G,\rho_{S_1S_2S_3}),V\rangle = 1\big].
\end{equation}
In the rest of the proof, we aim to get an upper bound on the following quantity. 

\begin{align}\nonumber
|\pi - &\Pr_{f_1,f_2}\big[\langle (\tP_1(G,f_1),\tP_2(G,f_2)),V\rangle = 1\big]| \\ \label{eq:upp_des1}
&\leq \sum_{q,q'}\mathcal{D}_G(q,q') \hspace{-0.3in}\sum_{\substack {a,a':\\\V(a,a',q,q')=1}}|p_{quant}(a,a'|q,q') - p_c(a,a'|q,q')|.
\end{align}
We get the upper bound from the triangle inequality. 

From the Extractor Protocol \ref{prot:3_ext}, we get the distribution $D(f_1;f_2)$ in terms of the measurement strategies $W$, and $|\Psi\rangle_{S_1S_2S_3}$,
\begin{equation}\label{eq:dist_w}
D(f_1;f_2) = \Bigg\lVert \left(\prod_{i=1}^N W^{a_i}_{q_i}\right) \otimes \left(\prod_{i=1}^N W^{a'_i}_{q'_i}\right) \otimes \id |\Psi\rangle_{S_1S_2S_3}\Bigg\rVert_1^2.
\end{equation}

We define another super operator $\mathcal{W}_q$ corresponding to the projector $W_q$ as $\mathcal{W}_q(\sigma) := \sum_{a}W_q^a \sigma W_q^a$, where $\sigma$ is any quantum state. In the next claim, we show that, if the malicious entangled provers convince the verifier in the ZKP Protocol \ref{prot:3col} with high probability, then in our Extractor Protocol \ref{prot:3_ext}, the post-measurement state shared between $P_1^*,P_2^*$ after $P_1^*$ applied the projector for producing a response corresponding to a question $q$ is very close to the original state. Suppose, at the beginning of the protocol the shared state between $P_1^*$, and $P_2^*$ is $\rho_{S_1S_2}:= \Tr_{S_3}[\rho_{S_1S_2S_3}]$. This claim is same as Claim $20$ in \cite{kempe2011entangled}, with a tighter bound. 

\begin{claim}[1]
\label{cl:1}
$\lVert \mathcal{W}_q\otimes \id(\rho_{S_1S_2}) - \rho_{S_1S_2}\rVert_1 \leq 4\sqrt{1 - \pi_1(q)}.$
\end{claim}

\begin{proof}
From definition of $\mathcal{W}_q\otimes \id(\rho_{S_1S_2})$ we get
\begin{align}
\left(\mathcal{W}_q\otimes \id\right)(\rho_{S_1S_2}) &= \sum_{a}\left(W_q^a \otimes \id\right) \rho_{S_1S_2}\left(W_q^a \otimes \id\right) \\
&= \Tr_{S_{3}}\left[\sum_{a}(W_q^a \otimes \id \otimes \id) \rho_{S_1S_2S_3}(W_q^a \otimes \id \otimes \id)\right].
\end{align}
The second equality follows from the definition of the partial trace. Similarly, we can write $\rho_{S_1S_2}$ as,
\begin{equation}
\rho_{S_1S_2} = \Tr_{S_3}\left[\id \otimes \id \otimes \mathcal{W}_q(\rho_{S_1S_2S_3})\right].
\end{equation}
Therefore we get,
\begin{align}\label{eq:temP-bound_0}
&\lVert \mathcal{W}_q\otimes \id(\rho_{S_1S_2}) - \rho_{S_1S_2}\rVert_1 \\
&= \Big\lVert \Tr_{S_{3}}\left[\sum_{a}(W_q^a \otimes \id \otimes \id) \rho_{S_1S_2S_3}(W_q^a \otimes \id \otimes \id)\right] - \Tr_{S_3}\left[\id \otimes \id \otimes \mathcal{W}_q(\rho_{S_1S_2S_3})\right]\Big\rVert_1\\\nonumber
&~~\text{Due to the monotonocity of trace distance we get}\\
&\leq\Big\lVert \sum_{a}(W_q^a \otimes \id \otimes \id) \rho_{S_1S_2S_3}(W_q^a \otimes \id \otimes \id) - \id \otimes \id \otimes \mathcal{W}_q(\rho_{S_1S_2S_3})\Big\rVert_1 \\\nonumber
&\text{Due to the triangle inequality we get,}\\\nonumber
&\leq \Big\lVert (\mathcal{W}_q \otimes \id) (\rho_{S_1S_2S_3})(\mathcal{W}_q \otimes \id) - \sum_{a}(W_q^a \otimes \id \otimes W_q^a) \rho_{S_1S_2S_3}(W_q^a \otimes \id \otimes W_q^a)\Big\rVert_1\\
&+ \Big\lVert \id \otimes \id \otimes \mathcal{W}_q(\rho_{S_1S_2S_3}) - \sum_{a}(W_q^a \otimes \id \otimes W_q^a) \rho_{S_1S_2S_3}(W_q^a \otimes \id \otimes W_q^a)\Big\rVert_1\\\nonumber
&\text{By the symmetry of the state }\rho_{S_1S_2S_3} \text{ we get}\\\label{eq:temp_bound}
& = 2\Big\lVert \id \otimes \id \otimes \mathcal{W}_q(\rho_{S_1S_2S_3}) - \sum_{a}(W_q^a \otimes \id \otimes W_q^a) \rho_{S_1S_2S_3}(W_q^a \otimes \id \otimes W_q^a)\Big\rVert_1.
\end{align}
\end{proof}
From the gentle measurement lemma \cite{winter1999coding}, we get that for any quantum state $\rho$, and any measurement operator (in our case projector) $X\geq 0$,
\begin{equation}
\lVert \sigma - \sqrt{X}\sigma \sqrt{X} \rVert_1 \leq 2\sqrt{1 - \Tr[X\rho]}.
\end{equation}
We apply the gentle measurement lemma in Equation \eqref{eq:temp_bound}, by considering $\sigma := \oplus_{a} (W_q^a \otimes \id \otimes \id) \rho_{S_1S_2S_3}(W_q^a \otimes \id \otimes \id)$, and $X := \oplus_{a} \id \otimes \id \otimes W_q^a$ and get the following upper bound on Equation \eqref{eq:temp_bound}.
\begin{align}
\lVert \mathcal{W}_q\otimes \id(\rho_{S_1S_2}) - \rho_{S_1S_2}\rVert_1 &\leq 4\sqrt{1-\sum_a \Tr[(W_q^a \otimes \id \otimes W_q^a)\rho_{S_1S_2S_3}]}\\
&\leq 4\sqrt{1 - \pi_1(q)}~~~\text{This follows from Equation \eqref{eq:pi1q}}.
\end{align}
This concludes the proof of the claim. \qed
In Extractor \ref{prot:3_ext}, suppose $\rho_{S_{1}S_{2}}(i,j)$ denotes the marginal post-measurement state after performing rewinding at Step $1$ for $i-1$ time and performing rewinding at Step $2$ for $j-1$ time. More precisely, it is defined as follows.

\begin{equation}
\label{rhoij}
\rho_{S_1S_2}(i,j) := \left(\mathcal{W}_{i-1}~\circ\ldots \circ~\mathcal{W}_1\right) \otimes \left(\mathcal{W}_{i-1}~\circ\ldots \circ~\mathcal{W}_1\right) (\rho_{S_1S_2}) 
\end{equation}

By combining the expression of $p_c(a,a'|q,q')$ from Equation \eqref{eq:pclass}, and the expression of $D(f_1;f_2)$ from Equation \eqref{eq:dist_w}, we get,

\begin{align}
p_c(a_i,a'_j|q_i,q'_j) &= \sum_{\substack{f_1 \setminus \{a_i\} \\ f_2\setminus \{a'_i\}}} \Bigg\lVert \left(\prod_{i=1}^N W^{a_i}_{q_i}\right) \otimes \left(\prod_{i=1}^N W^{a'_i}_{q'_i}\right) \otimes \id |\Psi\rangle_{S_1S_2S_3}\Bigg\rVert_1^2\\
&= \Tr[(W_q^a \otimes W_{q'}^{a'}) \rho_{S_1S_2}(i,j)].
\end{align}

Note that, $p_{quant}(a,a'|q,q') = \Tr[(W_q^a \otimes W_{q'}^{a'}) \rho_{S_1S_2}]$. Therefore for a fixed $qi,q'_j$, 

\begin{align}
\label{eq:diff_up}
\sum_{\substack{a_i,a'_j:\\\V(a_i,a'_j,q_i,q'_j)}}|p_{quant}(a_i,a'_j|q_i,q'_j) - p_{c}(a_i,a'_j|q_i,q'_j)| \leq \lVert\rho_{S_1S_2}(i,j) - \rho_{S_1S_2}\rVert_1.
\end{align}In the extractor, suppose in Step $1$, after the $i$-th step of the rewinding the shared state is $\rho_{S_{1}S_{2}S_{3}}(i)$. 

In the following lemma, we give an upper bound on the trace distance between $\rho_{S_{1}S_{2}S_{3}}(i,j)$ and $\rho_{S_{1}S_{2}S_{3}}$. 

\begin{lemma}
\label{lem:up_td}
$\lVert\rho_{S_{1}S_{2}}(i,j) - \rho_{S_{1}S_{2}}\rVert_1 \leq 4\left[\sum_{i' = 1}^{i-1}\sqrt{1 - \pi_1(q_{i'})} + \sum_{j' = 1}^{j-1}\sqrt{1 - \pi_1(q_{j'})}\right]$.
\end{lemma}

\begin{proof}
We follow the steps of the proof of Claim $21$ in \cite{kempe2011entangled} for proving this lemma. We prove this claim by the principle of mathematical induction. Note that, by definition of the state $\rho_{S_1S_2}(i,j)$, we get $\rho_{S_1S_2}(1,1) = \rho_{S_1S_2}$. Suppose the statement is true for some $(i=\tilde i,j=\tilde j)$. Therefore,
\begin{equation}
\label{eq:ind_hyp}
\lVert\rho_{S_{1}S_{2}}(\tilde i,\tilde j) - \rho_{S_{1}S_{2}}\rVert_1 \leq 4\left[\sum_{i' = 1}^{\tilde i-1}\sqrt{1 - \pi_1(q_{i'})} + \sum_{j' = 1}^{\tilde j-1}\sqrt{1 - \pi_1(q_{j'})}\right]
\end{equation}
So for $i = \tilde i +1$, by triangle inequality, we get
\begin{align}
&\lVert\rho_{S_{1}S_{2}}(\tilde i +1,\tilde j) - \rho_{S_{1}S_{2}}\rVert_1 \\ \nonumber 
&\leq \lVert\rho_{S_{1}S_{2}}(\tilde i +1,\tilde j) - (\mathcal{W}_{q_{\tilde i}}\otimes \id )(\rho_{S_{1}S_{2}})\rVert_1 + \lVert\rho_{S_{1}S_{2}} - (\mathcal{W}_{q_{\tilde i}}\otimes \id )(\rho_{S_{1}S_{2}})\rVert_1.
\end{align}
From the definition of $\rho_{S_1S_2}(i,j)$ and from Claim $(1)$:
\begin{align}
\|\cdot \|_1 \leq \lVert (\mathcal{W}_{q_{\tilde i}}\otimes \id) (\rho_{S_{1}S_{2}}(\tilde i,\tilde j) )- (\mathcal{W}_{q_{\tilde i}}\otimes \id )(\rho_{S_{1}S_{2}})\rVert_1 + 4\sqrt{1 - \pi_1(q_{\tilde i})}.
\end{align}
Then from the monotonocity of the trace distance
\begin{align}
\|\cdot \|_1  \leq \lVert \rho_{S_{1}S_{2}}(\tilde i,\tilde j) )- \rho_{S_{1}S_{2}}\rVert_1 + 4\sqrt{1 - \pi_1(q_{\tilde i})}
\end{align}
Finally from the induction hypothesis, i.e., Equation \eqref{eq:ind_hyp}, we get 
\begin{align}
\|\cdot \|_1  \leq 4\left[\sum_{i' = 1}^{\tilde i}\sqrt{1 - \pi_1(q_{i'})} + \sum_{j' = 1}^{\tilde j-1}\sqrt{1 - \pi_1(q_{j'})}\right].
\end{align}
Therefore, by the principle of mathematical induction we show that the statement of this lemma is true for $i,j \geq 1$. This concludes the proof this lemma.
\qed
\end{proof}
We now revisit the expression of $\left|\pi- \Pr_{f_1,f_2}\big[\langle (\tP_1(G,f_1),\tP_2(G,f_2)),V\rangle = 1\big]\right|$ from \Cref{eq:upp_des1}:  
\begin{align*}
    &\left|\pi- \Pr_{f_1,f_2}\big[\langle (\tP_1(G,f_1),\tP_2(G,f_2)),V\rangle = 1\big]\right|\\
    &\leq \sum_{q,q'}\mathcal{D}_G(q,q') \sum_{\substack {a,a':\\\V(a,a',q,q')=1}}\left|p_{quant}(a,a'|q,q') - p_c(a,a'|q,q')\right|.
\end{align*}
Substituting $|p_{quant}(a,a'|q,q') - p_c(a,a'|q,q')|$ from \Cref{eq:diff_up}, we get
\begin{align*}
    |\cdot | \leq \sum_{q,q'}\mathcal{D}_G(q,q') \lVert\rho_{S_1S_2}(i,j) - \rho_{S_1S_2}\rVert_1
\end{align*}
Then substituting $\lVert\rho_{S_1S_2}(i,j) - \rho_{S_1S_2}\rVert_1$ from \Cref{lem:up_td}, we get
\begin{align*}
|\cdot|&\leq 4\sum_{i=1}^N \sum_{j=1}^N \mathcal{D}_G(q_i,q'_j)\left[\sum_{i' = 1}^{i-1}\sqrt{1 - \pi_1(q_{i'})} + \sum_{j' = 1}^{j-1}\sqrt{1 - \pi_1(q_{j'})}\right]\\
&\leq 8 \sum_{i=1}^N \sum_{j=1}^N\sum_{i'=1}^{i-1} \mathcal{D}_G(q_i,q'_{j})\left[\sqrt{1 - \pi_1(q_{i'})}\right]\\
&\leq 8\sum_{i=1}^N \sum_{i'=1}^{i-1}\mathcal{D}_G(q_i)\left[\sqrt{1 - \pi_1(q_{i'})}\right]\\
& \leq 8\sum_{i=1}^N \sqrt{1 - \pi_1(q_i)} \sum_{i'=i+1}^N \mathcal{D}_G
(q_{i'}).
\end{align*}
Assuming $\mathcal{D}_G(q_1) \geq ... \geq \mathcal{D}_G(q_N)$, we get 
\begin{align*}
|\cdot|&\leq 8\sum_{i=1}^N (N - i)\mathcal{D}_G(q_i) \sqrt{1 - \pi_1(q_i)}\\
&\leq 8N\sum_{i=1}^N \mathcal{D}_G(q_{i})\left[\sqrt{1 - \pi_1(q_{i})}\right].
\end{align*}
From \Cref{eq:pi1}, and since the function $\sqrt{1 - \pi_1(q_{i'})}$ is concave, we have 
\begin{align*}
|\cdot | &\leq 8N\sqrt{1-\pi_1}\\
&\leq 8N\sqrt{1-\pi}\\
& \leq 8N\sqrt{1 - \Pr\big[\langle (P^*_1,P^*_2,P^*_3)(G,\rho_{S_1S_2S_3}), V\rangle = 1\big]}.
\end{align*}
where the second last ineqaulity comes from the fact that $\pi_1 \geq \pi$ and the last inequality comes from \Cref{eq:pi1_low}.
Therefore, the above series of inequality gives us the following relation.
\begin{align}\label{eq:pi_final2}
&\left|\pi-\Pr_{f_1,f_2}\big[\langle (\tP_1(G,f_1),\tP_2(G,f_2)),V\rangle = 1\big]\right|\leq 8N\sqrt{1 - \Pr\big[\langle (P^*_1,P^*_2,P^*_3)(G,\rho_{S_1S_2S_3}), V\rangle = 1\big]}. 
\end{align}

Note that in our case, $\pi \geq \Pr_{f_1,f_2}\big[\langle (\tP_1(G,f_1),\tP_2(G,f_2)),V\rangle = 1\big]$. Therefore, we can write Equation \eqref{eq:pi_final2} as
\begin{align}
\nonumber &\pi- \Pr_{f_1,f_2}\big[\langle (\tP_1(G,f_1),\tP_2(G,f_2)),V\rangle = 1\big] \leq 8N\sqrt{1 - \Pr\big[\langle (P^*_1,P^*_2,P^*_3)(G,\rho_{S_1S_2S_3}), V\rangle = 1\big]}.
\end{align}
This gives 
\begin{align}
&\Pr\big[\langle (P^*_1,P^*_2,P^*_3)(G,\rho_{S_1S_2S_3}), V\rangle = 1\big] - \Pr_{f_1,f_2}\big[\langle (\tP_1(G,f_1),\tP_2(G,f_2)),V\rangle = 1\big] \\ \label{eq:final_dist} 
&\leq 8N\sqrt{1 - \Pr\big[\langle (P^*_1,P^*_2,P^*_3)(G,\rho_{S_1S_2S_3}),V\rangle = 1\big]}\\
&= 16|H|\sqrt{1 - \Pr\big[\langle (P^*_1,P^*_2,P^*_3)(G,\rho_{S_1S_2S_3}), V\rangle = 1\big]},
\end{align}
where the first inequality comes from \Cref{eq:pi1_low}, and the last equality holds as $N= 2|H|$. This concludes the proof. \qed
\end{proof}

We restate and prove \Cref{lem:lower_bound} from the main-text here. 
\begin{lemma}
\label{lem:lower-bound1}
$\delta \geq \frac{16}{2401|H|^4}$, for $|H| \geq 3$.
\end{lemma}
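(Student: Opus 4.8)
The plan is to unwind the definition $\delta := 16|H|\sqrt{1 - p^q_{acc}}$ and show that the quantity under the square root cannot be too small, i.e., that $p^q_{acc}$ is bounded away from $1$. The key observation is that the statement is vacuously handled unless $p^q_{acc} > \kappa_q$, and by the choice of $\kappa_q := 1 - \left(\frac{1}{7|H|}\right)^4$ made in the proof of \Cref{thm:rzkp3col_form}, the only regime we need to worry about is $\kappa_q < p^q_{acc} \le 1$. So first I would recall that \emph{no quantum strategy achieves $p^q_{acc} = 1$}: a perfect quantum strategy for Protocol \ref{prot:3col} would, via the consistency test against $P_3^*$ and the verification/well-definition tests, force the marginal state to be (essentially) invariant under all the measurement superoperators $\mathcal{W}_q$ simultaneously and to encode a genuine $3$-coloring — but then the entangled provers could be derandomized into a perfect classical $2$-prover strategy, contradicting \Cref{lma:kappa} (no classical prover without a witness wins with probability $> 1 - \frac{1}{3|H|}$), since the provers do not possess a valid coloring in the soundness setting. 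Hence $1 - p^q_{acc} \ge \frac{1}{3|H|}$ in the soundness regime, or more crudely $1 - p^q_{acc}$ is at least some inverse polynomial in $|H|$.

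Next, I would carry out the arithmetic. We have $\delta = 16|H|\sqrt{1 - p^q_{acc}}$, and it suffices to lower-bound $\sqrt{1 - p^q_{acc}}$. Using the bound $1 - p^q_{acc} \ge \frac{1}{3|H|}$ gives $\delta \ge 16|H| \cdot \frac{1}{\sqrt{3|H|}} = \frac{16\sqrt{|H|}}{\sqrt 3}$, which is already far larger than $\frac{16}{2401|H|^4}$ for all $|H| \ge 3$; alternatively, even in the narrower regime $p^q_{acc} > \kappa_q = 1 - (7|H|)^{-4}$ one can still extract a nonzero lower bound: there $1 - p^q_{acc}$ could in principle be as small as just above $0$, so the clean route is to argue that $p^q_{acc}$ \emph{cannot} exceed $\kappa_q$ when the provers lack a witness — again appealing to \Cref{lem:dist_q_c} together with \Cref{lma:kappa}, since $p^q_{acc} \le p^c_{acc} + \delta \le (1 - \tfrac{1}{3|H|}) + 16|H|\sqrt{1-p^q_{acc}}$, and combining this with $p^q_{acc} > \kappa_q$ yields a self-consistent inequality forcing $1 - p^q_{acc} \ge c/|H|^8$ for an explicit constant $c$, from which $\delta = 16|H|\sqrt{1-p^q_{acc}} \ge 16|H|\cdot \sqrt{c}/|H|^4 = 16\sqrt{c}/|H|^3 \ge \frac{16}{2401|H|^4}$ once $|H|\ge 3$ and $c$ is chosen appropriately (note $2401 = 7^4$).

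The main obstacle I anticipate is pinning down \emph{exactly} which regime the lemma is meant to apply in and making the self-referential inequality $1 - p^q_{acc} \ge \big(\text{something involving } \sqrt{1-p^q_{acc}}\big)$ rigorous — i.e., solving $t \ge a - b\sqrt{t}$ for $t = 1 - p^q_{acc}$ to get the quantitative lower bound on $t$, and then checking that the resulting constant beats $7^{-4}$ for all $|H| \ge 3$. This is elementary algebra (a quadratic in $\sqrt t$), but it must be done carefully so that the constant $\frac{16}{2401|H|^4}$ genuinely falls out rather than something slightly worse; the generous slack between $|H|^{-3}$ (or $|H|^{-4}$ from the $\kappa_q$ regime) and the claimed $|H|^{-4}$ suggests the bound is deliberately loose, so the calculation should close comfortably.
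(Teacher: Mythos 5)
Your proposal circles the correct mechanism but leads with a claim that is simply false and must be dropped. You assert early on that $1 - p^q_{acc} \ge \frac{1}{3|H|}$, justified by appeal to \Cref{lma:kappa}; but that lemma bounds the \emph{classical} acceptance probability $p^c_{acc}$, not $p^q_{acc}$, and entangled provers can in general beat the classical soundness bound — that is precisely why the whole monogamy-of-entanglement argument of \Cref{lem:dist_q_c} is needed. Had $1 - p^q_{acc} \ge \frac{1}{3|H|}$ held, $\delta = 16|H|\sqrt{1-p^q_{acc}}$ would exceed $16\sqrt{|H|/3} > 1$, which should have been a red flag. Similarly, the invocation of $\kappa_q = 1 - (7|H|)^{-4}$ and the discussion of which ``regime'' the lemma applies to is a distraction: the lemma is a standalone arithmetic consequence of \Cref{lem:dist_q_c} and \Cref{lma:kappa}, and no assumption of the form $p^q_{acc} > \kappa_q$ is needed or used.

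Your ``alternative'' route is the right one and is essentially what the paper does. From $p^q_{acc} \le p^c_{acc} + \delta \le \bigl(1 - \frac{1}{3|H|}\bigr) + 16|H|\sqrt{1-p^q_{acc}}$, set $\delta' := 1 - p^q_{acc}$ to obtain $\delta' + 16|H|\sqrt{\delta'} \ge \frac{1}{3|H|}$, i.e.\ $\sqrt{\delta'}\bigl(\sqrt{\delta'} + 16|H|\bigr) \ge \frac{1}{3|H|}$. Using $\sqrt{\delta'} \le 1$ gives $\sqrt{\delta'} \ge \frac{1}{3|H|(16|H|+1)}$, and for $|H| \ge 3$ one has $3|H|(16|H|+1) \le 49|H|^2$, hence $\sqrt{\delta'} \ge \frac{1}{49|H|^2}$ and $\delta' \ge \frac{1}{2401|H|^4}$. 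This is a $|H|^{-4}$ rate, not the $|H|^{-8}$ you wrote, though as you note even the weaker exponent would be enough to clear the target $\frac{16}{2401|H|^4}$; the final step $\delta = 16|H|\sqrt{\delta'} \ge 16|H|\cdot\frac{1}{49|H|^2} = \frac{16}{49|H|} \ge \frac{16}{2401|H|^4}$ then finishes it. So the quadratic-in-$\sqrt{\delta'}$ plan is exactly what is needed, but you should excise the false opening bound, drop the $\kappa_q$ framing, and actually carry out the one-line quadratic bound, since that is where the content of the lemma lives.
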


\begin{proof}[Proof of \Cref{lem:lower_bound}]
From \Cref{lma:kappa}, we get the following upper bound.      

\begin{equation}
\label{eq:upp_pclas}
\Pr_{f_1,f_2}\big[\langle (\tP_1(G,f_1),\tP_2(G,f_2)),V\rangle = 1\big]\leq \left(1 - \frac{1}{3|H|}\right).
\end{equation}

From \Cref{lem:dist_q_c}, we get
\begin{align} \label{eq:lem6_rep}
|\Pr\big[&\langle (P^*_1,P^*_2,P^*_3)(G,\rho_{S_1S_2S_3}), V\rangle = 1\big] - \Pr_{f_1,f_2}\big[\langle(\tP_1(G,f_1),\tP_2(G,f_2)),V\rangle = 1\big]| \leq \delta.
\end{align}

Substituting the expression from \Cref{eq:upp_pclas} in \Cref{eq:lem6_rep}, and substituting the value of $\delta$ from \Cref{lem:dist_q_c} we get the following inequality:
\begin{align}
1-p_{quant}^{acc} &+8N\sqrt{1-p_{quant}^{acc}} \geq \frac{1}{3|H|}.
\end{align}
where $\delta':=1-p_{quant}^{acc}=1-\Pr\big[\langle (P^*_1,P^*_2,P^*_3(G,\rho_{S_1S_2S_3})) ,V\rangle = 1\big]$. It follows that
\begin{align}
\sqrt{ \delta'} &\geq \frac{1}{3|H|(8N + \sqrt{\delta'})}\\
&\geq \frac{1}{3|H|(8N + 1)}\\
&= \frac{1}{3|H|(16|H| + 1)}\\
&\geq \frac{1}{49|H|^2},
\end{align}
where the second inequality comes from $\sqrt{1-p_{quant}^{acc}} \leq 1$ and the third equality comes from $N=2|H|$. The last inqueality holds for all graphs with $|H| \geq 3$.
\begin{align}
\delta' \geq \left(\frac{1}{49|H|^2}\right)^2 \Longrightarrow \delta \geq \frac{16}{2401|H|^4},
\end{align}
where the implication is given by definition that $ \delta = 16\tilde \delta'$.
This concludes the proof. \qed

\end{proof}

\section{Appendix: Proofs of New Soundness Error Bounds}
\label{sec:app-soundess}

\subsection{Proof of \Cref{thm:TwoChoiceCoupling}}\label{sec:app-coupling}
\begin{proof}
Let us consider the optimal strategy of the entangled game $G^2$, which can be defined with a pre-shared state $\ket{\psi}$, and projective measurements $\{W_a^x\}_{x,a}$ and $\{W_b^y\}_{b,y}$ used by the two parties respectively.
Note that the state is pure and measurements are projective WLOG since they can be made so with appropriate purification and isometric extension.
We define Bob's reduced state (un-normalised) after Alice's measurement as $\sigma^{xa}=\Tr[W_a^x\dyad{\Psi}]$, and let Bob's response that would be accepted be $V_y(x,a)=\{b:V(a,b,x,y)=1\}$.\\

We can express the winning probability of $G^2$ as
\begin{equation*}
\begin{split}
    \omega^*(G^2)=&\frac{1}{2\abs{X}}\sum_{xya}\sum_{b\in V_y(x,a)}\Tr[(W_a^x\otimes W_b^y)\dyad{\Psi}]\\
    =&\frac{1}{2\abs{X}}\sum_{xya}\sum_{b\in V_y(x,a)}\Tr[W_b^y\sigma^{xa}]\\
    =&\frac{1}{\abs{X}}\sum_{xa}\Tr[\sigma^{xa}]\left\{\frac{1}{2}\sum_{y,b\in V_y(x,a)}\Tr[W_b^y\tilde{\sigma}^{xa}]\right\},
\end{split}
\end{equation*}
where $\tilde{\sigma}^{xa}=\frac{\sigma^{xa}}{\Tr[\sigma^{xa}]}$ is the normalised.
Consider a strategy for the coupling game $G^2_{coup}$ where the same state and measurement are utilised, and Bob performs a second measurement $W_{b'}^{\bar{y}}$ on his state.
As such, the winning probability is bounded
\begin{equation*}
\begin{split}
    \omega^*(G^2_{coup})\geq&\frac{1}{2\abs{X}}\sum_{xya}\sum_{\substack{b\in V_y(x,a)\\ b'\in V_{\bar{y}}(x,a)}}\Tr[(W_a^x\otimes W_{b'}^{\bar{y}}W_b^y)\dyad{\Psi}(W_a^x\otimes W_b^yW_{b'}^{\bar{y}})]\\
    =&\frac{1}{2\abs{X}}\sum_{xya}\sum_{\substack{b\in V_y(x,a)\\ b'\in V_{\bar{y}}(x,a)}}\Tr[W_{b'}^{\bar{y}}W_b^y\sigma^{xa}W_b^y]\\
    =&\frac{1}{\abs{X}}\sum_{xa}\Tr[\sigma^{xa}]\left\{\frac{1}{2}\sum_{\substack{y,b\in V_y(x,a)\\ b'\in V_{\bar{y}}(x,a)}}\Tr[W_{b'}^{\bar{y}}W_b^y\tilde{\sigma}^{xa}W_b^y]\right\}.
\end{split}
\end{equation*}
We note here that $\frac{1}{\abs{X}}\sum_{xa}\Tr[\sigma^{xa}]f(x,a)$ can be treated as an expectation value of $f$ with probability distribution $p_{xa}=\Tr[\sigma^{xa}]$.
The form of $\omega^*(G^2)$ matches $\mathbb{E}_{xa}[F_1]$ while the bound on $\omega^*(G_{coup}^2)$ matches $\mathbb{E}_{xa}[F_2]$ of~\Cref{thm:TwoOperators}.
As such, we can apply the theorem to show
\begin{equation*}
\begin{split}
    \omega^*(G^2_{coup})\geq&\mathbb{E}_{xa}[F_2]\\
    \geq&\mathbb{E}_{xa}\left[\frac{2}{\max_y\abs{V_y(x,a)}}\left(F_1-\frac{1}{2}\right)^2\right]\\
    \geq& \frac{2}{\max_{a,x,y}\abs{V_y(x,a)}}\left(\mathbb{E}_{xa}[F_1]-\frac{1}{2}\right)^2\\
    =&\frac{2}{\abs{S_{max}}}\left(\omega^*(G^2)-\frac{1}{2}\right)^2,
\end{split}
\end{equation*}
where $\abs{S_{max}}:=\max_{a,x,y}\abs{V_y(x,a)}$. The third inequality utilises Jensen's inequality noting that $F_1\geq \frac{1}{2}$.
\end{proof}

\subsection{Proof of \Cref{thm:chai-sound}}
\label{sec:chai-sound}
\begin{proof}
Following the non-signalling argument in~\cite{chailloux2017relativistic}, the winning probability of the coupling game is bounded by $\omega^*(G^2_{coup})\leq\frac{1}{Q}$, based on no signalling.
Moreover, we note that $\abs{S_{max}}=n!$, attributed to the number of possible permutations, with each providing a set containing a single valid response for the challenges.
Combining this with~\Cref{thm:TwoChoiceCoupling}, we arrive at
\begin{equation*}
    \frac{1}{Q}\geq\frac{2}{n!}\left(\omega^*(G^2)-\frac{1}{2}\right)^2,
\end{equation*}
which can be rearranged to give the result.
\end{proof}

\subsection{Proof of \Cref{thm:crep-sound}}
\label{sec:crep-sound}
\begin{proof}
Following the proof in Proposition $4$ in~\cite{crepeau2023zero}, the winning probability of the coupling game is bounded by $\omega^*(G^2_{coup})\leq\frac{1}{Q}$, based on no signalling.
Moreover, we note that $\abs{S_{max}}=2^n$, attributed to the number of possible choices of $z\in\{0,1\}^n$, each providing a single valid set of response for the challenges.
Combining this with~\Cref{thm:TwoChoiceCoupling}, we arrive at
\begin{equation*}
    \frac{1}{Q}\geq\frac{2}{2^n}\left(\omega^*(G^2)-\frac{1}{2}\right)^2,
\end{equation*}
which can be rearranged to give the result.
\end{proof}
\end{document}